\newcommand{\curl}{\operatorname*{curl}}
\newcommand{\sym}[1]{#1_{sym}}
\newcommand{\B}{\mathbf{B}}
\newcommand{\beq}{\begin{equation}}
\newcommand{\eeq}{\end{equation}}
\newcommand{\bu}{\mathbf{b}}
\newcommand{\md}{\mathcal{M}_{dis}}
\newcommand{\ro}{\varrho}
\def\R{\mathbb R}
\def\N{\mathbb N}
\def\C{\mathbb C}
\def\Z{\mathbb Z}
\def\cal{\mathcal}
\def\H{{\cal H}}
\def\de{\delta}
\def\e{\varepsilon}
\def\s{\sigma}
\def\vphi{\varphi}
\def\om{\omega}
\newcommand{\medint}{-\kern -,375cm\int}
\newcommand{\medintinrigo}{-\kern -,315cm\int}
\newcommand{\M}{\mathbb{M}^{2\times2}}
\def\Om{\Omega}
\newcommand{\wto}{\rightharpoonup}
\def\pa{\partial}
\def\Div{{\rm div}}
 \newcommand{\Hc}[1]{\mathbf{H}_\#(\curl; \Om_#1; \mathbb{M}^{2\times2})}
\numberwithin{equation}{section}
\newtheorem{theorem}{Theorem}[section]
\newtheorem{corollary}[theorem]{Corollary}
\newtheorem{lemma}[theorem]{Lemma}
\newtheorem{proposition}[theorem]{Proposition}
\theoremstyle{definition}
\newtheorem{definition}[theorem]{Definition}
\newtheorem{remark}[theorem]{Remark}
\begin{document}

\title[Dislocations in elastic films]{A model for dislocations in epitaxially strained elastic films}
\author{I. Fonseca, N. Fusco, G. Leoni, M. Morini}
\address[I.\ Fonseca]{Department of Mathematical Sciences, Carnegie Mellon University, Pittsburgh, PA, U.S.A.}
\email[I.\ Fonseca]{fonseca@andrew.cmu.edu}
\address[N.\ Fusco]{Dipartimento di Matematica e Applicazioni "R. Caccioppoli",
Universit\`{a} degli Studi di Napoli "Federico II" , Napoli, Italy}
\email[N.\ Fusco]{n.fusco@unina.it}
\address[G. \ Leoni]{Department of Mathematical Sciences, Carnegie Mellon University, Pittsburgh, PA, U.S.A.}
\email[G. \ Leoni]{giovanni@andrew.cmu.edu}
\address[M.\ Morini]{Dipartimento di Matematica,
Universit\`{a} degli Studi di Parma , Parma, Italy}
\email[M.\ Morini]{massimiliano.morini@unipr.it}

\begin{abstract}
A  variational model for epitaxially strained films accounting for the presence of dislocations is considered. Existence, regularity and some qualitative properties of solutions are addressed. 
\end{abstract}

\maketitle

\tableofcontents

\section{Introduction}

The ability to control the morphology of elastically stressed thin films is paramount in the manufacturing of  microelectronics and optical devices. 
Due to the misfit between the film and the substrate lattice constants, the film may undergo a morphological change, known as the Asaro-Grinfeld-Tiller (AGT) instability (see \cite{AsTi}, \cite{Gr0}).   This is a stress relief mechanism, by which the system decreases the elastic energy by allowing non-planar morphologies when a critical thickness is achieved.  Such  threshold effect is usually explained as the result of two competing  forms of energy: the surface energy, which favors flat configurations, and the bulk elastic energy, which in turn is decreased by wavy or corrugated configurations.

An extensive literature is devoted to the modeling and to the numerical analyis of strained epitaxial films; see for instance \cite{GaNi}, \cite{Spencer}, \cite{SpMe},  \cite{ST} and the references therein.  
Several variational models have been proposed to study epitaxial growth, both in  the static case (see \cite{BeGoZw, Bo0, Bo, BC, BoFG,  FFLM, FuMo,GoZw}) as well as in the time-dependent setting (see \cite{FFLM2, FFLM3, Pi}), starting with the free-energy approach of \cite{Grinfeld}.

Experiments indicate that the nucleation of dislocations is a further  mode of strain relief (in addition to the already mentioned profile buckling) for sufficiently thick films (see, for instance,  \cite{EPBSG, GaNi, HMRG, JPBH, TeLe}). Indeed, when a cusp-like morphology is formed, the resulting local stress at a surface valley has a greater energy than that produced by the nucleation of a dislocation.  Once the dislocation is formed, it migrates to the film/substrate interface, and  the film surface relaxes  towards a  planar-like  morphology.

In this paper we propose a mathematical model, which takes into account the formation of misfits dislocations. We start by recalling the variational formulation studied in  \cite{BC} and \cite{FFLM} (see also \cite{BCS} and \cite{CS}) within the context of equilibrium configurations of epitaxially strained films without dislocations. As in those papers we work within the theory of linear elasticity. We consider two-dimensional configurations, corresponding to three-dimensional morphologies with planar symmetry.   The reference configuration of the film is described as
$$
\Omega_h:=\left\{  z=\left(  x,y\right)\in\R^2  :\,0<x<\ell,\,0<y<h\left(
x\right)  \right\}\,,
$$
where the function $h: [0,\ell]  \rightarrow\left[  0,\infty\right)$ represents the free-profile of the film. 
The vector field  $u:\Omega_h\rightarrow\mathbb{R}^{2}$ represents the displacement of the film and 
\[
{E}\left({u}\right)  :=\frac{1}{2}\left(  \nabla{u}+\nabla^T{u}\right)
\]
its strain. The presence of a mismatch between the lattice constants of the film and the substrate  is incorporated in the model by  prescribing a Dirichlet boundary condition of the form $u(x,0)=(e_0x,0)$ at the interface, with $e_0\neq 0$. This corresponds to the case of  a film growing on an infinitely rigid substrate.  

 As customary in the physical literature, we also require  the periodicity conditions $h(0)=h(\ell)$ and $\nabla u(0,y)=\nabla u(\ell,y)$. The energy associated with a dislocation-free configuration $(h,u)$, when $h$ is smooth,  
 is given by
 $$
 \mathcal{G}(h,u):=\int_{\Om_h}\Bigl[\mu|E(u)|^2+\frac{\lambda}{2}({\rm div} u)^2\Bigr]\, dz+ \gamma \H^1(\Gamma_h)\,,
 $$
 where $\mu$ and $\lambda$ are the Lam\'e coefficients of the material, $\gamma$ is the surface tension on the profile of the film, 
 $\Gamma_h$ denots  the graph
 of $h$, and $\H^1$ stands for the one-dimensional Hausdorff measure. 
 
 Equilibrium configurations corresponf to local or global minimizers of $G$ among all admissible configurations, with prescribed volume. Notice that smooth   minimizing sequences may converge to   irregular configurations, with the profile $h$ being a lower semicontinuous function of bounded variation.  In particular, vertical parts and cuts may appear in the (extended) graph of $h$. This requires extending the definition of $G$ to a larger class of possibly irregular reachable configurations, through a relaxation procedure. 
This has been done in \cite{BC} and \cite{FFLM} (see also \cite{BCS} and \cite{CS}), and it leads to the relaxed energy:
 \begin{equation}\label{Frelax}
 \mathcal{G}(h,u)=\int_{\Om_h}\Bigl[\mu|E(u)|^2+\frac{\lambda}{2}({\rm div} u)^2\Bigr]\, dz+ \gamma \H^1(\Gamma_h)
 +2 \gamma\H^1(\Sigma_h)\,,
 \end{equation}
where $\Sigma_h$ is the set of \emph{vertical cuts} defined as
$$
\Sigma_h:=\{(x,y):\, x\in [0,\ell),\, h(x)<y<\min\{h(x-), h(x+)\}\}\,,
$$
 with $h(x\pm)$ denoting the right and left limit at $x$.  Note that the factor 2 appearing in the last term of \eqref{Frelax} is due to the fact that in the approximation procedure vertical cuts result from the collapsing of needle-like smooth profiles into a segment whose length in the limit is  counted twice.

Next we modify $\mathcal{G}$ to account for the presence of isolated misfit dislocations in the film.  The mathematical modeling of dislocations has been studied by several authors;  see for instance \cite{ADGP, ArOr, BFLM, CGO, DGP, FPP, GPPS, HuOr1, HuOr2, MPS, Po}, and the references therein.

Volterra's  dislocations may be viewed as topological point singularities of the field (see \cite{N}). To be precise, given a set of points 
$\{z_1, \dots, z_k\}\subset\Om_h$ and a set of vectors $\{\bu_1, \dots, \bu_k\}\subset\R^2$, a \emph{strain field} $H$ is compatible with a system of dislocations located at $z_1, \dots, z_k$ and having \emph{Burgers vectors} $\bu_1, \dots, \bu_k$ if 
\begin{equation}\label{curl}
\curl H=\sum_{i=1}^k\bu_i \de_{z_i}\,,
\end{equation}
where $\de_z$ denotes the Dirac delta at $z$. Since the elastic continuum model is not valid near the singularities,  some kind of regularization is needed. A standard approach in the engineering literature (see \cite{N})  is to remove a core $B_{r_0}(z_i)$ of radius $r_0>0$ around each dislocation and associate with $H$ the (finite) elastic energy
$$
\int_{\Om_h\setminus \cup_{i=1}^kB_{r_0}(z_i)}\Bigl[\mu|H_{sym}|^2+\frac{\lambda}{2}(\mathrm{tr}(H))^2\Bigr]\, dz\,,
$$ 
where $H_{sym}:=(H+H^T)/2$. The mathematical study of this energy can be found, e.g., in \cite{CeLe, DGP, GLP, MPS}. 

In this paper, following \cite{HMRG}, we consider a variant of this approach, which consists in regularizing the {\em dislocation measure} $\s:=\sum_{i=1}^k\bu_i \de_{z_i}$ through  a convolution procedure. To be precise, we replace \eqref{curl} with the compatibility condition 
\begin{equation}\label{compa}
\curl H=\sigma*\ro_{r_0}\,,
\end{equation}
where $\ro_{r_0}:=(1/r_0^2)\ro(\cdot/r_0)$ is a convolution kernel, with $\ro$ a standard mollifier compactly supported in the unit ball. 
 Here $r_0>0$ is a fixed constant that  may be interpreted as before as the core radius.  Since the set of strain fields $H$ satisfying condition \eqref{compa} and with finite energy, i.e.,
 \begin{equation} \label{finite energy}
 \int_{\Om_h}\Bigl[\mu|H_{sym}|^2+\frac{\lambda}{2}(\mathrm{tr}(H))^2\Bigr]\, dz<+\infty
 \end{equation}
 is non-empty, for any given profile $h$ and any given dislocation measure $\sigma$, the compatible strain field $H$ minimizing the elastic energy \eqref{finite energy} is well defined and satisfies the div-curl system
 \begin{equation}\label{divcurl}
 \left\{
 \begin{array}{lr}
 \curl H=\sigma*\ro_{r_0}\vspace{-5pt} &  \\
 \vspace{-5pt} & \quad \text{in $\Om_h$.}\\
 \mu\, \Div\, H+(\lambda+\mu)\nabla(\mathrm{tr}(H))=0\,
 \end{array}
 \right.
 \end{equation}
 Note that the above system admits an equivalent formulation in terms of the so-called Airy stress function $w$ associated with $H$ through the identity
 $$
 \nabla^2 w=\frac12
\left(
\begin{array}{cc}
(2\mu+\lambda)H_{22}+\lambda H_{11} &- \mu (H_{12}+H_{21})   \\
- \mu (H_{12}+H_{21})  & (2\mu+\lambda)H_{11}+\lambda H_{22}  \\
\end{array}
\right)\,,
 $$
 see \cite[Chapter~12]{Fung}. Indeed,  \eqref{divcurl} can be rewritten as (see \cite{HMRG})
 $$
 \Delta^2w=\curl (\sigma*\ro_{r_0})\qquad\text{in $\Om_h$}\,.
 $$
 
 Adopting the above convolution-based regularization, the total energy associated with a profile $h$, a dislocation measure $\s$, and a strain field $H$, satisfying the compatibility conditions \eqref{compa}, is given by
 \begin{equation}\label{total}
 F(h,\s, H):=\int_{\Om_h}\Bigl[\mu|H_{sym}|^2+\frac{\lambda}{2}(\mathrm{tr}(H))^2\Bigr]\, dz+\gamma\H^1(\Gamma_h)+2\gamma
 \H^1(\Sigma_h)\,.
\end{equation}
In Section \ref{sec:nonnucl} we assume that a finite number $k$ of dislocations, with 
 given Burgers vectors $\mathbf{B}:=\{\bu_1, \dots, \bu_k\}\subset\R^2$, are already present in the film, and we address the problem of finding the optimal configuration, i.e.,  the profile $h$ and the location $z_1, \dots, z_k$ of the $k$ dislocations which  minimize the total energy, under a given volume constraint $|\Om_h|=d$. To be precise, denoting by $ 
 X(e_0; \mathbf{B})$ the set of admissible triples $(h,\sigma,H)$, in Theorem \ref{th:existence} below, we prove 
 \begin{theorem}
 The minimization problem 
 \beq\label{minG1}
 \min\{F(h,\sigma, H):\, (h, \sigma, H)\in X(e_0; \mathbf{B}), \, |\Om_h|=d\}\,.
 \eeq
 admits a solution.
 	 \end{theorem}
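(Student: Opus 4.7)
The plan is to apply the direct method. Let $(h_n,\s_n,H_n)\in X(e_0;\B)$ be a minimizing sequence, with $\s_n=\sum_{i=1}^k \bu_i \de_{z_i^n}$ and $|\Om_{h_n}|=d$. Without loss of generality all $\Om_{h_n}$ are contained in a fixed rectangle $Q=[0,\ell]\times[0,M]$.

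The first step is compactness. The surface terms $\g\H^1(\G_{h_n})+2\g\H^1(\Sigma_{h_n})$ together with the volume constraint yield a uniform bound on $\|h_n\|_{BV([0,\ell])}$, so up to a subsequence $h_n\to h$ in $L^1$ and a.e.\ for some $h\in BV$; we select the lower semicontinuous representative, so that $\Om_h$ is open. The bulk term controls $(H_n)_{sym}$ in $L^2(\Om_{h_n})$; combining this with the Dirichlet datum at $y=0$, the periodicity in $x$ and a Korn-type estimate applied to $H_n$ minus a fixed field sharing its curl, one bounds $H_n$ in $L^2$. Extending by zero to $Q$, up to a further subsequence $H_n\wto H$ in $L^2(Q;\M)$. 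The positions $z_i^n\in\overline Q$ converge to some $z_i$, and $\s_n*\ro_{r_0}\to\s*\ro_{r_0}$ uniformly, with $\s:=\sum_i \bu_i \de_{z_i}$.

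The second step is to verify admissibility of $(h,\s,H)$. Since $h_n\to h$ a.e.\ and $h$ is lower semicontinuous, $\chi_{\Om_{h_n}}\to\chi_{\Om_h}$ in $L^1(Q)$, hence $H$ vanishes outside $\Om_h$. The convolved source $\s_n*\ro_{r_0}$ is uniformly bounded in $L^\infty$, so $H_n$ is bounded in $\mathbf H(\curl;Q)$. Given $\vphi\in C_c^\infty(\Om_h)$, lower semicontinuity of $h$ together with compactness of $\mathrm{supp}\,\vphi$ forces $\mathrm{supp}\,\vphi\subset\Om_{h_n}$ for $n$ large, and passing to the limit in the weak formulation of \eqref{compa} yields $\curl H=\s*\ro_{r_0}$ in $\Om_h$. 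One must also verify that each limit point $z_i$ lies in $\Om_h$ (and, if required by the definition of $X$, at distance at least $r_0$ from $\pa\Om_h$) rather than being absorbed by $\G_h\cup\Sigma_h$; this is the delicate point.

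The third step is lower semicontinuity and the volume constraint. The bulk integrand is a positive definite quadratic form in $H$, hence weakly $L^2$-lower semicontinuous, and the fact that $H$ vanishes outside $\Om_h$ handles the change of domain. The relaxed surface functional $\g\H^1(\G_h)+2\g\H^1(\Sigma_h)$ is lower semicontinuous under $L^1$-convergence of $h_n$ by the results of \cite{BC,FFLM}, and the volume constraint passes to the limit by $L^1$-convergence. Combining these, $F(h,\s,H)\le\liminf_n F(h_n,\s_n,H_n)$, so $(h,\s,H)$ is a minimizer. The main obstacle is Step~2: passing to the limit in the compatibility condition \eqref{compa} across the irregular, varying domains $\Om_{h_n}$ and ruling out that a dislocation center $z_i^n$ collapses onto a forming cusp of $\G_h$ or onto a vertical cut of $\Sigma_h$; this last point may require a priori energy estimates against boundary concentration, or simply a definition of $X(e_0;\B)$ already closed under the natural limit.
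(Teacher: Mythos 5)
Your outline follows the same direct-method skeleton as the paper, but two of its steps contain genuine gaps, and both are repaired in the paper by a single tool you never invoke: after extracting a subsequence, the uniform bound on $\H^1(\Gamma_{h_n})+2\H^1(\Sigma_{h_n})$ yields not only $L^1$-convergence $h_n\to h$ but also Hausdorff convergence of the complements, $\R^2\setminus\Om^\#_{h_n}\to\R^2\setminus\Om^\#_h$ (this is \cite[Proposition~2.2 and Lemma~2.5]{FFLM}). First, your claim that $L^1$ and a.e.\ convergence of $h_n$, together with lower semicontinuity of $h$, force $\mathrm{supp}\,\vphi\subset\Om_{h_n}$ for large $n$ is false: a profile $h_n$ equal to $d/\ell$ except for a downward spike of width $n^{-2}$ centered at a point $x_n$ that visits a dense set converges to the flat profile in $L^1$ and a.e., yet no compact set with nonempty interior is eventually contained in all $\Om_{h_n}$. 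The Hausdorff convergence of complements is exactly what gives this, and it also disposes of what you correctly single out as ``the delicate point'': if $B_{r_0}(z_{i,n})\subset\Om^\#_{h_n}$ and $z_{i,n}\to z_i$, then every point of $\R^2\setminus\Om^\#_h$ is a limit of points of $\R^2\setminus\Om^\#_{h_n}$, each at distance at least $r_0$ from $z_{i,n}$, whence $B_{r_0}(z_i)\subset\Om^\#_h$. No a priori energy estimate against boundary concentration is needed; the admissible class is closed under this convergence, and your proof is incomplete without verifying it.

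Second, your global $L^2(Q)$ bound on $H_n$ is not justified. A Korn-type inequality such as Lemma~\ref{lm:korn-fusco} carries a constant depending on the domain, and the domains $\Om_{h_n}$ are merely subgraphs of $BV$ functions that may develop cusps and vertical cuts along the sequence, so no uniform constant is available; extending by zero to $Q$ and extracting a single weak limit in $L^2(Q;\M)$ therefore does not go through as stated. The paper's substitute is to extend $H_n$ by $\nabla u_0$, $u_0(x,y)=(e_0x,0)$, into the strip $(0,\ell)\times(-1,0]$ (legitimate because of the tangential trace condition), exhaust the limit set $V=\Om_h\cup\big((0,\ell)\times(-1,0]\big)$ by fixed simply connected Lipschitz sets $U_j$ with $\pa U_j\cap\Gamma_h=\emptyset$ (so that $U_j\subset V_n$ for large $n$, again by Hausdorff convergence), apply Lemma~\ref{lm:korn-fusco} on each fixed $U_j$, and diagonalize; lower semicontinuity of the bulk term is then obtained on each $U_j\cap\Om_h$ and one lets $j\to\infty$. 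This localization is not cosmetic: it is what makes the Korn constants uniform in $n$. Your third step (lower semicontinuity of the surface energy via \cite{BC}, convexity of $W$, passage of the volume constraint) is fine.
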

 We then show that the equilibrium profile $h$ obtained above satisfies the same regularity properties proved in \cite{FFLM} (see also \cite{DF, FuMo})  in the dislocation-free case. Namely,   
 \begin{theorem} Let  $(\bar h,\bar \sigma, H_{\bar h,\bar\s})\in  X(e_0;\B)$ be a  minimizer of \eqref{minG1}. Then $\bar h$ has at most finitely many cusp points and vertical cracks, its graph is of class $C^1$ away from this finite set, and of class $C^{1,\alpha}$, $\alpha\in (0,\frac{1}{2})$ away from this finite set and off the substrate.
 \end{theorem}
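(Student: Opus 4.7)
The plan is to adapt the regularity analysis for the dislocation-free model developed in \cite{FFLM} (see also \cite{DF, FuMo}) to the present setting. The central observation is that, once the minimizing dislocation measure $\bar\s=\sum_i\bu_i\de_{z_i}$ is fixed, the forcing $\bar\s*\ro_{r_0}$ in the div--curl system \eqref{divcurl} is a smooth compactly supported function on $\R^2$ whose $C^k$ norms are controlled by $\sum_i|\bu_i|$ and by $r_0$ alone. Equivalently, the Airy stress function $w$ associated with $H_{\bar h,\bar\s}$ satisfies $\Delta^2 w=\curl(\bar\s*\ro_{r_0})$ with smooth right-hand side, so interior regularity of $H_{\bar h,\bar\s}$ coincides with that of the purely elastic model up to lower-order perturbations that do not affect the scaling in the variational comparison estimates.

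To bound the number of cusps and vertical cracks I would run the variational comparison of \cite{FFLM}. Given a candidate singular point $x_0$ of $\bar h$, one constructs a competitor $\tilde h\ge\bar h$ that locally fills the valley around $(x_0,\bar h(x_0))$, adding a region of width of order $r$. The surface part $\g\H^1(\Gamma_{\bar h})+2\g\H^1(\Sigma_{\bar h})$ decreases by an amount proportional to $r$, while the additional bulk elastic cost is controlled by
$$
\int_{B_{Cr}(x_0)\cap\Om_{\bar h}}\Bigl[\mu|H_{sym}|^2+\tfrac{\la}{2}(\mathrm{tr}\,H)^2\Bigr]\,dz,
$$
which is $O(r^2)$ via an extension argument together with a Campanato-type decay estimate for $H_{\bar h,\bar\s}$ at the free boundary. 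Minimality then forces $r\ge r_*>0$, with $r_*$ depending only on $\mu,\la,\g$, on the global energy bound, and on the data $(e_0,\B,r_0)$. A covering argument in $\Om_{\bar h}$ yields the finiteness of singular points, and in particular the absence of any accumulation of cusps or cracks.

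For the $C^1$ regularity away from the resulting finite singular set $S$, and for $C^{1,\a}$ with $\a<1/2$ away from $S$ and off the substrate, I would upgrade the above decay estimate to a H\"older modulus for the outward normal to $\Gamma_{\bar h}$. Combining the Euler--Lagrange condition on $\Gamma_{\bar h}\cap\{y>0\}$,
$$
\g\,\kappa_{\Gamma_{\bar h}}+\mu|H_{sym}|^2+\tfrac{\la}{2}(\mathrm{tr}\,H)^2=\mathrm{const},
$$
with interior elliptic estimates for $H_{\bar h,\bar\s}$ (available up to $\Gamma_{\bar h}$ since $\bar\s*\ro_{r_0}$ is smooth) yields a quasilinear boundary equation for the graph $\bar h$ from which $\nabla\bar h\in C^{0,\a}_{\loc}$ follows by the Campanato iteration of \cite{FFLM}. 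The threshold $1/2$ is dictated by the known corner-singularity analysis of the underlying elliptic system; at points where $\Gamma_{\bar h}$ meets the substrate $\{y=0\}$, the Dirichlet datum $u(x,0)=(e_0x,0)$ produces a corner singularity in the strain that only allows $C^1$ regularity of $\bar h$.

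The main obstacle, absent in \cite{FFLM}, is to ensure that the decay $\int_{B_r\cap\Om_{\bar h}}|H_{sym}|^2\,dz=O(r^2)$ at the free boundary holds with a constant uniform in the positions of the $z_i$, in particular when some dislocation lies within distance $O(r_0)$ of $\Gamma_{\bar h}$. A natural route is to decompose $H_{\bar h,\bar\s}=H^{(1)}+H^{(2)}$, where $H^{(1)}$ is a global strain on $\R^2$ that absorbs the smooth source $\bar\s*\ro_{r_0}$ (with bounds depending only on $\B$ and $r_0$), while $H^{(2)}$ satisfies a homogeneous div--curl problem in $\Om_{\bar h}$ with modified boundary data; the estimates of \cite{FFLM} then apply directly to $H^{(2)}$, while $H^{(1)}$ contributes only controlled lower-order terms. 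Making this splitting effective uniformly in $\bar h$ and in the $z_i$ is the technical crux of the argument.
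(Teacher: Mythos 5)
Your overall strategy is the right one and matches the paper's in several respects: the decomposition $H_{\bar h,\bar\s}=Dv+K$ with an explicit field $K$ carrying the curl (whose entries are $-\sum_i(\bu_i\cdot\mathbf{e}_l)\int_0^y\ro_{r_0}(x-x_i,t-y_i)\,dt$, with bounds depending only on $\B$ and $r_0$) is exactly how the paper reduces to a Lam\'e system with smooth data, and the threshold $\alpha<1/2$ does come from the Grisvard--Nicaise corner analysis (Theorem~\ref{theorem grisvard}, Proposition~\ref{2decays}), which yields the decay $\int_{B_r\cap\Om_{\bar h}}|\nabla v|^2\,dz\le Cr^{2\alpha}$ with $\alpha>1/2$ used in the Campanato iteration. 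So the ``technical crux'' you identify at the end is in fact the easy part: $K$ is explicit and global, and no uniformity issue arises there.

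The genuine gaps are elsewhere. First, every competitor you propose (filling a valley with $\tilde h\ge\bar h$, cutting out a region) changes $|\Om_h|$ and is therefore inadmissible for the constrained problem \eqref{minG}. One must first prove that a constrained minimizer also minimizes a volume-penalized functional, and this is precisely where the presence of dislocations creates the main new obstruction: in the dislocation-free case the excess volume $|\Om_g|>d$ is removed by truncating the profile from above, but here a ball $B_{r_0}(z_j)$ may touch $\Gamma_g$ at the maximal height and cannot be cut through, nor can the dislocation simply be deleted. The paper's Theorem~\ref{sollevamento} and Lemma~\ref{lm:claim} handle this by showing (via a first-variation argument that moves the touching dislocation downward together with the profile) that the contact arcs $\Gamma_g\cap\pa B_{r_0}(z_j)$ must be short when the penalization constant is large, and then by a chain/convexity argument that the profile is forced to be nearly flat, contradicting $\|\bar h-d/\ell\|_{L^2}>0$. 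Your proposal contains no substitute for this step, and without it none of the comparison estimates can be run. Second, in the $C^{1,\alpha}$ step the natural competitor (the secant of $\bar h$ over $(x_0,x_0+r)$) may cross a dislocation ball, which acts as an obstacle from below; the paper replaces the secant $s$ by $\max\{f_j,s\}$ with $f_j$ the upper arc of $\pa B_{r_0}(z_j)$, and must then show that the extra length $\int_{\{f_j>s\}}(\sqrt{1+(f_j')^2}-\sqrt{1+(s')^2})\,dx$ is still $O(r^{2\alpha})$; this is done via the mean value theorem (finding $\bar x$ with $f_j'(\bar x)=s'(\bar x)$) and the Lipschitz continuity of $f_j'$, see \eqref{ineq-tost}. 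Your proposal does not address this obstacle effect at all, and it is not a lower-order perturbation: without it the surface-energy comparison that drives the Campanato iteration fails near any dislocation adjacent to the free boundary.
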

 For a more detailed qualitative description of this regularity result we refer to Theorem \ref{th:regolarita} below. The overall strategy to prove this theorem is the same used in  \cite{FFLM}. However, there are many new technical issues due to the presence of dislocations, which require new ideas. In particular, a major difficulty arises in showing that  the volume constraint can be replaced by a volume penalization. In the dislocation-free case this was based on a straightforward truncation argument, which fails in the present setting because dislocations cannot be removed in this way. Indeed they act as a sort of obstacle when touching the profile, and this is overcome in Theorem~\ref{sollevamento}, where it is shown that a delicate truncation construction is still possible without affecting the dislocations. 

In Theorem \ref{prop:exnuova} we provide analytical support to the experimental evidence that, after nucleation, dislocations lie at the bottom. 
\begin{theorem}\label{th:22000}
	Assume $\B\neq\emptyset$, $d>2r_0 \ell$. Then there exist $\bar e>0$ and $\bar \gamma>0$ such that whenever $|e_0|>\bar e$, $\gamma> \bar \gamma$, and $e_0 (\bu_j\cdot\mathbf{e}_1)>0$ for all $\bu_j\in \mathbf{B}$, then any minimizer $(\bar{h}, \bar{\sigma}, \bar{H})$
	of the problem \eqref{minG1}  has all dislocations lying at the bottom of $\Om_h$, in the sense that the centers $z_i$ are of the form $z_i=(x_i, r_0)$.
\end{theorem}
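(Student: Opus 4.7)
The plan is to argue by contradiction. Assume $(\bar h, \bar\sigma, \bar H)$ is a minimizer of \eqref{minG1} and that some dislocation center $z_i = (x_i, y_i)$ of $\bar\sigma$ satisfies $y_i > r_0$. Admissibility of $(\bar h, \bar\sigma)$ requires $B_{r_0}(z_i) \subset \Om_{\bar h}$, which forces $\bar h(x) \ge y_i + r_0 \ge 2r_0$ for every $x \in [x_i - r_0, x_i + r_0]$. Hence the vertically translated center $z_i' := (x_i, r_0)$ is itself admissible: the ball $B_{r_0}(z_i')$ is the downward translate of $B_{r_0}(z_i)$ and lies in the same vertical strip. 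Let $\sigma'$ be obtained from $\bar\sigma$ by replacing $\bu_i \de_{z_i}$ with $\bu_i \de_{z_i'}$ and let $H'$ be the corresponding minimum-energy compatible strain. The triple $(\bar h, \sigma', H')$ is admissible and has the same profile and therefore the same surface and vertical-cut contributions as $(\bar h, \bar\sigma, \bar H)$, so minimality is contradicted once we prove that the elastic term strictly decreases.

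To estimate $\Delta E := E_{\mathrm{el}}(H') - E_{\mathrm{el}}(\bar H)$, I would use the superposition $\bar H = H^\star + H^{\mathrm{d}}_i$ and $H' = H^\star + H^{\mathrm{d}\prime}_i$, where $H^\star$ is the minimum-energy strain in $\Om_{\bar h}$ carrying the Dirichlet condition $u = (e_0 x, 0)$ on the substrate, the lateral periodicity, zero traction on $\Gamma_{\bar h}$, and the other dislocations $\{(\bu_j, z_j) : j \neq i\}$, while $H^{\mathrm{d}}_i$ and $H^{\mathrm{d}\prime}_i$ carry only the $i$-th dislocation, at $z_i$ and $z_i'$ respectively, with homogeneous Dirichlet and free-traction data. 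A cross-term expansion combined with integration by parts, using the equilibrium equation for $H^\star$ and the curl identity $\curl(H^{\mathrm{d}\prime}_i - H^{\mathrm{d}}_i) = \bu_i (\de_{z_i'} - \de_{z_i}) * \ro_{r_0}$, identifies the cross contribution as the Peach--Koehler work performed by the background stress in transporting the $i$-th dislocation from $z_i$ to $z_i'$. Its leading term for large $|e_0|$ comes from the uniform misfit component of $H^\star$, giving
\[
\Delta E \;=\; -\kappa\, e_0\, (\bu_i \cdot \mathbf{e}_1)(y_i - r_0) \;+\; (y_i - r_0)\, R,
\]
where $\kappa = \kappa(\mu, \lambda) > 0$, and $R$ collects the interaction with the other dislocations, the deviation of $H^\star$ from the flat-film uniform strain, and the self-energy change $E_{\mathrm{el}}(H^{\mathrm{d}\prime}_i) - E_{\mathrm{el}}(H^{\mathrm{d}}_i)$; $R$ is bounded uniformly in $e_0$ provided $\bar h$ is smooth and uniformly bounded below on a neighborhood of $x_i$.

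The sign hypothesis $e_0 (\bu_i \cdot \mathbf{e}_1) > 0$ makes the leading coefficient in the bracket strictly negative and of order $|e_0|$. The role of $\gamma$ is to secure the uniform control on $R$: comparing $(\bar h, \bar\sigma, \bar H)$ with the explicit competitor $(h^\flat, \sigma^\flat, H^\flat)$ where $h^\flat \equiv d/\ell > 2 r_0$ and $\sigma^\flat$ places all dislocations at height $r_0$ (admissible thanks to $d > 2 r_0 \ell$) yields $\gamma(\mathcal{H}^1(\Gamma_{\bar h}) - \ell) \le C(1 + e_0^2)$; for $\gamma$ large enough this forces $\bar h$ to be uniformly close to $d/\ell$, and Theorem~\ref{th:regolarita} ensures that $\Gamma_{\bar h}$ is smooth on a fixed neighborhood of every $x_i$. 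Choosing $\bar e$ so that $\kappa\, \bar e\, \min_j |\bu_j \cdot \mathbf{e}_1|$ exceeds the uniform bound on $|R|$, and then $\bar\gamma$ large enough (depending on $\bar e$) for that bound to hold, gives $\Delta E < 0$, the desired contradiction. The main obstacle is precisely this uniform estimate on $R$ when a dislocation might a priori lie near a cusp or vertical cut: a localization to a strip where $\Gamma_{\bar h}$ is $C^{1,\alpha}$ and $\bar h$ is uniformly bounded below, combined with standard elliptic estimates for the Airy stress function $w$ solving $\Delta^2 w = \curl(\sigma * \ro_{r_0})$, resolves the issue.
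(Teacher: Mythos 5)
Your core mechanism is the right one and coincides with the paper's Proposition~\ref{prop:nicebottom}: the cross term between the misfit strain and the field of the translated dislocation is a Peach--Koehler-type work, its leading part is $c\,e_0(\bu_i\cdot\mathbf{e}_1)$ times the displacement (with $c=\frac{4\mu(\mu+\lambda)}{2\mu+\lambda}\int\ro_{r_0}$ in the paper's computation), and the sign hypothesis makes downward motion favorable. But there are two genuine gaps. First, your remainder $R$ is \emph{not} bounded uniformly in $e_0$: the contribution of the deviation of $H^\star$ from the uniform misfit strain is of size $e_0\,\|u_{\bar h}-v_0\|_{C^{1,\alpha}}$, i.e.\ it scales like $e_0$ exactly as the leading term does, so it cannot be beaten by enlarging $\bar e$. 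It must instead be absorbed by making $\|\bar h-d/\ell\|_{C^{1,\alpha}}$ so small that $\|u_{\bar h}-v_0\|_{C^{1,\alpha}}\le\e$ with $C\e<c\min_j|\bu_j\cdot\mathbf{e}_1|$ (via \cite[Lemma~6.10]{FFLM2}); only the $e_0$-independent terms (dislocation--dislocation interaction, self-energy change) are handled by taking $e_0$ large. Relatedly, your quantifier order fails: a $\bar\gamma$ depending only on $\bar e$ cannot give the required flatness for all $|e_0|>\bar e$, since the comparison with the flat competitor only yields $\H^1(\Gamma_{\bar h})-\ell\le C(1+e_0^2)/\gamma$; $\bar\gamma$ must be chosen after $e_0$ is fixed, as in the body statement Theorem~\ref{prop:exnuova}.

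Second, and more seriously, the step ``for $\gamma$ large the profile is uniformly close to $d/\ell$, and Theorem~\ref{th:regolarita} gives smoothness near every $x_i$'' does not deliver what you need. $L^\infty$ closeness is not enough: the estimate on $R$ requires $C^{1,\alpha}$ closeness with norms uniform over the family of minimizers, whereas Theorem~\ref{th:regolarita} is qualitative for a single minimizer and, near the balls $\overline B_{r_0}(z_i)$ --- which act as obstacles and may touch $\Gamma_{\bar h}$ --- gives only $C^{1,\alpha}$, not analyticity. This is precisely where the paper does most of its work: it replaces the volume constraint by a penalization uniformly along the sequence $\gamma_n\to\infty$ (Step 2 of the proof of Theorem~\ref{prop:exnuova}, which hinges on Theorem~\ref{sollevamento} and the delicate Lemma~\ref{lm:claim} controlling the arcs where a dislocation ball meets the free profile at maximal height), derives a uniform interior ball condition to exclude cusps and vertical cuts, and then proves uniform $C^{1}$ and $C^{1,\alpha}$ convergence of the profiles to $d/\ell$ (Steps 3--4) before invoking Proposition~\ref{prop:nicebottom}. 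Your closing remark correctly identifies this as ``the main obstacle,'' but standard elliptic estimates for the Airy stress function do not resolve it: the difficulty lies in the regularity of the free boundary in the presence of the dislocation obstacles, not in the regularity of the field on a given domain.
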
	

In the last part of the paper we study the nucleation of dislocations and we investigate conditions under which it is energetically favorable to create dislocations.  To this purpose, we modify  the energy  \eqref{total} by adding a term that accounts for the energy dissipated to create dislocations. Following the physical literature (see for instance \cite{N}), we assume that the energy cost of a new dislocation is proportional to the square of the norm of the corresponding Burgers vector. This leads to an energetic contribution $N(\sigma)$, given in \eqref{ns}. Therefore, our new  variational problem is to
\begin{equation}\label{vp}
\textrm{minimize}\quad F(h, \s, H)+N(\s)
\end{equation}
 among all admissible configurations $(h, \s, H)$, under a volume constraint, but without fixing the number of dislocations nor the Burgers vectors, which are allowed to be any integer multiple of certain fundamental directions in a set $\mathcal{B}^o\subset\mathbb{R}^2$.

 The regularity results of Section~\ref{sec:nonnucl} apply to the minimizers of \eqref{vp}. On the other hand, local and global minimizers of the minimum problem studied in Section~\ref{sec:nonnucl} may be regarded as local minimizers of \eqref{vp}.  Finally, in Theorem~\ref{th:ul} we identify a range of parameters for which all global minimizers have nontrivial dislocation measures (see \cite{LPM} for an analogous result in  heterogeneous nanowires). 
 
 \begin{theorem}\label{th:22001}
	Assume that there exists  $\bu\in \mathcal{B}^o$ such that $ \bu\cdot \mathbf{e}_1\neq 0$, and let $d>2r_0 \ell$. Then there exists $\bar \gamma>0$ such that whenever $|e_0|>\bar e$, and $\gamma> \bar \gamma$, where $\bar e$ is as in Theorem \ref{th:22000}, any minimizer $(\bar{h}, \bar{\sigma}, \bar{H})$
	of the problem \eqref{vp}  has nontrivial dislocations, i.e., $\bar{\sigma}\neq 0$.
	\end{theorem}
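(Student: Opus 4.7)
The plan is to proceed by contradiction. Assume $(\bar h, 0, \bar H)$ minimizes \eqref{vp} with $\bar\sigma = 0$; I will construct a one-dislocation competitor with strictly smaller total energy. Pick $\mathbf{b} \in \mathcal{B}^o$ with $\mathbf{b}\cdot \mathbf{e}_1\neq 0$; since integer multiples are admissible, both $\pm \mathbf{b}$ are allowed Burgers vectors, and the sign will be fixed by the analysis. The volume hypothesis $d > 2r_0\ell$ yields some $x^*\in (0,\ell)$ with $\bar h(x^*) > 2r_0$, so $z^*:=(x^*, r_0)$ lies at distance at least $r_0$ from the free boundary. Set $\sigma_1 := \pm\mathbf{b}\delta_{z^*}$, so that $N(\sigma_1) = c_N|\mathbf{b}|^2$.

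Next, construct a periodic test strain $K$ on $\Om_{\bar h}$ with $\curl K = \mathbf{b}\delta_{z^*} * \ro_{r_0}$ and bounded self-energy $\int A(K)\, dz \leq C|\mathbf{b}|^2$ (where $A(H) := \mu|H_{sym}|^2 + \tfrac{\lambda}{2}(\mathrm{tr}(H))^2$), for example a mollified Volterra edge-dislocation strain with branch cut along the vertical segment $\{x^*\}\times(0, r_0)$, suitably adapted to the periodic strip. The triple $(\bar h, \pm\mathbf{b}\delta_{z^*}, \bar H \pm K)$ is then admissible, and since the surface terms in $F$ are unchanged, the quadratic expansion of the elastic energy gives
\begin{equation*}
F(\bar h, \pm\mathbf{b}\delta_{z^*}, \bar H\pm K) + N(\pm\mathbf{b}\delta_{z^*}) - F(\bar h, 0, \bar H) = \pm 2\!\int_{\Om_{\bar h}}\!\tau(\bar H):K\,dz + \!\int_{\Om_{\bar h}}\!A(K)\,dz + c_N|\mathbf{b}|^2,
\end{equation*}
where $\tau(\bar H) := 2\mu \bar H_{sym} + \lambda\,\mathrm{tr}(\bar H)\,\Id$ is the Cauchy stress of the dislocation-free equilibrium. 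Choosing the sign $\pm$ opposite to that of $\int\tau(\bar H):K$ turns the linear term into $-2|\int \tau(\bar H):K|$, while the other two terms are bounded by $C'|\mathbf{b}|^2$ independently of $e_0$ and $\gamma$.

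The crux is the lower bound $|\int_{\Om_{\bar h}}\tau(\bar H):K\,dz|\geq c_0|e_0|$ for a positive constant $c_0$ depending only on fixed data. Since the Dirichlet condition $\bar u(x, 0) = (e_0 x, 0)$ is linear in $e_0$, so is $\tau(\bar H) = e_0\, \tau^{(1)}(\bar h)$, and the cross term is exactly linear in $e_0$. Writing $K = \nabla u^{\mathrm{disl}}$ on $\Om_{\bar h}$ away from the branch cut with jump $[u^{\mathrm{disl}}] = \mathbf{b}$ across it, using $\Div \tau(\bar H) = 0$, $\tau(\bar H)\nu = 0$ on $\Gamma_{\bar h}\cup \Sigma_{\bar h}$, and periodicity in $x$, integration by parts reduces the cross term to boundary contributions on the slip plane and the substrate:
\begin{equation*}
\int_{\Om_{\bar h}}\tau(\bar H):K\,dz = -\mathbf{b}\cdot\!\int_0^{r_0}\!\tau(\bar H)(x^*, y)\,\mathbf{e}_1\,dy - \!\int_0^\ell\!\bigl(\tau(\bar H)\mathbf{e}_2\bigr)(x, 0)\cdot u^{\mathrm{disl}}(x, 0)\,dx.
\end{equation*}
Taking $\gamma > \bar\gamma$ large, the compactness and regularity results of Section \ref{sec:nonnucl} (cf.\ \cite{FFLM}), also used in the proof of Theorem \ref{th:22000}, force $\bar h$ to remain close to the flat profile $h_{\text{flat}}\equiv d/\ell$ and $\tau^{(1)}(\bar h)$ close to the explicit uniaxial stress $c_*(\lambda, \mu)\mathbf{e}_1\otimes \mathbf{e}_1$ in a neighborhood of $z^*$; in this flat limit the substrate contribution vanishes (the flat stress has zero second column) while the slip-plane contribution reduces to $-c_*(\lambda,\mu)e_0 r_0(\mathbf{b}\cdot \mathbf{e}_1)\neq 0$, yielding the required lower bound. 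Thus the total energy difference is $\leq -2c_0|e_0| + C'|\mathbf{b}|^2 < 0$ whenever $|e_0| > \bar e$, contradicting the minimality of $\bar\sigma = 0$.

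The main obstacle is the quantitative flat-profile approximation with constants uniform in $e_0$: one must show that for all $\gamma > \bar\gamma$ (with $\bar\gamma$ independent of $e_0$), the equilibrium stress $\tau^{(1)}(\bar h)$ of every dislocation-free minimizer is uniformly close to the explicit flat-profile stress in a neighborhood of $z^*$, so that the leading linear-in-$e_0$ gain survives for all admissible configurations. This requires carefully balancing the distinct scalings of elastic and surface energies (quadratic vs.\ linear in $e_0$) and propagating compactness of profiles to compactness of stresses via the regularity theory developed earlier in the paper.
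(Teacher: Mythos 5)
Your core mechanism is the right one and is essentially the paper's: adding a single dislocation with $e_0(\bu\cdot\mathbf{e}_1)$ of the appropriate sign produces a cross term between the misfit stress and the test strain that is linear in $e_0$ and negative, and for $|e_0|>\bar e$ this beats the $e_0$-independent self-energy and nucleation cost. This is exactly the content of the paper's Proposition~\ref{prop:ex1}, which even uses (a version of) your test field: there $K$ has entries $k_l(x,y)=-b_l\int_0^y\ro_{r_0}(x-x_0,t-y_0)\,dt$, i.e.\ a mollified branch cut running \emph{upward} from the core to the free surface, so that $K[\mathbf{e}_1]=0$ on $\{y=0\}$ holds automatically. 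With your cut running down from the core to the substrate, $\partial_x u^{\mathrm{disl}}(\cdot,0)$ is a mollified multiple of $\de_{x^*}$, the tangential trace condition fails, and the competitor is inadmissible unless corrected. Relatedly, your Peach--Koehler identity is the sharp-dislocation formula; for the regularized compatibility condition $\curl K=\bu\de_{z^*}*\ro_{r_0}$ it must be replaced by the direct computation the paper performs, $\int_{\Om_h}\C E(v_0):K_{sym}\,dz=\tfrac{4\mu(\mu+\lambda)}{2\mu+\lambda}\int_{\Om_h}k_1\,dz$. These are fixable details.

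The genuine gap is exactly where you flag it: the quantitative, $e_0$-uniform closeness of the equilibrium stress of a dislocation-free minimizer to the flat-profile stress for all $\gamma>\bar\gamma$. Uniform $L^\infty$-closeness of $\bar h$ to $d/\ell$ does follow from surface-energy domination, but the stress estimate requires $C^{1,\alpha}$-closeness of the profile (cf.\ \cite[Lemma~6.10]{FFLM2}, which is why Proposition~\ref{prop:ex1} assumes $\|h-d/\ell\|_{C^{1,\alpha}_\#(0,\ell)}\le\de$), and upgrading $L^\infty$-compactness to $C^{1,\alpha}$-compactness with constants controlled in $e_0$ is not automatic, since the elastic energy scales like $e_0^2$ while the surface term does not. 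The paper sidesteps this entirely: it shows that for $\gamma$ large a minimizer with $\bar\sigma=0$ must have \emph{exactly} flat profile $\bar h\equiv d/\ell$. This follows from the strict local minimality of the flat configuration for small Lam\'e coefficients (\cite[Theorem~2.9]{FuMo}, applied to the rescaled functional $G_n=\gamma_n^{-1}\int W+\H^1(\Gamma_h)+2\H^1(\Sigma_h)$), combined with the observation that $G_n$ is a convex combination of $G_{n_0}$ and the pure surface energy, both of which the flat profile minimizes (strictly, respectively weakly). Once the profile is exactly flat, Proposition~\ref{prop:ex1} applies with $\de=0$ and no further regularity work is needed. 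To complete your route you should either import this exact-flatness argument or actually carry out the $C^{1,\alpha}$-convergence of dislocation-free minimizers with $e_0$-uniform constants; as written, the step ``the regularity theory forces $\tau^{(1)}(\bar h)$ close to the uniaxial stress'' is asserted rather than proved, and it is the whole difficulty.
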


\section{Epitaxial elastic films with dislocations}\label{sec:nonnucl}
\subsection{Setting of the Problem}

We assume that the substrate is rigid and occupies the semi-infinite strip $(0,\ell)\times (-\infty, 0)$, and that the reference configuration of the elastic film is given by
$$
\Omega_h:=\left\{  z=\left(  x,y\right)  :\,0\leq x<\ell,\,0<y<h\left(
x\right)  \right\}  \label{omega}%
$$
with $h:\left[  0,\ell\right]  \rightarrow\left[  0,\infty\right)$. 
The
graph of $h$ represents the \emph{free} profile of the film and
the line $y=0$ corresponds to the film/substrate interface. 
The space of admissible profiles is defined by
$$
AP(0,\ell):=\left\{h:\R\to  [0,+\infty):\, \text{$h$ is lower  semicontinuous and $\ell$-periodic,}\, \operatorname*{Var}(h;0,\ell)<+\infty \right\}\,.
$$
Here $\operatorname*{Var}(h;0,\ell)$ denotes the {\em pointwise total variation} of $h$ over the interval
$(0,\ell)$, given by 
$$
 \operatorname*{Var}(h;0,\ell):=  \sup \sum_{i=1}^k |h(x_i)-h(x_{i-1})|<+\infty\,,
$$
where the supremum is taken over all partitions
$\{x_0,x_1,\dots,x_k\}$, with $0<x_0<x_1<\dots<x_k<\ell$, $k\in\N$.
Since $h\in AP(0,\ell)$ is $\ell$-periodic,  its pointwise total variation is finite over any bounded interval of
$\R$. Therefore,  it admits  right and left limits at every $x\in\R$
denoted by $h(x+)$ and $h(x-)$, respectively. 
In what follows we use the notation
\begin{equation}\label{piuomeno}
h^+(x):=\max\{h(x+), h(x-)\}\,, \qquad h^-(x):=\min\{h(x+), h(x-)\}\,.
\end{equation}
We set
$$
\Om^\#_h:=\{(x,y):x\in\R,\,0<y<h(x)\}
$$
to be the open set obtained by repeating copies of $\Om_h$ $\ell$-periodically in the $x$-direction.
We define 
$$
\Gamma_h:=\{(x,y):x\in[0,\ell),\,h^-(x)\leq y\leq h^+(x)\}\,,
$$
 and {\em the set of vertical cracks }
 \begin{equation}\label{Sigma_g}
 \Sigma_h:=\{(x,y):x\in[0,\ell)\,,\,h(x)<h^-(x),\,h(x)\leq y\leq h^-(x)\}\,.
\end{equation}
We also set 
$$
\widetilde\Gamma_h:=\Gamma_h\cup\Sigma_h\,,
$$
and we will use the notation
$$
\Gamma_h^\#:=\{(x,y)\in \R^2:\, x\in\R,\,h^-(x)\leq y\leq h^+(x)\}\,.
$$
Similarly we define $ \Sigma^\#_h$ and $\widetilde\Gamma_h^\#$. 

Observe that if $h\in AP(0,\ell)$, then
\begin{equation}\label{bound h}
\Vert h\Vert_\infty\le\frac1\ell\int_0^\ell h\,dx+\operatorname*{Var}(h;0,\ell)\le\frac{|\Omega_h|}{\ell}+\mathcal{H}^1(\Gamma_h)\,.
\end{equation}

We work within the
theory of small elastic deformations, so that
\[
{E}( {u})  :=\frac{1}{2}\left(  \nabla{u}+\nabla{u}^{T}\right)
\]
represents the strain, with ${u}:\Omega_h\rightarrow\mathbb{R}^{2}$ the
planar displacement. The elastic energy density is
 \beq\label{canonico}
W(E):=\frac{1}{2}\C E:E=\mu|E|^2+\frac{\lambda}{2}\bigl[{\rm tr}(E)\bigr]^2\,,
\eeq
where 
\begin{equation}\label{cxi}
\C E=
\left(
\begin{array}{cc}
(2\mu+\lambda)E_{11}+\lambda E_{22} & 2\mu E_{12} \\
2\mu E_{12} & (2\mu+\lambda)E_{22}+\lambda E_{11} \\
\end{array}
\right)
\end{equation}
and the {\em Lam\'e coefficients} $\mu$ and $\lambda$ satisfy the ellipticity conditions
\begin{equation}\label{lame}
\mu>0\qquad\text{and}\qquad\mu+\lambda>0\,.
\end{equation}

Throughout this section we assume the presence of $k$ dislocations with given Burgers vectors $\mathbf{B}:=\{\bu_1, \dots, \bu_k\}\subset\R^2$ and centers $\{z_1, \dots, z_k\}\subset \Om_h$ such that $B_{r_0}(z_i)\subset\Om_h^\#$, with $r_0\in (0, \ell/2)$ a (small) positive constant representing the {\em core radius} of the dislocations. 
With any such collection of dislocations we associate the {\it $\ell$-periodic dislocation measure}
$$
\sigma:=\sum_{i=1}^{k}\bu_i\delta^\#_{z_i}\,,
$$
where, given $z\in \Om_h$ we denote by $\de_{z}^\#$ the $\ell$-periodic extension of the Dirac delta $\de_{z}$, i.e.,  
$$
\de_{z}^\#:=\sum_{k\in\Z}\de_{z+k\ell\mathbf{e}_1}\,.
$$
To regularize $\sigma$, we fix a nonnegative radially symmetric $\ro \in C^{\infty}_c(B_1(0))$,  with $\int_{\R^2}\ro\, dz=1$, and we define
\begin{equation}
\ro_{r_0}(z):=\frac{1}{r_0^2}\ro\Bigl(\frac{z}{r_0}\Bigr)\quad\text{and}\quad\ro_{r_0}^\#:=\ro_{r_0}*\de^\#_0\,.
\label{mollifier}
\end{equation}
Note that $\ro_{r_0}^\#$ is  the $\ell$-periodic extension in the $x$-direction of the function $\ro_{r_0}$.

Given $h\in AP(0,\ell)$ we denote by $\md(\Om_h; \mathbf{B})$ the subset of the space of vector valued Radon measures $\mathcal{M}(\Om_h^\#; \R^2)$ defined by
\begin{multline*}
\md(\Om_h; \mathbf{B}):=\biggl\{\sigma\in \mathcal{M}(\Om_h^\#; \R^2):\, \sigma=\sum_{i=1}^{k}\bu_i\delta^\#_{z_i}, \, z_i\in \Om_h, \text{ with } B_{r_0}(z_i)\subset\Om_h^\#\biggr\}\,.
\end{multline*}
Observe that we are not requiring that the centers of the $k$ dislocations are all distinct, thus allowing for superpositions of different dislocations.

We recall that the curl of  a function $H$ with values in $\mathbb{M}^{2\times2}$ is defined by
$$
\curl H:=\Bigl(\frac{\partial H_{12}}{\partial x}-\frac{\partial H_{11}}{\partial y}, \frac{\partial H_{22}}{\partial x}-\frac{\partial H_{21}}{\partial y}\Bigr)\,.
$$

The total energy functional will depend on the film profile $h$ and on the dislocation measure $\sigma\in \md(\Om_h)$ via the associated {\em strain field} $H$ satisfying the constraint $\curl H=\sigma*\ro_{r_0}$, which accounts also for the interactions between the different dislocations. Moreover, the presence of a mismatch between the film and the substrate lattices is modeled by enforcing a Dirichlet boundary condition at the interface $\{y=0\}$, namely by requiring that the tangential trace of $H$ on the interface equals 
$e_0 \mathbf{e}_1$, where $\mathbf{e}_1:=(1,0)$ and $e_0\neq 0$.
 To be precise, we introduce the following set of admissible triples
\begin{multline}\label{X space}
X(e_0; \mathbf{B}):=\Big\{(h, \sigma, H):\, h\in AP(0,\ell),\, \sigma\in \md(\Om_h; \mathbf{B}),\, H\in \mathbf{H}_\#(\curl; \Om_h; \mathbb{M}^{2\times2})\\
\text{ such that } \curl H=\sigma*\ro_{r_0} \text{ in }\Omega_h\text{ and } H[\,\mathbf{ e_1}\, ]=e_0\mathbf {e_1}
\text{ on }\{y=0\}\Big\}\,,
\end{multline}
where we are using the fact that admissible fields $H$ admit a tangential trace (see, e.g., Chapter 4 in \cite{BF}), and where, denoting by $H^\#$ the $\ell$-periodic extension in the $x$-direction of $H$,
\begin{multline}\label{Hcurl}
 \mathbf{H}_\#(\curl; \Om_h; \mathbb{M}^{2\times2}):=\\
 \{H\in L^2_{loc}(\Om_h; \mathbb{M}^{2\times2}):\,  \curl H\in L^2(\Om_h; \R^2)\text{ and }\curl H^\#\in L^2_{loc}(\Om^\#_h;\R^2)\}\,.
\end{multline}
The total energy of the system is given by
\begin{equation}
F(h, \sigma, H)  :=\int_{\Omega_h}W(\sym{H})\, dz+\gamma\H^1(\Gamma_h)+2\gamma\H^1(\Sigma_h) \label{sharp model energy}%
\end{equation}
for every admissible  configuration $(h, \sigma, H)\in X(e_0; \mathbf{B})$,
where we recall that $
 \sym{H}:=(H+H^T)/2
 $
 and  $\gamma$ is a positive constant depending on the material properties.

For every fixed  profile $ h\in AP(0,\ell)$ and  dislocation measure $\sigma$ we denote by $H_{h, \s}$ the unique strain field that minimizes 
$$
H\mapsto \int_{\Om_h}W(H_{sym})\, dz
$$
over all $ H\in\Hc{h} $ such that $(h, \sigma, H)\in X(e_0; \mathbf{B})$. The existence and uniqueness of $H_{h, \s}$ follow from the coercivity and strict convexity of the energy \eqref{sharp model energy}  (see \eqref{cxi} and \eqref{lame}) and the fact that the Dirichlet condition in \eqref{X space} is preserved under weak convergence in 
the space $\mathbf{H}_\#(\curl; \Om_h; \mathbb{M}^{2\times2})$ (see \eqref{Hcurl}). Note that $H_{h, \s}$ is determined as the unique solution in $\Hc{h}$ to the  system 
\beq\label{ELHhs}
\begin{cases}
\curl H_{h,\s}=\sigma*\ro_{r_0} & \text{in $\Om_h$,}\\
\Div\, \C (H_{h,\s})_{sym}=0& \text{in $\Om_h$,}\\
\C (H_{h,\s})_{sym}[\nu]=0 &\text{on $\Gamma_h$,}\\
H_{h,\s}[\, \mathbf{e}_1\, ]=e_0\mathbf{e}_1 & \text{on $\{y=0\}$.}
\end{cases}
\eeq
Note also that if  $(h, \sigma, H_{h, \s})\in X(e_0; \B)$ is a (locally) minimizing configuration, with $h\in C^2_\#([0, \ell])$ and $h>0$, then by considering smooth variations of $h$ supported in the complement of the projection of $\cup_{i=1}^k \bar B_{r_0}(z_i)$ on the $[0, \ell]$, we obtain by standard arguments the following Euler-Lagrange equation
\beq\label{eq:EL}
\kappa+W((H_{h, \s})_{sym})=\Lambda \qquad\text{on }\overline{\Gamma_h\setminus \cup_{i=1}^k\bar B_{r_0}(z_i)}\,,
\eeq
where 
$$
\kappa:=-\biggl(\frac{h'}{\sqrt{1+h'^2}}\biggr)'
$$
denotes the curvature of $\Gamma_h$ and $\Lambda$ is the constant Lagrange multiplier associated with the volume constraint. 
This motivates the following definition.
\begin{definition}\label{def:critical}
Let $(h, \sigma, H_{h, \s})\in X(e_0; \B)$, with  $h\in C^2_\#([0, \ell])$ and $h>0$. We say that $(h, \sigma, H_{h, \s})$ is a {\em critical configuration}
if \eqref{ELHhs} and \eqref{eq:EL} are satisfied.
\end{definition}

In the sequel we will use the following {\em canonical decomposition} of $H_{h,\s}$:
$$
H_{h,\s}=e_0Du_h+K_{h,\s}\,,
$$
where $u_h$ is the elastic equilibrium in $\Om_h$ such that $u_h(x,0)=(x,0)$, that is the unique solution
to the system
\beq\label{eleq}
\begin{cases}
\Div\, \C E(u_h)=0 & \text{in $\Om_h$,}\\
 \C E(u_h)[\, \nu\, ]=0 & \text{on $\Gamma_h$,}\\
 u_h(x,0)=(x,0) & \text{on $\{y=0\}$,} 
\end{cases}
\eeq
such that $(x,y)\in\Omega^\#\mapsto u_h(x,y)-(x,0)$ belongs to
\begin{multline*}
LD_\#( \Om_h;\R^2){:=} \left\{v\in L^2_{\rm loc}(\Om^\#_h;\R^2):\, v(x,y)=v(x{+}\ell,y)\right. \\
\left. \text{ for }(x,y)\in \Om_h^\#\,, E(v)|_{\Om_h}\in L^2(\Om_h;\R^2)\right\}\,,
\end{multline*}
and $K_{h,\s}$ is the unique solution in $\Hc{h}$ to 
\beq\label{Kcanonical}
\begin{cases}
\curl  K_{h, \s}= \sigma*\ro_{r_0} & \text{in $\Om_h$,}\\
\Div\, \C(K_{h,\s})_{sym}=0& \text{in $\Om_h$,}\\
\C(K_{h,\s})_{sym}[\nu]=0 &\text{on $\Gamma_h$,}\\
 K_{h, \s}[\, \mathbf{e}_1\,]=0 & \text{on $\{y=0\}$.}
\end{cases}
\eeq
We set
\beq\label{v0}
v_0(x,y):=\Bigl(x, \frac{-\lambda y}{2\mu+\lambda}\Bigr)\quad\text{and}\quad W_0:=W(E(v_0))\,.
\eeq 
Observe that $v_0$ is the elastic equilibrium corresponding to the flat configuration and $e_0=1$.

\subsection{Existence}\label{subsec:existence}

We start with the following Korn-type inequality.

\begin{lemma}\label{lm:korn-fusco}
Let $\Om\subset\R^2$ be a bounded open simply connected set with Lipschitz boundary and let $\Gamma$ be a non-empty connected  relatively open subset of $\pa\Om$.
Then, 
there exists a constant $C>0$ depending only on $\Omega$ and $\Gamma$ such that
\begin{equation}\label{estimate H}
\|H\|_{L^2(\Om; \mathbb{M}^{2\times2})}\leq C\big(\|H_{sym}\|_{L^2(\Om; \mathbb{M}^{2\times2})}+\|\curl H\|_{L^2(\Om; \R^2)}\big)
\end{equation}
for all $H\in  \mathbf{H}(\curl; \Om; \mathbb{M}^{2\times2})$ with tangential trace $H[\tau]=0$ on $\Gamma$. 
\end{lemma}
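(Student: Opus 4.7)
My plan is to argue by contradiction and compactness, exploiting the specifically two-dimensional decomposition of a matrix field into its symmetric part and a scalar multiple of a fixed skew-symmetric matrix. Suppose the estimate fails; then, after normalization, I extract a sequence $H_n\in\mathbf{H}(\curl;\Omega;\mathbb{M}^{2\times 2})$ with $H_n[\tau]=0$ on $\Gamma$, $\|H_n\|_{L^2}=1$, $\|(H_n)_{sym}\|_{L^2}\to 0$ and $\|\curl H_n\|_{L^2}\to 0$, and I split $H_n=S_n+\alpha_n J$ with
\[
J:=\begin{pmatrix} 0 & 1 \\ -1 & 0\end{pmatrix},\qquad S_n:=(H_n)_{sym},\qquad \alpha_n:=\tfrac12\bigl((H_n)_{12}-(H_n)_{21}\bigr)\in L^2(\Omega).
\]
A direct distributional calculation from the definition of $\curl$ in the excerpt yields
\[
\nabla\alpha_n=\curl(\alpha_n J)=\curl H_n-\curl S_n.
\]
Since $\curl S_n$ is a combination of first-order distributional derivatives of the $L^2$ symmetric field $S_n$, it is bounded in $H^{-1}(\Omega;\R^2)$ by $C\|S_n\|_{L^2}$, so $\nabla\alpha_n\to 0$ in $H^{-1}(\Omega;\R^2)$.

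The core difficulty, and the step I expect to be the main obstacle, is upgrading $H^{-1}$-control on $\nabla\alpha_n$ to \emph{strong} $L^2$-convergence of $\alpha_n$. For this I invoke the Ne\v{c}as/Lions lemma on bounded Lipschitz domains: there exists $C>0$ with
\[
\inf_{c\in\R}\|f-c\|_{L^2(\Omega)}\le C\,\|\nabla f\|_{H^{-1}(\Omega;\R^2)}\qquad\text{for every } f\in L^2(\Omega).
\]
Applied to $\alpha_n$, this produces constants $c_n\in\R$ with $\|\alpha_n-c_n\|_{L^2}\to 0$. Since $\{\alpha_n\}$ is bounded in $L^2$, $\{c_n\}$ is bounded in $\R$, so along a subsequence $c_n\to c$, whence $\alpha_n\to c$ strongly in $L^2(\Omega)$. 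Together with $S_n\to 0$ and $\curl H_n\to 0$ in $L^2$, this yields $H_n\to cJ$ strongly in $\mathbf{H}(\curl;\Omega;\mathbb{M}^{2\times 2})$.

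To close the contradiction I use the boundary condition. By continuity of the tangential-trace operator $\mathbf{H}(\curl;\Omega;\mathbb{M}^{2\times 2})\to H^{-1/2}(\partial\Omega;\R^2)$ (as cited from \cite{BF}), the limit relation $H_n[\tau]=0$ on $\Gamma$ transfers to $cJ\tau=0$ on $\Gamma$. Since $J$ is an orthogonal rotation, $|J\tau|=|\tau|=1$ almost everywhere on $\partial\Omega$, while $\Gamma$ has positive $\mathcal{H}^1$-measure; hence $c=0$ and $H_n\to 0$ in $L^2(\Omega;\mathbb{M}^{2\times 2})$, contradicting $\|H_n\|_{L^2}=1$. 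The only nontrivial analytic input beyond linear algebra and standard trace theory is the Ne\v{c}as/Lions inequality, which is precisely what converts a weak (distributional) estimate on the skew part into the desired strong $L^2$-control.
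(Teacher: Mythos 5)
Your argument is correct, and it takes a genuinely different route from the paper's. The paper proceeds constructively: it solves an auxiliary mixed Dirichlet--Neumann problem $\Delta w=\curl H$ to build a corrector $K$ (the rows of $K$ are the rotated gradients $\nabla^\perp w_i$) with $\curl K=\curl H$, $\|K\|_{L^2}\le C\|\curl H\|_{L^2}$ and vanishing tangential trace on $\Gamma$; it then uses the simple connectivity of $\Omega$ (Helmholtz) to write $H-K=Du$ with $u$ constant, hence zero, on the connected arc $\Gamma$, and concludes with Korn's inequality for displacements vanishing on a boundary portion. You instead identify the kernel of the right-hand side directly through the two-dimensional decomposition $H=H_{sym}+\alpha J$, observe that $\curl(\alpha J)=\nabla\alpha$ so that the hypotheses control $\nabla\alpha$ in $H^{-1}$, and invoke the Ne\v{c}as--Lions inequality to convert this into strong $L^2$-compactness of $\alpha_n$ modulo constants; the boundary condition then kills the limiting constant because $|cJ\tau|=|c|$ a.e.\ on a set of positive length. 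Both proofs lean on a black box of comparable depth (Korn with boundary condition versus Ne\v{c}as--Lions, which are essentially equivalent facts), but the trade-offs differ: your compactness argument does not use simple connectivity of $\Omega$ (only connectedness, needed for the quotient over constants) nor connectedness of $\Gamma$ (only $\mathcal H^1(\Gamma)>0$), so it proves a slightly more general statement; on the other hand it is non-quantitative, whereas the paper's construction in principle tracks the constant through the Poincar\'e and Korn constants of $(\Omega,\Gamma)$ --- which does not matter for this lemma as stated, since later uniform estimates (Lemma~\ref{lm:ellcan}) are derived by rerunning the constructive splitting rather than by citing \eqref{estimate H}. The one step worth stating with a little more care is the passage to the limit in the boundary condition: the tangential trace is continuous from $\mathbf{H}(\curl)$ into $H^{-1/2}(\partial\Omega)$ and its restriction to test functions supported in the relatively open set $\Gamma$ is continuous as well, so $H_n[\tau]=0$ on $\Gamma$ indeed passes to the limit $cJ$; since $cJ$ is smooth its distributional tangential trace is the classical one, and the conclusion $c=0$ follows.
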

\begin{proof} \noindent {\bf Step 1.} We start by assuming that $\mathcal H^{1}(\pa\Om\setminus \Gamma)>0$ and,  without loss of generality, that 
$H_{sym}\in L^2(\Om; \mathbb{M}^{2\times2})$. Let
$$
K:=
\left(
\begin{array}{cc}
-D_yw_1 & D_xw_1\\
-D_yw_2 & D_xw_2
\end{array}
\right),
$$
where $w=(w_1, w_2)$ is the unique  solution to
$$
\begin{cases}
\Delta w= \curl H & \text{in $\Om$,}\\
w=0 & \text{on $\pa\Om\setminus\Gamma$,}\\
D_\nu w=0 & \text{on $\Gamma$.}\
\end{cases}
$$
By multiplying $\Delta w_i=(\curl H)_i$ by $w_i$, $i=1,2$ and integrating by parts, it follows from the Poincar\'e inequality
\begin{equation}\label{ineq1}
\|K\|_{L^2(\Om; \mathbb{M}^{2\times2})}= \|Dw\|_{L^2(\Om; \mathbb{M}^{2\times2})}\leq C \|\curl H\|_{L^2(\Om; \R^2)}\,.
\end{equation}
Since $\curl(H-K)=0$ in $\Omega$, by the Helmholtz decomposition theorem (see, e.g., Theorem 3.3.7 in \cite{M}) there exists $u\in H^1(\Omega;\mathbb{R}^2)$ such that $Du=H-K$. Moreover, $u$ is unique up to a constant. Since $(H-K)[\tau]=0$ on $\Gamma$, we can take $u=0$ on $\Gamma$. 
Using Korn's inequality (see, e.g., \cite{Ni}), we have
\begin{align}
\|Du\|_{L^2(\Om; \mathbb{M}^{2\times2})}\leq C \|E(u)\|_{L^2(\Om; \mathbb{M}^{2\times2})}&=
 C \|H_{sym}-K_{sym}\|_{L^2(\Om; \mathbb{M}^{2\times2})}\nonumber\\
 &\leq C\big(\|H_{sym}\|_{L^2(\Om; \mathbb{M}^{2\times2})}+ \|\curl H\|_{L^2(\Om; \R^2)}\big)\,,\label{ineq2}
\end{align}
where in the last inequality we have used  \eqref{ineq1}.
By \eqref{ineq1} and \eqref{ineq2}, we obtain \eqref{estimate H}.

\noindent {\bf Step 2.} If $\mathcal H^{1}(\pa\Om\setminus \Gamma)=0$, then the argument is similar, and it suffices to replace the condition $w=0$ on $\pa\Om\setminus \Gamma$ by $\int_\Om w\, dz=0$.
\end{proof}

The next lemma provides a useful elliptic estimate for the solutions to  systems of the type  \eqref{Kcanonical}.
\begin{lemma}\label{lm:ellcan}
Let $h\in AP(0,\ell)\cap Lip(0,\ell)$, $h\geq c_0>0$, $\|h'\|_{\infty}\leq M$ and let  $f\in L^2(0,\ell; \R^2)$. Then, there exists a constant $C>0$, depending only on
$c_0$ and $M$, such that if $H\in \Hc{h}$ is the solution to
$$
\begin{cases}
\curl  H= f & \text{in $\Om_h$,}\\
\Div\, \C H_{sym}=0& \text{in $\Om_h$,}\\
\C H_{sym}[\nu]=0 &\text{on $\Gamma_h$,}\\
 H [\, \mathbf{e}_1\,]=0 & \text{on $\{y=0\}$,}
\end{cases}
$$
then
\beq\label{ellcan1}
\|H\|_{L^2(\Om_h; \mathbb{M}^{2\times 2})}\leq C\|f\|_{L^2(0,\ell; \R^2)}\,.
\eeq
\end{lemma}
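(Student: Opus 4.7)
The plan is to split $H$ as $\bar H + Du$ with $\bar H$ an explicit particular solution of the curl equation and $u$ a displacement correcting the elastic equilibrium, and then combine an energy estimate for $u$ with Lemma~\ref{lm:korn-fusco}.

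The first step is to exhibit a concrete $\bar H \in \Hc{h}$ satisfying $\curl \bar H = f$ in $\Om_h$, $\bar H[\,\mathbf{e}_1\,] = 0$ on $\{y=0\}$, and $\ell$-periodic in $x$. Set row by row
$$
\bar H_{i1}(x,y) := -\int_0^y f_i(x,s)\,ds, \qquad \bar H_{i2}(x,y) := 0, \quad i=1,2.
$$
Direct inspection gives $(\curl\bar H)_i = -\partial_y \bar H_{i1} = f_i$, the trace condition $\bar H_{i1}(x,0)=0$, and periodicity in $x$ is inherited from $f$. Cauchy--Schwarz in the $y$-variable then yields $\|\bar H\|_{L^2(\Om_h)} \le C\|f\|_{L^2(\Om_h)}$.

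Set $\hat H := H - \bar H$. Since $\curl \hat H = 0$ in the simply connected set $\Om_h^\#$, we have $\hat H = Du$ for some $u \in H^1_{\loc}(\Om_h^\#; \R^2)$. The conditions $\hat H[\,\mathbf{e}_1\,] = 0$ on $\{y=0\}$ and the $\ell$-periodicity of $\hat H$ let us normalise $u$ so that $u \equiv 0$ on $\{y=0\}$ and $u$ is $\ell$-periodic in $x$. Combining $\Div\,\C H_{sym}=0$ in $\Om_h$, $\C H_{sym}[\nu]=0$ on $\Gamma_h$, the definition of $\bar H$, and the cancellation of the periodic side terms, one obtains the weak formulation
$$
\int_{\Om_h} \C E(u) : E(v)\,dz \;=\; -\int_{\Om_h} \C \bar H_{sym} : E(v)\,dz
$$
for every $v \in LD_\#(\Om_h; \R^2)$ with $v=0$ on $\{y=0\}$. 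Testing with $v=u$ and using the ellipticity \eqref{lame} of $\C$ on symmetric matrices together with Cauchy--Schwarz gives $\|E(u)\|_{L^2} \le C\|\bar H_{sym}\|_{L^2} \le C\|f\|_{L^2}$. Applying Lemma~\ref{lm:korn-fusco} to the curl-free field $\hat H$ (with $\Gamma = \{y=0\}$, on which its tangential trace vanishes) yields $\|\hat H\|_{L^2} \le C\|E(u)\|_{L^2} \le C\|f\|_{L^2}$, and the triangle inequality $\|H\|_{L^2} \le \|\bar H\|_{L^2} + \|\hat H\|_{L^2}$ concludes.

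The main technical obstacle is quantitative: the Korn constant from Lemma~\ref{lm:korn-fusco}, the coercivity constant of $\C$, and the constant in the bound on $\bar H$ must be shown to depend only on $c_0$ and $M$, uniformly over the admissible class of profiles. This is achieved by exploiting the uniform Lipschitz character of $\Om_h$ (encoded by $h \ge c_0$ and $\|h'\|_\infty \le M$), for instance via a bi-Lipschitz change of variables to a reference periodic domain whose distortion is controlled by $c_0$ and $M$, and then transporting the standard estimates back to $\Om_h$.
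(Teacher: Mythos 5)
Your proof is correct and follows essentially the same route as the paper's: decompose $H$ into a particular solution of the curl equation plus a gradient $Du$ vanishing on $\{y=0\}$, bound the particular solution in $L^2$ by $\|f\|_{L^2}$, and control $Du$ through the energy identity and Korn's inequality. The only difference is that you lift the curl with the explicit antiderivative $\bar H_{i1}=-\int_0^y f_i(x,s)\,ds$ instead of the paper's rotated gradient of the solution to $\Delta w=f$ with mixed boundary conditions --- this is the very device the paper itself uses elsewhere (e.g.\ in Lemma~\ref{lm:claim}), and note that in both versions the constants in fact also involve $\|h\|_\infty$, an imprecision shared with the paper's statement.
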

\begin{proof} Since $h\ge c_0>0$, the set $\Omega_h$ is connected, and since its complement is also connected, we have that $\Omega_h$ is simply connected. 
Hence, we can argue as in the proof of Lemma~\ref{lm:korn-fusco} to split $H=Du+K$, where $K$ is defined
$$
K=
\left(
\begin{array}{cc}
-D_yw_1 & D_xw_1\\
-D_yw_2 & D_xw_2
\end{array}
\right)
$$
with $w=(w_1, w_2)$ the unique  solution to
$$
\begin{cases}
\Delta w= f & \text{in $\Om_h$,}\\
w=0 & \text{on $\Gamma_h$,}\\
D_\nu w=0 & \text{on $\{y=0\}$.}\
\end{cases}
$$
As before we have that $\|K\|_{L^2(\Om_h;\mathbb{M}^{2\times2})}\leq C\|f\|_{L^2(\Om_h;\R^2)}$.
Note that $u\in H^1_\#(\Om_h;\R^2)$ can be  chosen to be identically $0$ on $\{y=0\}$ and solves
$$
\begin{cases}
\Div\, \C E(u)=-\Div\, \C K_{sym} & \text{in $\Om_h$,}\\
\C E(u)[\nu]=- \C K_{sym}[\nu] & \text{on $\Gamma_h$,}\\
u=0 & \text{on $\{y=0\}$.}
\end{cases}
$$
 Multiplying both sides of the equation above  by $u$, integrating by parts, and using the fact that if $H\in \mathbb{M}^{2\times2}$ is symmetric, then so is $\C H$ (see \eqref{cxi}),  we get
$$
\int_{\Om_h}\C E(u): E(u)\, dz=-\int_{\Om_h}\C K_{sym}: E(u)\, dz\,.
$$
Hence, also by Korn's inequality, we have 
$$
\|Du\|_{L^2(\Om_h;\mathbb{M}^{2\times2})}\leq C \|K\|_{L^2(\Om_h;\mathbb{M}^{2\times2})}\leq  C\|f\|_{L^2(\Om_h;\R^2)}\,,
$$
and we conclude that \eqref{ellcan1} holds.
\end{proof}

\begin{theorem}\label{th:existence}
The minimization problem 
\beq\label{minG}
\min\{F(h,\sigma, H):\, (h, \sigma, H)\in X(e_0; \mathbf{B}), \, |\Om_h|=d\}\,.
\eeq
admits a solution.
\end{theorem}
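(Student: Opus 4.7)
I would follow the direct method. Let $(h_n,\sigma_n,H_n)\in X(e_0;\mathbf{B})$ be a minimizing sequence with $|\Om_{h_n}|=d$, and, without loss of generality, take $H_n=H_{h_n,\sigma_n}$. The bound $\sup_n F(h_n,\sigma_n,H_n)<+\infty$ together with \eqref{bound h} yields a uniform bound on $\|(H_n)_{sym}\|_{L^2(\Om_{h_n};\mathbb{M}^{2\times 2})}$, $\H^1(\Gamma_{h_n})$, $\H^1(\Sigma_{h_n})$, $\|h_n\|_\infty$, and $\operatorname{Var}(h_n;0,\ell)$. Helly's selection theorem then provides a subsequence converging pointwise to some lower semicontinuous $\ell$-periodic $h\in AP(0,\ell)$, and dominated convergence gives $|\Om_h|=d$. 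Since the cardinality $k=|\mathbf{B}|$ is fixed, a further subsequence gives $z_i^n\to z_i$, hence $\sigma_n\rightharpoonup\sigma:=\sum_i \bu_i\delta^\#_{z_i}$ weakly-$\ast$ and $\sigma_n*\ro_{r_0}\to\sigma*\ro_{r_0}$ uniformly on compact sets by smoothness of $\ro$.

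Next I verify admissibility of the limit. The constraint $B_{r_0}(z_i^n)\subset\Om_{h_n}^\#$, combined with the Hausdorff-type convergence of the open subgraphs $\Om_{h_n}^\#$ toward $\Om_h^\#$ implied by the pointwise convergence and the lower semicontinuity of $h$, forces $B_{r_0}(z_i)\subset\Om_h^\#$ (after $\ell$-translations of some $z_i^n$ if needed). I then exhaust $\Om_h^\#$ by an increasing sequence of bounded, Lipschitz, simply connected open sets $A_j\subset\subset\Om_h^\#$, each meeting $\{y=0\}$ in a non-empty arc; each $A_j$ is eventually contained in $\Om_{h_n}^\#$. Applying Lemma~\ref{lm:korn-fusco} to $H_n-e_0\mathbf{e}_1\otimes\mathbf{e}_1$ on $A_j$ (whose symmetric part is uniformly $L^2$-bounded and whose curl equals the uniformly $L^2$-bounded $\sigma_n*\ro_{r_0}$) gives a uniform bound $\|H_n\|_{L^2(A_j;\mathbb{M}^{2\times 2})}\le C_j$. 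A diagonal extraction then produces $H_n\rightharpoonup H$ weakly in $L^2_{loc}(\Om_h^\#;\mathbb{M}^{2\times 2})$, with $\curl H=\sigma*\ro_{r_0}$ passing to the limit in distribution and the Dirichlet condition $H[\mathbf{e}_1]=e_0\mathbf{e}_1$ on $\{y=0\}$ preserved by continuity of the tangential trace in $\mathbf{H}(\curl)$.

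Finally, to pass to the liminf in the energy, convexity of $W$ and weak $L^2$-convergence give
$$
\int_{A_j} W(H_{sym})\,dz\le\liminf_n\int_{\Om_{h_n}} W((H_n)_{sym})\,dz
$$
for every $j$; letting $j\to+\infty$ and using monotone convergence yields the lower semicontinuity of the bulk contribution. Lower semicontinuity of $\gamma\H^1(\Gamma_h)+2\gamma\H^1(\Sigma_h)$ under the present mode of convergence of the profiles is precisely the relaxation result of \cite{BC,FFLM}. Therefore $(h,\sigma,H)\in X(e_0;\mathbf{B})$ attains the infimum; if needed, replacing $H$ by $H_{h,\sigma}$ can only decrease the energy, and so a minimizer exists.

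\textbf{Main obstacle.} The genuine difficulty is that all three objects — profile, dislocation measure, and strain field — are coupled through a sequence of varying domains $\Om_{h_n}$. Two intertwined points must be handled: (i) the cores $\overline{B_{r_0}(z_i)}$ cannot escape $\Om_h^\#$ in the limit, which requires the lower semicontinuity of $h$ together with the uniform bound on $\H^1(\Gamma_{h_n})$ to control how close the $z_i^n$ can approach $\Gamma_{h_n}$; and (ii) the Korn-type estimate of Lemma~\ref{lm:korn-fusco} must be localized to subdomains of $\Om_h^\#$ eventually contained in every $\Om_{h_n}$, while still encoding the zero-tangential-trace condition on the substrate that the lemma demands. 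Once these are settled, the surface energy is handled by the relaxation theory of \cite{BC, FFLM} and the bulk energy by standard convex weak lower semicontinuity.
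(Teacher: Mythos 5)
Your proposal is correct and follows essentially the same route as the paper: compactness of the profiles via \cite{FFLM}, convergence of the dislocation centers with $B_{r_0}(z_i)\subset\Om_h^\#$ forced by the Hausdorff convergence of the complements, the Korn-type Lemma~\ref{lm:korn-fusco} on an exhaustion by fixed Lipschitz simply connected subdomains to get local $L^2$-equiboundedness of the strains, and weak lower semicontinuity of bulk and surface terms. The only (cosmetic) difference is that you normalize the tangential trace by subtracting the constant matrix $e_0\mathbf{e}_1\otimes\mathbf{e}_1$ and invoke weak continuity of the tangential trace in $\mathbf{H}(\curl)$, whereas the paper extends $H_n$ by $\nabla u_0$ into the strip $(0,\ell)\times(-1,0]$ and reads the boundary condition off the limit there; both devices are equivalent (and your sets $A_j$ should be described, as in the paper, as touching $\{y=0\}$ while staying away from $\Gamma_h$, rather than as compactly contained in $\Om_h^\#$).
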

\begin{proof}
Let $\{(h_n, \sigma_n, H_n)\}\subset X(e_0; \mathbf{B})$ be a minimizing sequence. By the compactness results in \cite[Proposition~2.2 and Lemma~2.5]{FFLM}, we may assume that, up to a  subsequence (not relabeled), there exists $h\in AP(0,\ell)$ such that 
\begin{itemize}
\item[i)] $h_n\to h$ in $L^1(0,\ell)$;
\item[ii)] $\R^2\setminus\Om^\#_{h_n}\to \R^2\setminus\Om^\#_h$ in the sense of the Hausdorff metric.
\end{itemize}
Moreover,  in \cite[Lemma~2.1]{BC} it is shown that 
\begin{equation}\label{semi1}
\H^1(\Gamma_{h})+2\H^1(\Sigma_h)\leq \liminf_n\bigl[\H^1(\Gamma_{h_n})+2\H^1(\Sigma_{h_n})\bigr] \,.
\end{equation}
Setting $\sigma_n=\sum_{i=1}^{k}\bu_i\delta^\#_{z_{i,n}}$,  we can assume (up to extracting a further subsequence if needed) that 
 $z_{i,n}\to z_i\in \overline{\Om_h}$, with $B_{r_0}(z_i)\subset \Om^\#_h$. Note that if $z_i\cdot \mathbf{e}_1=\ell$ using the lateral periodicity we can assume that $z_i\cdot \mathbf{e}_1=0$, and so by \eqref{omega}  we have that $z_i\in \Om_h$.

Set $V_n:=\Omega_{h_n}\cup ((0,\ell)\times(-1,0])$ and $V:=\Omega_{h}\cup ((0,\ell)\times(-1,0])$. Since $H_n[\,\mathbf{ e_1}\, ]=e_0\mathbf {e_1}$ 	on $\{y=0\}$, by setting $H_n:=\nabla u_0$ in $(0,\ell)\times(-1,0]$, where $u_0(x,y):=(e_0x,0)$, we have that $H_n\in \mathbf{H}(\curl;  V_n;\mathbb{M}^{2\times2})$. Note that the sets $V_n$ are simply connected. Consider an increasing sequence of simply connected Lipschitz sets $U_j\subset V$ such that $(0,\ell)\times(-1,0]\subset  U_j$, $\pa U_j\cap \Gamma_h=\emptyset$ and $\cup_{j\in\N}U_j=V$. By Lemma~\ref{lm:korn-fusco} we have that for every $j$, the strain fields $H_n$ are equibounded in 
	 $L^2(U_j; \M)$.
Note also  that $\curl H_n=\sigma_n*\ro_{r_0}\to \sigma*\ro_{r_0}$ in $L^2(V; \R^2)$, where 
 $\sigma:=\sum_{i=1}^{k}\bu_i\delta^\#_{z_{i}}$. Thus,  by a diagonalization  argument, we may find $H\in \mathbf{H}(\curl;  V;\mathbb{M}^{2\times2})$ such that 
 $\curl H= \sigma*\ro_{r_0}$, and, up to the extraction of a further subsequence (not relabeled), $H_n\wto H$ weakly in $L^2(U_j; \M)$ for every $j$. Since $H_n=\nabla u_0$ in $(0,\ell)\times(-1,0]$, we have that $H=\nabla u_0$ in $(0,\ell)\times(-1,0]$, and, in turn,  $H[\mathbf{e}_1]=e_0\mathbf{e}_1$ on $\{y=0 \}\cap \pa\Om_h$. 
 It follows that $(h, \sigma, H)\in X(e_0; \mathbf{B})$ and for every $j\in \N$
 \begin{equation}\label{semi2}
 \int_{U_j\cap \Omega_h}W(H_{sym})\, dz\leq \liminf_n\int_{U_j\cap \Omega_h}W((H_n)_{sym})\, dz\leq  \liminf_n\int_{\Om_{h_n}}W((H_n)_{sym})\, dz.
 \end{equation}
 By  \eqref{semi1} and \eqref{semi2} and the  arbitrariness of $j$ we conclude that
 $$
 F(h, \sigma, H)\leq\liminf_n F(h_n, \sigma_n, H_n)\,.
 $$ 
 Thus $(h, \sigma, H)$ is a global minimizer. 
 \end{proof}

\subsection{Regularity}
In this subsection we establish the regularity properties of minimizers of problem \eqref{minG}. We shall follow the general strategy developed in \cite{FFLM,FuMo} to which we refer for all parts of the proofs that will remain unchanged.
\begin{theorem}\label{sollevamento} Let $d\geq 2r_0\ell$ and let $(\bar{h},\bar{\sigma}, {H_{\bar h, \bar\s}})$ be a minimizing configuration
for problem \eqref{minG} such that $\bar h^-$ is not flat. There exist $\beta>0$ depending only on 
$\|\bar h-d/\ell\|_{L^2(0,\ell)}$ and $F(\bar h, \bar\s, H_{\bar h, \s})$, and  $\Lambda>0$ depending on $\mu,\,\lambda,\,e_0,\,r_0$ and  $\beta$,
  such that  $(\bar{h},\bar{\sigma}, {H_{\bar h, \bar\s}})$ is also a minimizer of
\beq\label{minGpen}
\min\biggl\{F(h,\sigma,H)+\beta\int_0^\ell|h-\bar h|^2\, dx+\Lambda\bigl||\Om_h|-d\bigr|:\,(h,\sigma,H)\in X(e_0;\mathbf{B})\biggr\}\,.
\eeq
\end{theorem}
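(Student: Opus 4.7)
The proof is by contradiction. Fix $\beta>0$ to be specified below and suppose that for every $\Lambda>0$ the penalized inequality fails. Then there exist $\Lambda_n\to+\infty$ and admissible triples $(h_n,\sigma_n,H_n)\in X(e_0;\B)$ with
$$
F(h_n,\sigma_n,H_n)+\beta\int_0^\ell|h_n-\bar h|^2\,dx+\Lambda_n\bigl||\Om_{h_n}|-d\bigr|<F(\bar h,\bar\sigma,H_{\bar h,\bar\sigma})\,.
$$
In particular $F(h_n,\sigma_n,H_n)$ stays bounded, $|\Om_{h_n}|\to d$, and $\int_0^\ell|h_n-\bar h|^2\,dx\le F(\bar h,\bar\sigma,H_{\bar h,\bar\sigma})/\beta$. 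Choosing $\beta$ large enough — the precise size depending on $\|\bar h-d/\ell\|_{L^2}$ and $F(\bar h,\bar\sigma,H_{\bar h,\bar\sigma})$ — this $L^2$-closeness of $h_n$ to $\bar h$ will be needed to apply the modification argument below uniformly in $n$. Note that the case $|\Om_{h_n}|=d$ is immediate: $(h_n,\sigma_n,H_n)$ would be a competitor in \eqref{minG} strictly beating $(\bar h,\bar\sigma,H_{\bar h,\bar\sigma})$, which is absurd. We thus assume $|\Om_{h_n}|\neq d$ for all $n$.

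The central step is to build, for each $n$, an admissible perturbation $(\tilde h_n,\sigma_n,H_{\tilde h_n,\sigma_n})\in X(e_0;\B)$ with $|\Om_{\tilde h_n}|=d$ and satisfying the linear bounds
$$
F(\tilde h_n,\sigma_n,H_{\tilde h_n,\sigma_n})\le F(h_n,\sigma_n,H_n)+C\bigl||\Om_{h_n}|-d\bigr|,\qquad
\int_0^\ell|\tilde h_n-\bar h|^2\,dx\le \int_0^\ell|h_n-\bar h|^2\,dx+C\bigl||\Om_{h_n}|-d\bigr|,
$$
with $C=C(\mu,\lambda,e_0,r_0,\beta)$. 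Using $(\tilde h_n,\sigma_n,H_{\tilde h_n,\sigma_n})$ as a competitor in the constrained problem \eqref{minG} yields $F(\bar h,\bar\sigma,H_{\bar h,\bar\sigma})\le F(\tilde h_n,\sigma_n,H_{\tilde h_n,\sigma_n})$; chaining with the displayed strict inequality gives $\Lambda_n\bigl||\Om_{h_n}|-d\bigr|<C\bigl||\Om_{h_n}|-d\bigr|$, i.e.\ $\Lambda_n<C$, contradicting $\Lambda_n\to+\infty$. This identifies the threshold $\Lambda>C$ claimed in the statement and explains why it depends on $\mu,\lambda,e_0,r_0$ and $\beta$.

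The main difficulty, and the novelty of the dislocation setting, lies in the construction of $\tilde h_n$. When $|\Om_{h_n}|<d$ one lifts $h_n$ by a thin rectangle on a short interval disjoint from the horizontal projections of the dislocation balls $B_{r_0}(z_{i,n})$; such an interval exists for $n$ large by combining the Hausdorff convergence of the complements (as in the proof of Theorem~\ref{th:existence}) with the non-flatness of $\bar h^-$. The hard case is $|\Om_{h_n}|>d$, where one must truncate $h_n$ from above without violating the compatibility $B_{r_0}(z_{i,n})\subset\Om_{\tilde h_n}^{\#}$: the dislocation discs act as obstacles forbidding the profile to be cut below them. To overcome this, one defines the obstacle envelope
$$
\eta_n(x):=\max_i\bigl\{(z_{i,n})_2+\sqrt{r_0^2-(x-(z_{i,n})_1)^2}\,:\,|x-(z_{i,n})_1|\le r_0\bigr\}
$$
(and $0$ elsewhere), and sets $\tilde h_n:=\max(\min(h_n,t_n),\eta_n)$ with $t_n$ chosen so that $|\Om_{\tilde h_n}|=d$. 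The hypothesis $d\ge 2r_0\ell$ is precisely what guarantees that such a truncation level $t_n$ exists above the obstacle envelope; if $d$ were smaller, the obstacles could collectively exceed the available volume budget and no admissible truncation could be performed. Once $\tilde h_n$ is constructed, the surface-energy change is controlled by the lower-semicontinuity estimates of \cite[Lemma~2.1]{BC}, the elastic-energy change through the elliptic bound of Lemma~\ref{lm:ellcan} applied on the thin difference region $\{\tilde h_n<y<h_n\}$ (whose horizontal measure is $O(||\Om_{h_n}|-d|)$), and the $L^2$-estimate on $\tilde h_n-\bar h$ follows from the same smallness together with the uniform $L^\infty$ bound on $h_n$ coming from the energy bound. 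The truly delicate point — which is also where the $\beta$-dependence of $C$ enters — is to make all these estimates uniform in $n$: this is the reason why the $L^2$-closeness of $h_n$ to $\bar h$ enforced by the $\beta$-penalty is needed in the first place.
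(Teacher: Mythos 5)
Your overall reduction (assume $\Lambda_n\to+\infty$, produce from each competitor a volume-$d$ configuration whose energy exceeds the original by at most $C\bigl||\Om_{h_n}|-d\bigr|$, and conclude $\Lambda_n<C$) is sound as a scheme, and your treatment of the deficit case is close in spirit to the paper's (which simply lifts the whole profile by the constant $(d-|\Om_{h_n}|)/\ell$, pastes $e_0Dv_0$ underneath so that the tangential traces match on the new interface, and adds no surface energy at all --- note that enlarging the domain never threatens the inclusion $B_{r_0}(z_i)\subset\Om_h^{\#}$, so avoiding the balls is unnecessary, and a localized bump would in any case create a tangential-trace jump, hence a singular contribution to $\curl H$, unless handled carefully). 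The genuine gap is in the excess case $|\Om_{h_n}|>d$, which is the whole point of the theorem. First, the assertion that $d\ge 2r_0\ell$ guarantees a truncation level $t_n$ above the obstacle envelope $\eta_n$ is false: $\int_0^\ell\eta_n\,dx$ is of the order of $2r_0\sum_i y_{i,n}$ and can be arbitrarily close to $|\Om_{h_n}|>d$ (for instance when $\Om_{h_n}$ essentially coincides with the region below the upper envelope of the balls), in which case no admissible truncation exists; the hypothesis $d\ge2r_0\ell$ controls only the flat reference height, not the heights of the dislocations. Second, even when $t_n$ exists, $\tilde h_n=\max(\min(h_n,t_n),\eta_n)$ need not satisfy your key linear bound: if a ball protrudes above the level $t_n$, the new graph follows a circular arc, and the perimeter can \emph{increase} by an amount of order $r_0$ per ball, which is not $O\bigl(||\Om_{h_n}|-d|\bigr)$ since the volume gap tends to zero. (Invoking the lower semicontinuity of \cite[Lemma~2.1]{BC} is in the wrong direction --- you need an upper bound on the new perimeter --- and Lemma~\ref{lm:ellcan} is not where the difficulty lies, since restricting $H_n$ to the truncated domain already makes the elastic energy decrease.)

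The paper resolves exactly this configuration --- a dislocation ball reaching the maximal height of the profile --- by a mechanism absent from your proposal, and it is there that the unused hypotheses enter. One works with a \emph{minimizer} $(g,\tau,H_{g,\tau})$ of the penalized problem \eqref{minGpen} (rather than an arbitrary competitor beating $\bar h$), so that a first variation is available: pushing the offending dislocation downward (Lemma~\ref{lm:claim}) shows that for $\Lambda$ large every contact arc $\Gamma_g\cap\pa B_{r_0}(z_i)$ is short; chaining these arcs and a cut-or-convexity dichotomy then force $g$ to be within $c\e$ of the flat profile $d/\ell$ uniformly, hence in $L^2$. Since $\beta$ is fixed so that $F(\bar h,\bar\s,H_{\bar h,\bar\s})<\frac{\beta}{4}\|\bar h-d/\ell\|_{L^2(0,\ell)}^2$ (possible precisely because $\bar h^-$ is not flat), penalized minimality gives $\|g-\bar h\|_{L^2}\le\sqrt{F(\bar h,\bar\s,H_{\bar h,\bar\s})/\beta}<\frac12\|\bar h-d/\ell\|_{L^2}$, and the triangle inequality produces the contradiction. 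None of this is reachable from your setup, because your triples $(h_n,\sigma_n,H_n)$ are not minimizers and admit no first variation; to repair the proof you would need to restructure it around minimizers of the penalized functional and supply an argument of this type for the ball-touching-the-top case.
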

Before giving the proof we need the following technical lemma.
\begin{lemma}\label{lm:claim}
For all $\e>0$ there exists $\Lambda(\e)$ (depending also  on $\beta$, $\mu$, $\lambda$, $e_0$, and $r_0$)  with the following property:  For all 
$\Lambda\geq \Lambda(\e)$ if $(g, \tau, H_{g, \tau})$ is a minimizer of \eqref{minGpen}, with  $|\Om_g|>d$,  
$\tau=\sum_{i=1}^k\bu_i\delta^\#_{z_{i}}$, and if
$\Gamma'\subset\pa B_{r_0}(z_j)\cap \Gamma_g$ for some $j\in\{1,\dots, k\}$, with $z_j\cdot e_2>r_0$,   is any  connected arc, then $\H^1(\Gamma')\leq \e$.
\end{lemma}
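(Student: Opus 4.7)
The plan is to argue by contradiction: suppose a minimizer $(g,\tau,H_{g,\tau})$ of \eqref{minGpen} with $|\Om_g|>d$ admits a connected arc $\Gamma'\subset\pa B_{r_0}(z_j)\cap\Gamma_g$ with $\H^1(\Gamma')>\e$, and then build an admissible competitor with strictly smaller penalized energy once $\Lambda$ is taken large in terms of $\e$. The first step is a geometric observation: $\Gamma'$ must lie on the \emph{upper} semicircle of $\pa B_{r_0}(z_j)$. Vertical jumps of $g$ can meet a circle only at isolated points, so every interior point of $\Gamma'$ has the form $(x,g(x))$ with $g$ continuous, and $B_{r_0}(z_j)\subset\Om_g^\#$ forces $g(x)\ge z_j\cdot e_2+\sqrt{r_0^2-(x-z_j\cdot e_1)^2}$; equality along $\Gamma'$ places the arc on the upper half of the boundary. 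Hence $\Gamma'$ is the graph of the upper-semicircle function over an interval $I\subset[z_j\cdot e_1-r_0,z_j\cdot e_1+r_0]$, and an elementary trigonometric estimate yields $|I|\ge 2r_0\sin^2(\e/(2r_0))=:c_0(\e)>0$ (the statement being vacuous for $\e\ge\pi r_0$).

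Next I would build the competitor by \emph{lowering the arch}. For a small parameter $t\in(0,z_j\cdot e_2-r_0)$ put $\tilde z_j:=z_j-t\mathbf{e}_2$, keep the remaining centers, let $\tilde\tau:=\sum_i\bu_i\de^\#_{\tilde z_i}$, and define $\tilde g:=g-t$ on $I$, $\tilde g:=g$ elsewhere (using the lower-semicontinuous representative at the two endpoints of $I$). The inclusion $B_{r_0}(\tilde z_j)\subset\Om_{\tilde g}^\#$ is verified directly: on $I$ both profile and ball descend by $t$, while outside $I$ the unchanged profile sits above the lowered ball; any other $B_{r_0}(z_i)$ overlapping $I$ in $x$-projection is shifted down by $t$ as well, which only adds bookkeeping. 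Requiring $t|I|<|\Om_g|-d$ keeps $|\Om_{\tilde g}|>d$.

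I would then estimate the four contributions to the penalized functional. The penalty $\Lambda\bigl||\Om_h|-d\bigr|$ decreases by exactly $\Lambda t|I|$. The surface energy $\gamma\H^1(\Gamma_h)$ increases by $2\gamma t$: the two vertical segments of height $t$ at the endpoints of $I$ are absorbed into $\Gamma_{\tilde g}$ rather than into $\Sigma_{\tilde g}$ by the lower-semicontinuous choice of $\tilde g$. The fidelity term $\beta\int_0^\ell|h-\bar h|^2\,dx$ changes by at most $C\beta t|I|$ via the expansion $(g-t-\bar h)^2-(g-\bar h)^2=-2t(g-\bar h)+t^2$ together with an a-priori $L^\infty$-bound on $|g-\bar h|$ inherited from the overall energy bound on minimizers of \eqref{minGpen}. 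For the elastic energy I would construct the competitor strain $\tilde H:=H_{g,\tau}|_{\Om_{\tilde g}}+\Delta H$, where $\Delta H$ satisfies $\curl\Delta H=\bu_j\bigl(\ro_{r_0}(\cdot-\tilde z_j)-\ro_{r_0}(\cdot-z_j)\bigr)$ in $\Om_{\tilde g}$ and $\Delta H[\mathbf{e}_1]=0$ on $\{y=0\}$. A first-order Taylor expansion gives $\|\curl\Delta H\|_{L^2}=O(t)$, and an explicit Neumann--Poisson potential construction --- possible since the source is smooth and compactly supported at positive distance from $\{y=0\}$ --- yields $\|\Delta H\|_{L^2}=O(t)$. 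Combining $W(A+B)=W(A)+\C A:B+W(B)$ with boundedness of $\|H_{g,\tau}\|_{L^2}$ then gives
\[
\int_{\Om_{\tilde g}}W((H_{\tilde g,\tilde\tau})_{sym})\,dz\le\int_{\Om_g}W((H_{g,\tau})_{sym})\,dz+Ct.
\]

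Summing, the total change in the penalized functional is at most $t\bigl(K-\Lambda|I|\bigr)$, with $K$ depending only on $\beta$, $\mu$, $\lambda$, $e_0$, $\gamma$, $r_0$ and the overall energy bound of minimizers of \eqref{minGpen}. Setting $\Lambda(\e):=K/c_0(\e)$, any $\Lambda\ge\Lambda(\e)$ produces a strictly negative change for $t$ sufficiently small, contradicting the minimality of $(g,\tau,H_{g,\tau})$. The hard part is the elastic estimate: Lemma~\ref{lm:ellcan} cannot be invoked directly because $\tilde g$ need not be Lipschitz with a uniform positive lower bound, and the correction $\Delta H$ must instead be built by hand --- via a tailor-made potential-theoretic argument that exploits the smoothness and compact support of $\ro_{r_0}$ and the geometric separation between the dislocation balls, the substrate, and the free profile.
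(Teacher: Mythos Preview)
Your strategy is sound and close in spirit to the paper's: both arguments lower the profile over the projection $I$ of the arc, shift the dislocation $z_j$ downward, and compare energies to extract a factor $\Lambda|I|$ that beats the remaining terms. The execution differs in one notable way. The paper performs a \emph{smooth} first variation $g-t\varphi$ with $\varphi\in C^\infty_c(I)$, differentiates at $t=0$, and then needs a \emph{pointwise} bound on $W((H_{g,\tau})_{sym})$ along $\Gamma''$ to control the boundary term $\int_I W((H_{g,\tau})_{sym})(x,g(x))\varphi\,dx$; this is obtained via elliptic regularity for the Lam\'e system up to the smooth arc $\Gamma''$ (\cite[Proposition~8.9]{FuMo}), and the uniformity of that bound when $\H^1(\Gamma')$ stays away from zero is precisely what drives the contradiction. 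Your sharp cut $\tilde g=g-t\chi_I$ sidesteps this: since $\Om_{\tilde g}\subset\Om_g$ you simply drop the nonnegative integrand on the removed strip, and only the cross term with $\Delta H$ survives, controlled by $\|(H_{g,\tau})_{sym}\|_{L^2}\cdot\|\Delta H\|_{L^2}$. That is a genuine simplification, trading boundary regularity for the extra $2\gamma t$ coming from the two vertical jumps.

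Your self-declared ``hard part'' is in fact easy, and the paper shows how: take $\Delta H$ to be the explicit matrix with first column
\[
-(\bu_j\cdot\mathbf{e}_l)\int_0^y\bigl[\ro_{r_0}^\#(x-x_j,s-(y_j-t))-\ro_{r_0}^\#(x-x_j,s-y_j)\bigr]\,ds,\quad l=1,2,
\]
and second column zero. This has the required curl, vanishes near $\{y=0\}$ because $y_j-t>r_0$, is supported in a fixed neighbourhood of $z_j$, and satisfies $\|\Delta H\|_{L^\infty}=O(t)$ by the mean value theorem. No potential theory or appeal to Lemma~\ref{lm:ellcan} is needed. One technical point you flag as ``bookkeeping'' --- other dislocation balls whose $x$-projection meets $I$ --- is handled in the paper by taking $\varphi$ compactly supported in the interior of the projection of a slightly smaller arc $\Gamma''$; in your setup you could likewise shrink $I$ or shift overlapping centres, but this does require a line of justification.
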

\begin{proof}
In order to prove the lemma observe that in $B_{r_0}(z_j)$ we can write $H_{g,\tau}= Dv+K$, where 
 $$
K:=
\left(
\begin{array}{cc}
k_1 & 0\\
k_2 & 0
\end{array}
\right)\,,
$$
with
 
$$
k_l(x,y):=-\sum_{i=1}^k(\bu_i\cdot \mathbf{e}_l)\int_0^y\ro_{r_0}^\#(x-x_i, t-y_i)\, dt \text{ for $l=1,2$,}
$$
where $\ro_{r_0}^\#$ is defined in \eqref{mollifier}, 
and $v\in H^1_\#(\Om_h;\R^2)$ satisfies
$$
\begin{cases}
\Div\, \C E(v)=-\Div\, \C K_{sym} & \text{in $\Om_g$,}\\
\C E(v)[\nu]=- \C K_{sym}[\nu] & \text{on $\Gamma_g$,}\\
v=0& \text{on $\{y=0\}$.}
\end{cases}
$$
Since $K$ and $\Gamma'$ are both smooth, $v$ is smooth in $B_{r_0}(z_j)\cup \Gamma'$. Let $\Gamma''\subset\Gamma'$ be the subarc with the center of $\Gamma'$ and such that $\H^1(\Gamma'')=\frac12\H^1(\Gamma')$. By  elliptic estimates for the Lam\'e system (see for instance \cite[Proposition~8.9]{FuMo}) there exists a constant
$C_1>0$ depending only on $\H^1(\Gamma')$,  $r_0$, the Lam\'e coefficients $\mu$ and $\lambda$, and on $F(\bar h, \bar \s, H_{\bar h, \bar \s})$, such that 
$$
\sup_{\Gamma''}|Dv|\leq C_1\,.
$$
In particular, the constant $C_1=C_1(\Gamma')$ above is uniformly bounded if $\H^1(\Gamma')$ is bounded away from $0$.  In turn, we  obtain
\beq\label{eq:sol1}
\sup_{\Gamma''}|H_{g,\tau}|\leq C_1+C_2\,,
\eeq
where the constant $C_2>0$ depends only on $r_0$.  

Fix  $\vphi\in C^{\infty}_c(I)$, $\vphi\geq 0$,  where $I$ is an open interval contained in the projection of $\Gamma''$  onto the $x$-axis. Since $z_j\cdot e_2>r_0$, for $t>0$ sufficiently small we have that $B_{r_0}(z_j-t\Vert\vphi\Vert_\infty)\subset \Omega_{g-t\vphi}^\#$ and so we can take as admissible competitor the triple $(g-t\vphi, \s_t, H_{t})$, where 
$\tau_t:=\sum_{i\neq j}\bu_i \delta^\#_{z_i}+\bu_j\delta^\#_{z_{j}-t\|\vphi\|_{\infty}\mathbf{e}_2}$, 
$H_t:=H_{g, \tau}+K_t$, where
$$
K_t:=
\left(
\begin{array}{cc}
k_{t,1} & 0\\
k_{t,2} & 0
\end{array}
\right)\,,
$$
with 
$$
k_{t,l}(x,y):=-\biggl(\int_0^y\ro_{r_0}^\#(x-x_j,s-y_j-t\|\vphi\|_{\infty})\, ds-\int_0^y\ro_{r_0}^\#(x-x_j,s-y_j)\, ds\biggr)\bu_j\cdot \mathbf{e}_l\,, \text{ for $l=1,2$.}
$$
By minimality, we have
$$
F(g-t\vphi, \tau_t, H_{t})+\beta\int_0^\ell|g-t\vphi-\bar h|^2\, dx+\Lambda(|\Om_{g-t\vphi}|-d)\geq 
F(g, \tau, H_{g, \tau})+\beta\int_0^\ell|g-\bar h|^2\, dx+\Lambda(|\Om_{g}|-d)\,.
$$
By dividing both sides by $t>0$ and letting $t\to 0^+$, we obtain
\begin{multline}\label{100}
\int_{\Om_g}\C ((H_{g,\tau})_{sym}):\dot{K}_{sym}\, dz+\int_I W((H_{g,\tau})_{sym})(x, g(x))\vphi(x)\, dx\\
-\gamma\int_I\frac{g'\vphi'}{\sqrt{1+g'^2}}\, dx-2\beta\int_I(g-\bar h)\vphi\, dx-\Lambda\int_I\vphi\, dx\geq 0\,,
\end{multline}
where 
$$
\dot{K}_{sym}:=
\left(
\begin{array}{cc}
\dot{k}_{1} & {\dot{k}_{2}}/2\\
{\dot{k}_{2}}/2 & 0
\end{array}
\right),
\qquad \dot{k}_{l}(x,y):=\|\vphi\|_{\infty} \ro_{r_0}(z-z_j)\bu_j\cdot \mathbf{e}_l \text{ for $l=1,2$.}
$$
Since $\Gamma''\subset\pa B_{r_0}(z_j)\cap \Gamma_g$, integrating by parts we get
$$-\gamma\int_I\frac{g'\vphi'}{\sqrt{1+g'^2}}\, dx
\le \frac{\gamma}{r_0}\Vert\varphi\Vert_{\infty}\ell\,.$$
Thus, by taking a sequence $\{\vphi_n\}$ as above  converging pointwise to $1$ in $I$, from \eqref{100} we get that  there exists $C_3>0$ depending only on $r_0$ and the Lam\'e coefficients $\lambda$, $\mu$, such that 
\beq\label{eq:claim1}
\Lambda \H^1(\Gamma')\leq c(r_0)\Lambda \mathcal{L}^1(I) \leq C_3\biggl(\int_{\Om_g}|(H_{g,\tau})_{sym}|\, dz+\ell\sup_{\Gamma''}|H_{g,\tau}|^2 +\frac{\gamma\ell}{r_0}+\beta\int_I|g-\bar h|\, dx\biggr)\,,
\eeq
where we used the fact that $\H^1(\Gamma'')=\frac12\H^1(\Gamma')$.
Now assume by contradiction that there exist $\Lambda_n\to+\infty$ and minimizers 
$(g_n, \tau_n, H_{g_n, \tau_n})$ of \eqref{minGpen}, with $|\Om_{g_n}|>d$, $\tau_n=\sum_{i=1}^k\bu_i\delta^\#_{z_{i, n}}$, and $\Gamma'_n\subset \Gamma_{g_n}\cap \pa B_{r_0}(z_{j,n})$ for some $j\in \{1, \dots, k\}$, with 
$$
\inf_n \H^1(\Gamma'_n)>0\,.
$$
Thus, from \eqref{eq:sol1} we deduce that 
$$
\sup_{\Gamma''_n}|H_{g_n, \tau_n}|\leq C_4\,,
$$
with $C_4$ independent of $n$. 
Recalling \eqref{eq:claim1} and observing that by mininimality 
$$\sup_{n}\biggl(\|(H_{g_n, \tau_n})_{sym}\|_{L^2(\Om_{g_n}; \mathbb{M}^{2\times2})}+\beta\int_0^\ell|g_n-\bar h|^2\, dx\biggr)<+\infty\,,$$ we conclude that 
$$
\Lambda_n\H^1(\Gamma'_n)\leq C
$$
for some constant $C$ independent of $n$,  which is impossible since $\Lambda_n\to+\infty$. 
\end{proof}

\begin{proof}[Proof of Theorem~\ref{sollevamento}]
We fix $\beta$ such that 
\beq\label{eq:beta}
F(\bar h, \bar \s, H_{\bar h,\bar  \s})<\frac{\beta}4 \int_0^b\Bigl|\bar h-\frac{d}b\Bigr|^2\, dx\,.
\eeq
In order to prove the result we will show that any minimizing configuration $(g, \tau, H_{g, \tau} )$ for \eqref{minGpen} satisfies 
the volume constraint $|\Om_g|=d$, provided that $\Lambda$ is sufficiently large. 
We argue by contradiction and consider several cases.

\noindent {\bf Step 1.} 
 If $|\Om_{g}|<d$, then define $h:=g+(d-|\Om_{g}|)/\ell$ and for all $(x,y)\in\Om_{ h}$
$$
{ H}(x,y):=
\begin{cases}
\displaystyle e_0Dv_0(x,y) & \text{if $\displaystyle{0<y<\frac{d-|\Om_{g}|}{\ell}}$}\,, \vspace{4pt} \\
\displaystyle H_{g, \tau}\Bigl(x,y-\frac{d-|\Om_{g}|}{\ell}\Bigr) & \text{if $\displaystyle{y\geq\frac{d-|\Om_{g}|}{\ell}}$}\,,
\end{cases}
$$
where $v_0$ is defined as in \eqref{v0} and ${ \sigma}$ is the dislocation measure obtained by moving in the $e_2$ direction all the centers $z_i$, $i=1,\dots,k$ of $\tau$ by the vector $(d-|\Om_{g}|)\mathbf{e}_2/\ell$.
Then by \eqref{v0},
\begin{align*}
F({ h},{\sigma}, { H})&+\beta\int_0^\ell|h-\bar h|^2\,dx+ \displaystyle  \Lambda\bigl||\Om_{ h}|-d\bigr|-F(g,\tau, H_{g, \tau})-
\beta\int_0^\ell|g-\bar h|^2\, dx-\Lambda\bigl||\Om_{g}|-d\bigr|\\
 & \displaystyle = e_0^2W_0(d-|\Om_{g}|)+\beta\int_0^\ell\frac{d-|\Om_g|}{\ell}\Bigl(2(g-\bar h)+\frac{d-|\Om_g|}{\ell}\Bigr)\, dx-\Lambda(d-|\Om_{g}|)\\
 & \displaystyle \leq e_0^2W_0(d-|\Om_{g}|)-\Lambda(d-|\Om_{g}|)\,,
\end{align*}
where we used the fact that $\int_0^\ell g\,dx=|\Omega_g|<d
=\int_0^\ell \bar h\,dx$. By taking $\Lambda>e_0^2W_0$, we obtain  a contradiction to the minimality of $(g , \tau, H_{g, \tau})$.

\noindent{\bf Step 2.}  If $|\Om_{g}|>d$, we distinguish two cases. Let $y_{max}$ be the maximal height of points in $\Gamma_{g}$ and for all $i=1,\dots,k$ write $z_i=(x_i,y_i)$.

\noindent{\bf Case 1.} If $y_i<y_{max}-r_0$ for all $i=1,\dots,k$,  we truncate  $g$ in such a way that, denoting by $ h$ the resulting function, we still have  $B_{r_0}(z_i)\subset\Om^\#_{ h}$ for all $i$ and $|\Om_{ h}|\geq d$.  
Since $h\le g$,  we  can estimate
\begin{align*}
F( h, \tau, H_{g, \tau})&\displaystyle+\beta\int_0^\ell|h-\bar h|^2\, dx+\Lambda\bigl(|\Om_{ h}|-d\bigr)-F(g,\tau, H_{g, \tau})-\beta\int_0^\ell|g-\bar h|^2\, dx-\Lambda\bigl(|\Om_{g}|-d\bigr)\\
&\displaystyle\leq\beta\int_0^\ell(g-h)(2\bar h-h-g)\, dx-\Lambda\int_0^\ell(g-h)\, dx\\
& \displaystyle \leq \bigl(2\beta\|\bar h\|_\infty-\Lambda\bigr)\int_0^\ell(g-h)\, dx
\leq (2\beta C_0-\Lambda)\int_0^\ell(g-h)\, dx<0\,,
\end{align*}
provided $\Lambda>2\beta C_0$, which would contradict  the minimality of $({\bar h},{\bar\sigma}, {\bar H})$. Note that the constant   $C_0$  bounding $\|\bar h\|_\infty$ from above only depends on $F(\bar h, \bar\s, H_{\bar h, \bar\s})$ (see \eqref{bound h}). 

\noindent{\bf Case 2.} Assume now that there exists $j$ such that    $y_j=y_{max}-r_0$. 
We claim that  for every $i\in\{1,\ldots,k\}$ the intersection 
$\Gamma_g\cap \pa B_{r_0}(z_i)$ is either empty or a (possibly degenerate) connected  arc. Indeed, if this were not true for some $i\in\{1,\ldots,k\}$, we could find
two points $w_1$, $w_2\in \Gamma_g\cap \pa B_{r_0}(z_i)$ such that the graph of $g$ is detached from $\pa B_{r_0}(z_i)$ above the arc $\widehat{w_1w_2}$ connecting $w_1$ and $w_2$ on $\partial B_{r_0}(z_i)$. Denote by $D$ the region bounded by $\widehat{w_1w_2}$ and the  arc on $\Gamma_g$ connecting the two points.  Fix a point $w$ in the interior of 
$\widehat{w_1w_2}$ and consider the tangent to $\pa B_{r_0}(z_i)$ at $w$. Moving this tangent outward in the direction $w-z_i$, we cut out 
a region $D'\subset D$ bounded by this line and $\Gamma_g$ such that $|D'|\leq |\Om_g|-d$. Note that by doing so we get a new profile $\hat g$ such that $\H^1(\Gamma_{\hat g})<\H^1(\Gamma_g)$ and, in turn, 
\begin{equation}\label{case2}
F(\hat g, \tau, H_{g,\tau})<F(g, \tau, H_{g,\tau})\,.
\end{equation}
Therefore, arguing as in the previous step, we contradict the minimality of $(g, \tau, H_{g,\tau})$, provided that $\Lambda$ is chosen as before. Thus, the claim holds.

Set 
$$
	J:=\{j\in\{1,\ldots,k\}:\,y_j=y_{max}-r_0\}\,.
$$
Since $y_{max}\ell\ge|\Om_{g}|>d\geq 2r_0\ell$, we have that $y_{max}-2r_0=:\delta>0$.  Hence, $y_j=r_0+\delta$ for every $j\in J$. Let
\begin{equation}\label{epsilon}
0<\varepsilon<\min\{\delta,\ell\}/k\,. 
\end{equation} 
Let $\Lambda_\varepsilon>0$ be so large that
\begin{equation}\label{lambda large}
\frac{1}{\Lambda}F(\bar h, \bar\s, H_{\bar h, \bar\s})<\varepsilon
\end{equation}
for all $\Lambda>\Lambda_\varepsilon$.
Fix $j\in J$ and assume that $0<x_j<\ell$ (the cases $x_j=0$ and $x_j=\ell$ are similar). By the previous claim, the set $\Gamma_g\cap \pa B_{r_0}(z_j)$ is a (possibly degenerate) connected  arc $\Gamma_j$ of left endpoint $p_j$ and right endpoint $q_j$. 

Since $y_j\ge r_0+\delta$, we may  apply Lemma \ref{lm:claim} to conclude that, choosing a possibly larger $\Lambda_\varepsilon$, then $\mathcal{H}^1(\Gamma_j)<\varepsilon$.
Let $\Pi_2:\mathbb{R}^2\to\mathbb{R}$ be the projection onto the $y$-axis. Then 
$ \mathcal{L}^1(\Pi_2(\Gamma_j)\le  \mathcal{H}^1(\Gamma_j)<\varepsilon$.  Hence, 
\begin{equation}\label{qj}
q_j\cdot e_2\ge y_{max}-\varepsilon= 2r_0+\delta-\varepsilon\,.
\end{equation}
If $q_j$ belongs to $\Gamma_g\cap \pa B_{r_0}(z_{j_1})$ for some $j_1\ne j$, then by \eqref{epsilon} and \eqref{qj}, 
$$y_{j_1}=(z_{j_1}-q_j)\cdot e_2+q_j\cdot e_2\ge -r_0+2r_0+\delta-\varepsilon=r_0+\delta-\varepsilon$$  
Let $q_{j_1}$ be  the right endpoint of the (possibly degenerate) connected  arc $\Gamma_g\cap \pa B_{r_0}(z_{j_1})$. Since $y_{j_1}>r_0$ by Lemma \ref{lm:claim} and \eqref{qj} we obtain as before that the arc $\Gamma_{j_1}$ of endpoints $q_{j}$ and $q_{j_1}$ has length less than $\varepsilon$ and that $q_{j_1}\cdot e_2\ge 2r_0+\delta-2\varepsilon$. 
If $q_{j_1}$ belongs to $\Gamma_g\cap \pa B_{r_0}(z_{j_2})$ for some $j_2\ne j_1$, we continue this process, otherwise we stop and repeat a similar procedure for the left endpoint $p_j$. Let $J_j$ be the set of the indices $i\in\{1,\ldots,k\}$ corresponding to balls selected in this procedure. Note that by construction 
$y_i>r_0$ for every $i\in J_j$, and so
\begin{equation*}\label{vertical balls}
\sum_{j\in J}\sum_{i\in J_j} \mathcal{L}^1(\Pi_2(\Gamma_g\cap \pa B_{r_0}(z_i)))\le \sum_{j\in J}\sum_{i\in J_j} \mathcal{H}^1(\Gamma_g\cap \pa B_{r_0}(z_i))\le k\varepsilon\,.
\end{equation*}
Since the union of all the arcs $\Gamma_g\cap \pa B_{r_0}(z_i)$ is connected and $\Gamma_j$ is one of them this implies that 
\begin{equation}\label{graph above j}
y_{max}-k\varepsilon\le g(x) \le y_{max} 
\end{equation}
for all $x\in (0,\ell)$ such that $(x,g(x))\in \Gamma_g\cap \pa B_{r_0}(z_i)$ for some $i\in J_j$.\newline
Let $\Pi_1:\mathbb{R}^2\to\mathbb{R}$ be the projection onto the $x$-axis. Since 
$$\sum_{j\in J}\sum_{i\in J_j} \mathcal{L}^1(\Pi_1(\Gamma_g\cap \pa B_{r_0}(z_i)))\le \sum_{j\in J}\sum_{i\in J_j} \mathcal{H}^1(\Gamma_g\cap \pa B_{r_0}(z_i))\le k\varepsilon<\ell\,,$$ 
the open set $U:=(0,\ell)\setminus \cup_{j\in J}\cup_{i\in J_j} \Pi_1(\Gamma_g\cap \pa B_{r_0}(z_i))$ is nonempty.

\noindent{\bf Case 2a.} 
Assume that there exists a connected component $I_i$ of $U$ and $s<t\in I_i$  such that $\Gamma_g\cap (s,t)\times \R$ lies strictly above the segment $\gamma$ connecting 
$(s, g^-(s))$ with $(t, g^-(t))$. Let $\nu$ be the unit vector orthogonal to $\gamma$ and pointing upward.  Moving $\gamma$ in the direction of $\nu$, we can choose $\eta>0$ so that  the region $D$ bounded by the segment $\gamma+\eta \nu$ and $\Gamma_g\cap (s,t)\times \R$ satisfies $|D|\leq |\Om_g|-d$ and $D\cap \cup_{i=1}^kB_{r_0}(z_i)=\emptyset$. Then, arguing as in the proof of \eqref{case2} we get a contradiction provided that $\Lambda$ is chosen as before.

\noindent{\bf Case 2b.}
 For every connected component $I_i$ of the set $U$ we have that $g^-$ is a convex function in the interval $I_i$. In this case we claim that there exists a constant $c>0$ independent of $g$ such that
\begin{equation}\label{graph above}
y_{max}-c\varepsilon\le g(x) \le y_{max}\quad\text{for all }x\in (0,\ell)\,. 
\end{equation}
In view of \eqref{graph above j} it suffices to prove \eqref{graph above} in each $I_i$.
Fix $I_i$ and let $a_i$ be its left endpoint. Then the point $(a_i,g(a_i))$ belongs to one of the balls $B_{r_0}(z_l)$ for some $j\in J$ and $l\in J_j$. Let $\theta_i$ be the angle that the oriented segment of endpoints $z_l$ and $(a_i,g(a_i))$ forms with the $x$-axis. By \eqref{graph above j}, we have that $\theta_i\ge \frac{\pi}{4}$ for $\varepsilon$ sufficiently small.
 Since $g$  is a convex function in the interval $I_i$, it lies above the line 
 $$t\mapsto (a_i,g(a_i))+t\biggl(1,-\frac{\cos\theta_i}{\sin\theta_i}\biggr)$$ tangent to the ball 
 $\pa B_{r_0}(z_l)$ at $(a_i,g(a_i))$. Since $\mathcal{H}^1(\Gamma_g\cap \pa B_{r_0}(z_l))\le\varepsilon$, we have that $\cos\theta_i\le \cos (\pi/2-\varepsilon/r_0)=\sin (\varepsilon/r_0)\le \varepsilon/r_0$. Hence, for $t>0$, 
 $$g(a_i)-t\frac{\cos\theta_i}{\sin\theta_i}\ge g(a_i)-\frac{t\sqrt{2}}{r_0}\varepsilon\ge y_{max}-k\varepsilon-\frac{\ell\sqrt{2}}{r_0}\varepsilon\,,$$
where in the last inequality we used \eqref{graph above j}. This proves that \eqref{graph above} holds.
By \eqref{minGpen} we have 
$$
F(g,\tau,H_{g,\tau})+\beta\int_0^b|g-\bar h|^2\, dx+\Lambda\bigl||\Om_g|-d\bigr|\le F(\bar h, \bar\s, H_{\bar h, \bar\s})
$$
 and so by \eqref{lambda large}, $\bigl||\Om_g|-d\bigr|<\varepsilon$. In turn, by \eqref{graph above}, 
$$
d\le y_{max}\ell\le d+(1+c\ell)\varepsilon\,,
$$
which, again by 
\eqref{graph above}, yields
\begin{equation}\label{g max}
-c\varepsilon\le g(x)-\frac{d}{\ell}\le (1+c\ell)\varepsilon/\ell
\end{equation}
for all $x\in(0,\ell)$. It follows that $\|g-d/\ell\|_2\leq c\varepsilon$ for a possibly larger constant $c$ still independent of $g$. Hence,  using the minimality of $(g, \tau, H_{g, \tau})$ and \eqref{eq:beta}, we obtain  
\begin{align*}
\|\bar h- d/\ell\|_2&\leq \|\bar h- g\|_2+\|g-d/\ell\|_2\leq \sqrt{\frac1\beta F(\bar h, \bar \s, H_{\bar h,\bar  \s})}+\|g-d/\ell\|_2\nonumber\\
&<\frac12 \|\bar h- d/\ell\|_2+c\e\,,
\end{align*}
which is a contradiction if we choose $\e$ small enough.
 \end{proof}
Next we show that volume constrained minimizing configurations are also a unilateral minimizers of a simpler penalized  problem.
\begin{theorem}\label{sollevamento2} Let $d>0$ and let $(\bar{h},\bar{\sigma}, {H_{\bar h, \bar\s}})$ be a minimizing configuration
for problem \eqref{minG}. Fix $\Lambda> e_0^2W_0$. Then   $(\bar{h},\bar{\sigma}, {H_{\bar h, \bar\s}})$ is  a minimizer of
\beq\label{minGsemipen}
\min\biggl\{F(h,\sigma,H)+\Lambda\bigl(d-|\Om_h|\bigr):\,(h,\sigma,H)\in X(e_0;\mathbf{B}),\, |\Om_h|\leq  d\biggr\}\,.
\eeq
\end{theorem}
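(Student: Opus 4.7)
The plan is to show directly that for every competitor $(h,\sigma,H)\in X(e_0;\B)$ with $|\Om_h|\le d$ one has
$$
F(\bar h,\bar\sigma,H_{\bar h,\bar\sigma}) \le F(h,\sigma,H) + \Lambda(d-|\Om_h|),
$$
which, since $|\Om_{\bar h}|=d$, is precisely the unilateral minimality required by \eqref{minGsemipen}. The construction is a direct adaptation of the lifting argument used in Step~1 of the proof of Theorem~\ref{sollevamento}: pull the competitor upward by inserting a flat elastic layer of height $\eta:=(d-|\Om_h|)/\ell$ just above the substrate, so as to produce an admissible configuration for the volume-constrained problem \eqref{minG}.

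Concretely, given such $(h,\sigma,H)$ I would set $\tilde h:=h+\eta$, let $\tilde\sigma$ be the measure obtained from $\sigma$ by translating each center $z_i$ to $z_i+\eta\mathbf{e}_2$, and define
$$
\tilde H(x,y):=\begin{cases} e_0\, Dv_0(x,y) & \text{if } 0<y<\eta,\\ H(x,y-\eta) & \text{if } y\ge \eta, \end{cases}
$$
with $v_0$ and $W_0$ as in \eqref{v0}. I would then verify that $(\tilde h,\tilde\sigma,\tilde H)\in X(e_0;\B)$: the balls $B_{r_0}(z_i+\eta\mathbf{e}_2)$ still lie in $\Om_{\tilde h}^\#$ since the shift is purely upward; $\curl\tilde H=\tilde\sigma*\ro_{r_0}$ in $\Om_{\tilde h}$ because $\curl(Dv_0)=0$ in the added strip and the equation is preserved by vertical translation above; and the tangential trace $\tilde H[\mathbf{e}_1]=e_0\mathbf{e}_1$ holds on $\{y=0\}$ by the definition of $v_0$, and is continuous across the internal interface $\{y=\eta\}$ by the boundary condition that $H$ satisfies at its own substrate interface.

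Since vertical translation does not change $\mathcal{H}^1(\Gamma_h)$ or $\mathcal{H}^1(\Sigma_h)$, and since the added strip has area $\ell\eta=d-|\Om_h|$ and constant elastic energy density $W(e_0 E(v_0))=e_0^2 W_0$, one obtains
$$
F(\tilde h,\tilde\sigma,\tilde H) = F(h,\sigma,H) + e_0^2 W_0 (d-|\Om_h|).
$$
Because $|\Om_{\tilde h}|=d$, the triple $(\tilde h,\tilde\sigma,\tilde H)$ is admissible for \eqref{minG}, so the minimality of $(\bar h,\bar\sigma,H_{\bar h,\bar\sigma})$ together with $\Lambda>e_0^2 W_0$ and $d-|\Om_h|\ge 0$ yields
$$
F(\bar h,\bar\sigma,H_{\bar h,\bar\sigma}) \le F(\tilde h,\tilde\sigma,\tilde H) \le F(h,\sigma,H) + \Lambda(d-|\Om_h|),
$$
which is the desired inequality.

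No genuine obstacle arises here. Unlike Theorem~\ref{sollevamento}, where the case $|\Om_g|>d$ forced the delicate truncation argument around the dislocation cores, the semi-penalized problem \eqref{minGsemipen} only allows $|\Om_h|\le d$, so lifting upward never collides with the balls $B_{r_0}(z_i)$ and the volume gap can always be absorbed by a single flat insertion. The only items requiring (routine) care are the two tangential-trace matchings for $\tilde H$ and the identity $W(e_0 E(v_0))=e_0^2 W_0$, both of which follow directly from \eqref{v0} and the definition of $W$ in \eqref{canonico}.
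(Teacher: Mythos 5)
Your proof is correct and follows essentially the same route as the paper: the paper's proof of this theorem is exactly Step~1 of the proof of Theorem~\ref{sollevamento} with $\beta=0$, i.e.\ the same lifting construction (insert a flat strip of height $(d-|\Om_h|)/\ell$ filled with $e_0Dv_0$, translate the dislocation centers upward, and use $W(e_0E(v_0))=e_0^2W_0$ together with $\Lambda>e_0^2W_0$). The only difference is organizational: you compare an arbitrary competitor directly with $(\bar h,\bar\sigma,H_{\bar h,\bar\sigma})$ rather than arguing by contradiction on a minimizer of the penalized problem, which is a harmless (indeed slightly cleaner) rephrasing.
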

\begin{proof}
The proof is similar to the one of Step 1 of the 
proof of Theorem~\ref{sollevamento}, with $\beta=0$.
\end{proof}

The next lemma is proved in \cite[Lemma~6.5]{FuMo} and will be used to prove the interior ball condition stated in Lemma~\ref{pallaint} below. 
\begin{lemma}\label{crucialestimate}
Let $k\in AP(0,\ell)$ be nonnegative, let $B_\ro(z_0)$ be a ball such that $B_\ro(z_0)\subset 
\{(x,y): x\in (0,\ell)\text{ and } y<k(x)\}$, and   let $z_1=(x_1,y_1)$ and $z_2=(x_2,y_2)$ be points in
 $\pa B_\ro(z_0)\cap (\Gamma_k\cup\Sigma_k)$. Let $\gamma$ be the shortest arc on $\pa B_\ro(z_0)$ connecting
 $z_1$ and $z_2$ (any of the two possible arcs if $z_1$ and $z_2$ are antipodal) and 
 let $\gamma'$ be the arc on $\Gamma_k\cup\Sigma_k$ connecting $z_1$ and $z_2$.
 Then 
 $$
 \H^1(\gamma')-\H^1(\gamma)\geq \frac1\ro |D|\,,
 $$
 where $D$ is the region enclosed by $\gamma\cup\gamma'$.
\end{lemma}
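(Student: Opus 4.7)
The plan is to reduce the geometric inequality to a one-dimensional variational inequality via tubular coordinates around the arc $\gamma$, and then invoke Wirtinger's inequality on an interval of length at most $\pi$, which is precisely what the shortest-arc hypothesis provides.

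First, I would establish that $D$ is star-shaped with respect to $z_0$: every ray emanating from $z_0$ and meeting the angular sector spanned by $\gamma$ intersects $\gamma'$ in a unique point. The starting observation is that the containment $B_\ro(z_0)\subset\{y<k(x)\}$ forces both $z_1$ and $z_2$ to lie on the upper half of $\partial B_\ro(z_0)$: at each $x$ in the horizontal projection of the ball, the top of the ball must lie below $k(x)$, which rules out boundary intersections on the lower half. Combined with the graph structure of $k$ (which makes $\gamma'$ locally a graph in $x$) and the minimality of $\gamma$, this gives that $\gamma'$ bows strictly outward from the ball and admits a polar representation $r=R(\theta)$ on $[\theta_1,\theta_2]$ with $R(\theta_j)=\ro$, $R(\theta)\geq\ro$, and angular span $\alpha:=\theta_2-\theta_1\leq\pi$.

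In these coordinates,
\[
\H^1(\gamma)=\ro\alpha,\qquad \H^1(\gamma')=\int_{\theta_1}^{\theta_2}\!\sqrt{R'^2+R^2}\,d\theta,\qquad |D|=\int_{\theta_1}^{\theta_2}\!\frac{R^2-\ro^2}{2}\,d\theta,
\]
and, using the algebraic identity $(R^2+\ro^2)/(2\ro)=R+(R-\ro)^2/(2\ro)$, the claim $\H^1(\gamma')-\H^1(\gamma)\geq |D|/\ro$ rewrites equivalently as
\[
\int_{\theta_1}^{\theta_2}\!\Big[\sqrt{R'^2+R^2}-R\Big]\,d\theta\;\geq\;\int_{\theta_1}^{\theta_2}\!\frac{(R-\ro)^2}{2\ro}\,d\theta.
\]
Setting $u=R-\ro\geq 0$ with $u(\theta_1)=u(\theta_2)=0$, a second-order expansion of the integrand on the left around $u\equiv 0$ reduces this, at the linearized level, to $\int u'^2\,d\theta\geq \int u^2\,d\theta$, which is Wirtinger's inequality on an interval of length at most $\pi$ with zero boundary values. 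This is exactly where the shortest-arc hypothesis enters. For the full nonlinear inequality I would proceed either by a calibration argument on the one-dimensional Lagrangian, or by a monotone deflation interpolation $R_t:=(1-t)R+t\ro$, controlling the sign of $(d/dt)[\H^1(\gamma'_t)-|D_t|/\ro]$ by combining the star-shape structure with the boundary values $R(\theta_j)=\ro$.

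The main obstacle is the nonlinear closure of the reduced one-dimensional inequality: the pointwise estimate $\sqrt{R'^2+R^2}-R\geq (R-\ro)^2/(2\ro)$ fails whenever $R$ is significantly larger than $\ro$, so the argument must be genuinely integral and must jointly use the Dirichlet conditions $R(\theta_j)=\ro$ and the length restriction $\alpha\leq\pi$. A secondary delicate point is the rigorous justification of star-shapedness of $D$ from $z_0$, which requires handling with care possible vertical-crack segments of $\gamma'$ in $\Sigma_k$ and excluding re-entrances of radial rays into $D$; if needed, this can be bypassed by partitioning $D$ into star-shaped sub-regions and summing the contributions.
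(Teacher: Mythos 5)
You have identified the right linearization (Wirtinger on an interval of length at most $\pi$), but the reduction you propose is fatally lossy: the one-dimensional polar inequality
\[
\int_{\theta_1}^{\theta_2}\sqrt{R'^2+R^2}\,d\theta\;\geq\;\int_{\theta_1}^{\theta_2}\frac{R^2+\ro^2}{2\ro}\,d\theta
\]
is \emph{false} under the only hypotheses your reduction retains ($R\geq\ro$, $R(\theta_1)=R(\theta_2)=\ro$, $\theta_2-\theta_1\leq\pi$). Take $\ro=1$, $\theta\in[0,\pi]$, $R=1+c\sin\theta$: then $R'^2+R^2=1+c^2+2c\sin\theta\leq(1+c)^2$, so the left side is at most $\pi(1+c)$, while the right side equals $\pi+2c+\pi c^2/4$, which is larger as soon as $c>4(\pi-2)/\pi\approx 1.45$. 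Consequently no calibration or deflation argument can close the nonlinear step; the statement you are trying to prove after the reduction is simply not true. The information you have discarded is essential: $\gamma'$ is the portion of the extended graph $\Gamma_k\cup\Sigma_k$ over the interval $[x_1,x_2]\subset[x_0-\ro,x_0+\ro]$, so it is confined to a vertical strip of width at most $2\ro$ and has no overhangs, whereas the curve $r=1+c\sin\theta$ bulges far outside that strip. A secondary (but also genuine) problem is the star-shapedness claim itself: a Cartesian graph with a steep spike can meet a ray from $z_0$ many times, so $\gamma'$ need not admit a polar representation, and the proposed fix by partitioning $D$ into star-shaped pieces does not obviously recombine, since neither the lengths nor the arcs are additive across the artificial internal boundaries.

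For the record, the paper does not prove this lemma at all; it is quoted from \cite[Lemma~6.5]{FuMo}. The short correct route stays in Cartesian coordinates and uses precisely the structure your reduction drops. Since $z_1,z_2$ lie on the closed upper half of $\pa B_\ro(z_0)$ and $\gamma$ is the shorter arc, $\gamma$ is the graph over $I:=[x_1,x_2]$ of the concave function $c(x):=y_0+\sqrt{\ro^2-(x-x_0\cdot\mathbf{e}_1)^2}$, while $\gamma'$ is the (extended) graph of $k\geq c$ over the same $I$ with $k=c$ at $\pa I$, and $|D|=\int_I(k-c)\,dx$. Convexity of $t\mapsto\sqrt{1+t^2}$ gives pointwise
\[
\sqrt{1+k'^2}\;\geq\;\sqrt{1+c'^2}+\frac{c'}{\sqrt{1+c'^2}}\,(k'-c')\,,
\]
and integrating over $I$, integrating by parts (the boundary terms vanish because $k=c$ on $\pa I$, and the inequality survives the $BV$ structure of $k$ since $|c'|/\sqrt{1+c'^2}<1$ controls the vertical segments), and using that $-\bigl(c'/\sqrt{1+c'^2}\bigr)'=1/\ro$ is the curvature of the circle, one gets $\H^1(\gamma')-\H^1(\gamma)\geq\frac1\ro\int_I(k-c)\,dx=\frac{|D|}{\ro}$. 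Note that the shortest-arc hypothesis enters here not through a spectral gap but by guaranteeing that $\gamma$ is the upper, graph-representable, concave arc.
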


\begin{lemma}\label{pallaint}
Let $\Lambda>0$ and let  $(g,\tau, H_{g, \tau})\in X(e_0;\B)$ be a minimizing configuration for  the problem 
\eqref{minGsemipen}.
 If  $\ro<\min\{1/\Lambda, r_0\}$, then  for all
 $z\in \Gamma_g\cup\Sigma_g$ there exists a ball $B_\ro(z_0)\subset\Om^\#_g\cup\big(
  \R\times (-\infty, 0]\big)$ such that $\pa B_\ro(z_0)\cap( \Gamma_g\cup\Sigma_g)=\{z\}$.
\end{lemma}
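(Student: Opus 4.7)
I would argue by contradiction, assuming that at some $z\in\Gamma_g\cup\Sigma_g$ no such interior ball exists, and producing a competitor $(\hat g,\tau,H_{\hat g,\tau})\in X(e_0;\mathbf{B})$ with $|\Om_{\hat g}|\le d$ that strictly lowers the penalized functional in \eqref{minGsemipen}. The strategy is the classical ``carve a sliver'' argument of Asaro--Grinfeld--Tiller type, here adapted to respect the prescribed dislocation cores $\bar B_{r_0}(z_i)\subset\Om_g^\#$.

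The first step is purely geometric: produce a ball $B_{r^*}(z_0)\subset\Om_g^\#\cup(\R\times(-\infty,0])$ of some radius $r^*\le\rho$ whose boundary touches $\Gamma_g\cup\Sigma_g$ at $z$ and at some second point $z'\neq z$. I would take $r^*$ to be the supremum of the radii $r\le\rho$ for which one can fit a ball of radius $r$ in $\Om_g^\#\cup(\R\times(-\infty,0])$ tangent to $\Gamma_g\cup\Sigma_g$ only at $z$. The assumed failure of the conclusion forces $r^*$ itself not to be admissible in this sense, and a standard continuity/compactness argument (letting $r\nearrow r^*$) then yields the desired two-contact ball.

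Let $\eta\subset\partial B_{r^*}(z_0)$ be the shortest arc joining $z$ and $z'$, $\eta'$ the corresponding arc of $\Gamma_g\cup\Sigma_g$, and $D$ the region they enclose. Lemma~\ref{crucialestimate} then gives
$$\H^1(\eta')-\H^1(\eta)\ge \frac{|D|}{r^*}\ge \frac{|D|}{\rho}>\Lambda|D|,$$
where the last inequality uses the hypothesis $\rho<1/\Lambda$. I would define the competitor $\hat g\in AP(0,\ell)$ to be the profile of the open set $\Om_g\setminus \bar D$, so that $|\Om_{\hat g}|=|\Om_g|-|D|\le d$ and the surface part of $F$ decreases by at least $\gamma|D|/r^*$ (the extra factor $2$ in front of $\H^1(\Sigma_h)$ only strengthens the inequality when $\eta'$ meets $\Sigma_g$). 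The hypothesis $\rho<r_0$ is then used to check that the sliver $D$, which is contained in a small neighborhood of $\partial B_{r^*}(z_0)$, is disjoint from every dislocation core $\bar B_{r_0}(z_i)$, so that the same measure $\tau$ is still an element of $\md(\Om_{\hat g};\mathbf{B})$.

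For the elastic part, I would use $\tilde H:=H_{g,\tau}|_{\Om_{\hat g}}$ as a test field in the minimization problem defining $H_{\hat g,\tau}$: since $\Om_{\hat g}\subset\Om_g$ and $\{y=0\}\cap\partial\Om_{\hat g}=\{y=0\}\cap\partial\Om_g$, both the constraint $\curl\tilde H=\tau\ast\ro_{r_0}$ and the Dirichlet condition $\tilde H[\mathbf{e}_1]=e_0\mathbf{e}_1$ are preserved by restriction, whence
$$\int_{\Om_{\hat g}}W((H_{\hat g,\tau})_{sym})\,dz\le\int_{\Om_{\hat g}}W((H_{g,\tau})_{sym})\,dz\le\int_{\Om_g}W((H_{g,\tau})_{sym})\,dz.$$
Collecting the elastic, surface and volume-penalty changes shows that $(\hat g,\tau,H_{\hat g,\tau})$ strictly decreases the value of \eqref{minGsemipen}, contradicting the minimality of $(g,\tau,H_{g,\tau})$. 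I expect the main obstacle to be the geometric verification that $D\cap\bar B_{r_0}(z_i)=\emptyset$ for every $i$ -- that is, that the bound $\rho<r_0$ really does keep the dislocation cores outside the carved sliver -- together with the case $z\in\Sigma_g$, where the topology of the extended graph near a vertical crack makes both the construction of the second tangency and the definition of $\hat g$ somewhat more delicate.
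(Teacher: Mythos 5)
Your energy comparison is exactly the paper's: given a ball $B_\ro(z_0)\subset\Om^\#_g\cup(\R\times(-\infty,0])$ with $\ro<\min\{1/\Lambda,r_0\}$ whose boundary meets $\widetilde\Gamma_g=\Gamma_g\cup\Sigma_g$ in two points, one cuts out the region $D$ enclosed by the circular arc and the corresponding arc of $\widetilde\Gamma_g$, invokes Lemma~\ref{crucialestimate} to get $\H^1(\gamma')-\H^1(\gamma)\ge|D|/\ro>\Lambda|D|$, uses $\ro<r_0$ to keep $D$ disjoint from the dislocation cores (so that $\tau$ remains admissible for the truncated profile), and takes the restriction of $H_{g,\tau}$ as competitor strain field. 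All of this coincides with the proof in the paper, which likewise leaves the verification that $D\cap\bigcup_iB_{r_0}(z_i)=\emptyset$ and the bookkeeping of the factor $2$ on $\Sigma$ at the level of a remark.

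The genuine gap is in how you produce the two-contact ball, i.e.\ in passing from ``no ball of radius $\ro$ inside the domain touches $\widetilde\Gamma_g$ twice'' to the pointwise interior ball condition stated in the lemma. The paper delegates this purely geometric implication to \cite[Lemma 2]{CL} and \cite[Proposition 3.3, Step 2]{FFLM}; your replacement --- take $r^*$ to be the supremum of the radii $r\le\ro$ for which some ball touches $\widetilde\Gamma_g$ only at $z$, and let $r\nearrow r^*$ --- fails as stated. A priori $g$ is only a lower semicontinuous $BV$ function, so $\widetilde\Gamma_g$ may have an outward corner (a peak where $\Om_g$ has opening angle less than $\pi$) or an upward cusp at $z$; at such a point \emph{no} interior ball of \emph{any} radius has $z$ in its closure, your set of admissible radii is empty, and the limiting procedure yields nothing. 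There the two-contact ball must be exhibited directly (a ball of radius $\ro$ wedged into the corner, tangent to both sides). Moreover, even when the set is nonempty, the failure of the conclusion only tells you that $\ro$ itself is inadmissible, not that $r^*$ is: if $r^*<\ro$ is attained you still need a separate ``growing the ball'' argument to extract the second contact point. These are precisely the issues the cited references resolve; either quote them, as the paper does, or carry out that geometric step in full.
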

\begin{proof}
Fix $\ro<\min\{r_0, 1/\Lambda\}$. We argue by contradiction and assume that there exists $B_{\ro}(z_0)\subset\Om^\#_g\cup \bigl(\R\times (-\infty, 0]\bigr)$ touching $\widetilde \Gamma_g=\Gamma_g\cup\Sigma_g$ in at least two points $w_1=(s_1, t_1)$, $w_2=(s_2,t_2)\in S^+_\ro(z_0)$, where 
$S^+_\ro(z_0)$ denotes the upper half of $\pa B_{r_0}(z_0)$.  Consider the region $D$ bounded by the arc $\gamma$ on $S^+_\ro(z_0)$ connecting $w_1$ and $w_2$ and $\widetilde \Gamma_g$. Since $\ro<r_0$, necessarily 
$D\cap \cup_{i=1}^kB_{r_0}(z_i)=\emptyset$. 
 Hence  we may modify $g$ by replacing it with the function $\tilde g$ which coincides with $g$ in 
$[0,\ell)\setminus(s_1,s_2)$ and whose graph on $(s_1,s_2)$ is given by $\gamma$. Denote 
by $\gamma'$ the arc on $\widetilde \Gamma_g$ connecting $w_1$ and $w_2$. Then
we have
$$
F(\tilde g, v)+\Lambda\bigl(d-|\Om_{\tilde g}|\bigr)-F( g, v)-\Lambda\bigl(d-|\Om_{ g}|\bigr)
\leq \H^1(\gamma)-\H^1(\gamma')+\Lambda|D|<0\,,
$$
where the last inequality is a consequence of  Lemma~\ref{crucialestimate} and the 
fact that  $\ro<1/\Lambda$. This contradicts the minimality of $(g,\tau, H_{g,\tau})$. The conclusion of the lemma follows arguing as in \cite[Lemma 2]{CL} or \cite[Proposition 3.3, Step 2]{FFLM}.
\end{proof}

Theorem~\ref{sollevamento} will be used to study the regularity of those profiles for which the function $h^-$ defined in \eqref{piuomeno} is not flat. Note the assumption that $h^-$ is flat does not exclude a priori the presence of vertical cuts (see \eqref{Sigma_g}). This possibility is ruled out by the next result.

\begin{theorem}\label{th:regflat}
Let $(\bar h,\bar \sigma,H_{\bar h, \bar\sigma})$ be a minimizing configuration of problem \eqref{minG} such that $\bar h^-$ is constant. Then $\Sigma_{\bar h}=\emptyset$.
\end{theorem}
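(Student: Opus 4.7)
The plan is to argue by contradiction: suppose $\Sigma_{\bar h}\neq\emptyset$ and construct an admissible competitor with strictly smaller energy. Under the flatness assumption $\bar h^-\equiv c$, lower semicontinuity of $\bar h$ forces $\bar h\leq c$ everywhere, and the BV condition combined with $\min(\bar h(x-),\bar h(x+))=c$ forces $\bar h(x-)=\bar h(x+)=c$ at every $x$. Consequently $\bar h$ equals $c$ off a countable set of ``crack points'' where $\bar h(x)<c$, and the volume constraint fixes $c=d/\ell$. In particular $\Gamma_{\bar h}=[0,\ell)\times\{c\}$ has length exactly $\ell$, while $\H^1(\Sigma_{\bar h})=\sum_{x:\,\bar h(x)<c}(c-\bar h(x))>0$.

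Pick a crack point $x_0$ with $\bar h(x_0)<c$ and set $\tilde h:=\bar h$ on $[0,\ell)\setminus\{x_0\}$, $\tilde h(x_0):=c$. Then $\tilde h\in AP(0,\ell)$, $|\Om_{\tilde h}|=d$, and since $\tilde h\geq\bar h$ the inclusion $\Om_{\bar h}^\#\subset\Om_{\tilde h}^\#$ preserves $B_{r_0}(z_i)\subset\Om_{\tilde h}^\#$ for every dislocation center. Because the lateral limits at $x_0$ are unchanged, $\Gamma_{\tilde h}=\Gamma_{\bar h}$; meanwhile $\Sigma_{\tilde h}$ loses the segment $\{x_0\}\times[\bar h(x_0),c]$, yielding a strict decrease $2\gamma(c-\bar h(x_0))>0$ in the surface part of the energy.

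It remains to choose a strain field $\tilde H$ on $\Om_{\tilde h}$ making $(\tilde h,\bar\sigma,\tilde H)$ admissible without inflating the elastic energy. Since $\Om_{\bar h}\subset\Om_{\tilde h}$ differ only by the measure-zero segment $\{x_0\}\times[\bar h(x_0),c)$, the natural candidate is to extend $H_{\bar h,\bar\sigma}$ across this segment by arbitrary values: the extension lies in $L^2_{\mathrm{loc}}(\Om_{\tilde h};\M)$ and has unchanged elastic energy. The distributional condition $\curl\tilde H=\bar\sigma*\ro_{r_0}$ on the enlarged domain reduces to ruling out Dirac contributions supported on the segment, which correspond to the jumps $[H_{12}]$ and $[H_{22}]$ of the traces of $H_{\bar h,\bar\sigma}$ on the two faces of the slit. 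The Neumann identity $\mathbb{C}(H_{\bar h,\bar\sigma})_{sym}\mathbf{e}_1=0$ on each face reads $(2\mu+\lambda)H_{11}+\lambda H_{22}=0$ and $H_{12}+H_{21}=0$, and the vanishing of $[H_{12}],[H_{22}]$ together with ellipticity \eqref{lame} then forces $[H_{11}]=[H_{21}]=0$ as well. Granted this continuity, $\tilde H$ is admissible and $F(\tilde h,\bar\sigma,\tilde H)<F(\bar h,\bar\sigma,H_{\bar h,\bar\sigma})$, contradicting minimality.

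The main obstacle is precisely ruling out a nontrivial jump of $H_{\bar h,\bar\sigma}$ across the slit: a minimizer on a slit domain may a priori carry such a jump (the elastic signature of a free crack face), in which case the naive extension is not admissible on the filled-in domain. The decisive role of the hypothesis $\bar h^-\equiv c$ is that the slit is entirely internal to the rectangle $\Om_{\tilde h}$ and is attached only to the smooth horizontal free boundary $\{y=c\}$, with the mollified dislocation source $\bar\sigma*\ro_{r_0}$ smooth and vanishing in a tubular neighborhood of the slit (since each core ball $B_{r_0}(z_i)$ is disjoint from the crack). Closing this gap through a careful distributional argument combining the elliptic PDE for $H_{\bar h,\bar\sigma}$ across the slit with the Neumann conditions on each face is the main technical content of the proof.
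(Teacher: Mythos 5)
There is a genuine gap, and it sits exactly where you place it: the admissibility of the extended field $\tilde H$ across the filled-in slit. Your reduction of the geometry (that $\bar h\le c=d/\ell$, that $\bar h(x\pm)=c$ everywhere, that filling one crack point strictly decreases the surface term by $2\gamma(c-\bar h(x_0))$ without disturbing the core balls) is correct, but the competitor is only useful if $\curl\tilde H=\bar\sigma*\ro_{r_0}$ holds distributionally on $\Om_{\tilde h}$, which requires the traces of $H_{12}$ and $H_{22}$ from the two sides of the slit to agree. You never prove this; the argument you sketch \emph{assumes} $[H_{12}]=[H_{22}]=0$ and only deduces the vanishing of the other two jumps, so the crucial claim is left entirely open. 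Worse, it is false in general: for the minimizing field on a slit domain the crack faces are traction-free, whereas the crack-free equilibrium (already for $\bar\sigma=0$, where $H=e_0Dv_0$ with $v_0$ as in \eqref{v0}) carries the nonzero normal traction $\C E(v_0)[\mathbf{e}_1]\cdot\mathbf{e}_1=4\mu(\mu+\lambda)e_0/(2\mu+\lambda)\neq0$ on vertical lines. Hence the minimizer on the slit domain genuinely differs from any curl-compatible field on the filled domain and generically opens the crack, i.e.\ carries a nontrivial jump; that is precisely the elastic energy relaxation that makes cracks competitive. To repair the argument you would have to solve a new elastic problem on $\Om_{\tilde h}$ and show that the resulting \emph{increase} in elastic energy is beaten by the surface gain $2\gamma(c-\bar h(x_0))$, for which you give no estimate and which is not obvious for long cracks or large $e_0$.

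The paper avoids this difficulty altogether by going in the opposite direction: instead of adding material (which forces a delicate field extension), it removes material, for which the restriction of $H_{\bar h,\bar\sigma}$ is automatically admissible and the elastic energy can only decrease. Concretely, Theorem~\ref{sollevamento2} upgrades the minimizer to a minimizer of the volume-penalized problem \eqref{minGsemipen}, Lemma~\ref{pallaint} then yields an interior ball condition of radius $\ro<\min\{r_0,1/(e_0^2W_0)\}$ at every point of $\Gamma_{\bar h}\cup\Sigma_{\bar h}$, and one simply observes that at the corner $(x_0,c)$ where a vertical cut meets the flat graph $\{y=c\}$ no such interior ball can exist (it would have to be tangent to $\{y=c\}$ from below at $(x_0,c)$ and would then contain points of the slit). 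If you want to keep a self-contained argument, you should prove a version of that cutting lemma rather than attempt the trace-matching step.
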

\begin{proof}
By Theorem~\ref{sollevamento2} and Lemma~\ref{pallaint} we deduce that  $\Om^\#_{\bar h}\cup\bigl(\R\times (-\infty, 0]\bigr)$ satisfies an interior ball condition with $\ro<\min\{r_0, 1/(e_0^2W_0)\}$. If $\Sigma_{\bar h}$ were nonempty, then each vertical cut would meet the (horizontal) graph of $\bar h^-$ perpendicularly, but this would prevent the existence of an interior sphere at the corner.  Hence, $\Sigma_{\bar h}=\emptyset$ and the proof is complete.
\end{proof}

We now recall some regularity estimates, based on the theory developed by Grisvard (\cite{Grisvard}),  proved in \cite{FFLM} for solutions of the Lam\'e system in planar  domains with a corner.

Let $\Omega$ be a bounded open set in $\mathbb{R}^{2}$ whose boundary can be
decomposed in three curves
\[
\partial\Omega=\Gamma_{1}\cup\Gamma_{2}\cup\Gamma_{3},
\]
where $\Gamma_{1}$ and $\Gamma_{2}$ are two segments meeting at the origin
with an (internal) angle $\omega\in\left(\pi,2\pi\right)  $ and $\Gamma_{3}$
is a smooth curve joining the two remaining endpoints of $\Gamma_{1}$ and
$\Gamma_{2}$ in a smooth way and not passing through the origin.
We shall refer to such an open set as a {\em regular domain with corner angle $\omega$}.

The next result is a particular case of \cite[Th\'eor\`eme I]{Grisvard}. 

\begin{theorem}
\label{theorem grisvard}Let $\Omega\subset\mathbb{R}^{2}$ be  a regular domain with corner angle $\omega\in (\pi, 2\pi)$ and  let
$w\in H^1(\Omega;\mathbb{R}^{2})  $ be a weak solution of the Neumann problem
\beq\label{grisvard1}
\begin{cases}
\Div\, \C E(w)= f & \text{\quad
in }\Omega,\\
\C E(w)[\nu]= g & \quad\text{on }\partial\Omega,
\end{cases}
\eeq
where $f\in L^{p}\left(  \Omega;\mathbb{R}^{2}\right)$ and $g\in W^{1-1/p, p}(\pa \Om\setminus\{0\}; \R^2)$, $p\in (1,2)$. Then,  there exist numbers
$c_\alpha$, $c'_\alpha$ such that 
$w$ may be decomposed as
$$
w=w_{\operatorname*{reg}}+\sum_{\alpha}c_{\alpha}%
\mathcal{S}_{\alpha}+\sum_{\alpha}c'_\alpha\frac{\pa }{\pa\alpha} \mathcal{S}_{\alpha}, \label{decomposition}%
$$
where $w_{reg}\in W^{2,p}(\Om; \R^2)$ and in the first sum $\alpha$ ranges among all complex numbers with $\operatorname*{Re} \alpha\in\left(  0,\frac{2(p-1)}{p}\right)$ which are solutions of the equation
\begin{equation}
\sin^{2}\alpha\omega=\alpha^{2}\sin^{2}\omega, \label{trascendente}%
\end{equation}
and in the second sum $\alpha$ ranges only among solutions with multiplicity two  of \eqref{trascendente} in the same strip.
Moreover, the functions $\mathcal{S}_{\alpha}$ are independent of ${f}$ and in
polar coordinates
\[
\mathcal{S}_{\alpha}\left(  r,\theta\right)  =r^{\alpha}{g}_{\alpha
}\left( \theta\right)\,,
\]
for some smooth function $g_\alpha$. The above decomposition holds provided that \eqref{trascendente} has no solutions with real part equal to
$\frac{2(p-1)}{p}$. 

\end{theorem}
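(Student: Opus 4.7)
The plan is to follow the Mellin-transform / Kondrat'ev--Grisvard strategy tailored to the Lam\'e system. First I would localize near the corner: a cutoff reduces the statement to the case of an infinite wedge $\Omega_\omega := \{(r\cos\theta, r\sin\theta) : r>0,\ 0<\theta<\omega\}$ with homogeneous Neumann data on the two edges, modulo an error that is smooth and compactly supported away from the vertex. Standard elliptic regularity applied to the Lam\'e operator $\mathcal{L}w:=\Div\,\C E(w)$ on smooth pieces of $\partial\Omega$ handles everything far from the origin, so the only issue is the behavior at $r=0$.

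Next I would pass to polar coordinates and write $w(r,\theta)$, $f(r,\theta)$, $g(r,\theta)$ in a form suitable for a Mellin transform in $r$. Writing $w(r,\theta) = \frac{1}{2\pi i}\int_{\operatorname{Re}\zeta = \eta} r^{-\zeta}\, \widetilde w(\zeta,\theta)\, d\zeta$ turns $\mathcal{L}w=f$, $\C E(w)[\nu]=g$ on the two edges into a parameter-dependent two-point boundary value problem on $(0,\omega)$ for $\widetilde w(\zeta,\cdot)$. The operator pencil associated with this ODE system is meromorphically invertible in $\zeta$ except at a discrete set of eigenvalues; a direct computation of its determinant yields precisely the transcendental equation $\sin^2(\alpha\omega)=\alpha^2\sin^2\omega$ in \eqref{trascendente}. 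The index shift to $\alpha = -\zeta$ gives exponents $r^\alpha$ in the eventual singular terms.

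With this in place, the proof becomes a residue argument. Choose the Mellin contour at $\operatorname{Re}\zeta = -\frac{2(p-1)}{p}$ on the right (this is where the $W^{2,p}$ scale sits for the remainder, by the Calder\'on--Zygmund theory for the Lam\'e system), and move it to $\operatorname{Re}\zeta < 0$ corresponding to the natural $L^p$ shift. The contour crosses the eigenvalues $\alpha$ with $\operatorname{Re}\alpha \in (0, \tfrac{2(p-1)}{p})$; each simple pole contributes a residue of the form $c_\alpha r^\alpha g_\alpha(\theta)$, that is, a multiple of $\mathcal{S}_\alpha$, while a double pole contributes also a logarithmic term, which after differentiation in $\alpha$ is identified with $\partial_\alpha\mathcal{S}_\alpha$; this accounts for the two sums in the statement. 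Assuming no pole lies exactly on $\operatorname{Re}\zeta = -\frac{2(p-1)}{p}$, the shifted contour integral produces a remainder $w_{\rm reg}$ whose Mellin transform satisfies the a priori estimate in the correct strip, hence $w_{\rm reg}\in W^{2,p}(\Omega;\R^2)$.

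The main obstacle I expect is the careful identification of the angular profiles $g_\alpha$ and the correct combinatorics at eigenvalues of multiplicity two: one must verify that the Jordan chains of the operator pencil produce exactly $\mathcal{S}_\alpha$ and $\partial_\alpha\mathcal{S}_\alpha$, and that the coefficients $c_\alpha, c'_\alpha$ depend continuously on the data $(f,g)$. A secondary technical point is establishing the $W^{2,p}$ bound on $w_{\rm reg}$ uniformly in the strip, which requires a resolvent estimate for the Mellin-transformed pencil away from its poles; this is where the hypothesis that $\operatorname{Re}\zeta = -\tfrac{2(p-1)}{p}$ is pole-free is used. Since this is precisely \cite[Th\'eor\`eme I]{Grisvard} specialized to the Lam\'e operator on a wedge, the proof ultimately reduces to checking that Grisvard's general framework applies and that the characteristic equation for the Lam\'e pencil on a wedge is indeed \eqref{trascendente}.
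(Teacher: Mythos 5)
The paper does not actually prove this statement: it is quoted verbatim as a particular case of \cite[Th\'eor\`eme I]{Grisvard}, so there is no internal proof to compare yours against. Your sketch is a correct outline of the standard Kondrat'ev--Grisvard route: localization to the infinite wedge, Mellin transform in $r$, meromorphic inversion of the resulting operator pencil on $(0,\omega)$, and a residue computation as the contour sweeps the strip corresponding to $\operatorname{Re}\alpha\in\bigl(0,\tfrac{2(p-1)}{p}\bigr)$, with double roots of the characteristic equation producing the $\frac{\pa}{\pa\alpha}\mathcal{S}_\alpha$ terms through logarithmic contributions from double poles. You also correctly identify that the traction-free Lam\'e pencil on a wedge has characteristic determinant $\sin^2(\alpha\omega)-\alpha^2\sin^2\omega$, and that the hypothesis that no root has real part $\tfrac{2(p-1)}{p}$ is exactly the pole-free-contour condition needed for the $W^{2,p}$ remainder. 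The caveat is that, as written, your argument is a program rather than a proof: the uniform resolvent estimate for the pencil along the shifted line, the passage from the weighted-Mellin estimate to the genuine $W^{2,p}$ bound on $w_{\rm reg}$, and the verification of the Jordan-chain structure at double roots are precisely the technical content of Grisvard's paper, to which you (reasonably) defer. Since the authors defer to the same source for the entire statement, this level of detail is appropriate here.
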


Though this  result gives no information about the roots of  equation \eqref{trascendente}, it is clear that  the solutions  contained in the strip $0<{\rm Re}\,\alpha<1$ are bounded. Hence, by analyticity, they are finitely many. A more precise information is provided by the following result, proved in \cite[Theorem 2.2]{Nicaise}.
\begin{theorem}\label{nicaise}
If $\om\in(0,2\pi)$, then equation  \eqref{trascendente} has no roots in the strip $0<{\rm Re}\,\alpha\leq\displaystyle\frac12$. 
\end{theorem}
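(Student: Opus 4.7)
The plan is to argue by contradiction: suppose $\alpha = x+iy$ with $0<x\le 1/2$ is a root of $\sin^2(\alpha\omega)=\alpha^2\sin^2\omega$, and derive a contradiction by comparing the squared moduli of the two sides. Using the elementary identity $|\sin(a+ib)|^2=\sin^2 a+\sinh^2 b$, the equation would force
\begin{equation*}
\sin^2(x\omega)+\sinh^2(y\omega)=(x^2+y^2)\sin^2\omega,
\end{equation*}
which I would rewrite as
\begin{equation*}
\sin^2(x\omega)-x^2\sin^2\omega=y^2\sin^2\omega-\sinh^2(y\omega).
\end{equation*}
The core of the plan is to show that under the hypotheses of the theorem the left-hand side of this identity is strictly positive while the right-hand side is nonpositive.

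For the right-hand side I would use the two elementary inequalities $\sinh t\ge t$ for $t\ge 0$ and $|\sin\omega|\le\omega$ for $\omega>0$. Together they give $\sinh^2(y\omega)\ge y^2\omega^2\ge y^2\sin^2\omega$ for every $y\in\R$, so the right-hand side is $\le 0$ (and strictly negative for $y\ne 0$).

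The heart of the proof is the strict inequality $|\sin(x\omega)|>x|\sin\omega|$ for $x\in(0,1/2]$ and $\omega\in(0,2\pi)$. The key observation is that $0<x\omega\le\omega/2\le\pi$, so both $x\omega$ and $\omega/2$ lie in the interval where $\sin$ is nonnegative and where the sinc function $t\mapsto(\sin t)/t$ is strictly decreasing. That monotonicity yields
\begin{equation*}
\frac{\sin(x\omega)}{x\omega}\ge\frac{\sin(\omega/2)}{\omega/2},\qquad\text{equivalently}\qquad \frac{\sin(x\omega)}{x}\ge 2\sin(\omega/2).
\end{equation*}
Combining this with the half-angle identity $|\sin\omega|=2\sin(\omega/2)|\cos(\omega/2)|$ and the fact that $|\cos(\omega/2)|<1$ for $\omega\in(0,2\pi)$ gives
\begin{equation*}
\frac{\sin(x\omega)}{x}\ge 2\sin(\omega/2)>2\sin(\omega/2)|\cos(\omega/2)|=|\sin\omega|,
\end{equation*}
which is the desired strict bound.

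The main technical delicacy is the extremal case $x=1/2$: there the sinc-monotonicity step degenerates to equality, and the strictness must come entirely from the factor $|\cos(\omega/2)|<1$. This is precisely where the hypothesis $\omega\in(0,2\pi)$ is used, since the endpoints $\omega=0,2\pi$ are exactly the values at which $|\cos(\omega/2)|=1$ and $\sin(\omega/2)=0$ simultaneously. Combining the two estimates gives a strictly positive left-hand side and a nonpositive right-hand side in the rearranged modulus identity, producing the required contradiction.
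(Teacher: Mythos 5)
Your argument is correct, and it is worth noting that the paper itself does not prove this statement at all: Theorem \ref{nicaise} is quoted directly from Nicaise's work on the Lam\'e system in polygonal domains (\cite[Theorem 2.2]{Nicaise}), so there is no in-paper proof to compare against. What you supply is a self-contained elementary argument, and it holds up under scrutiny. Taking moduli in $\sin^2(\alpha\omega)=\alpha^2\sin^2\omega$ with $\alpha=x+iy$ and using $|\sin(a+ib)|^2=\sin^2a+\sinh^2b$ correctly reduces the problem to the rearranged identity you state; the right-hand side is $\leq 0$ by $\sinh t\geq t$ and $|\sin\omega|\leq\omega$; and the strict positivity of the left-hand side follows from the strict monotonicity of $t\mapsto(\sin t)/t$ on $(0,\pi]$ (where $0<x\omega\leq\omega/2<\pi$) combined with the half-angle factorization $|\sin\omega|=2\sin(\omega/2)\,|\cos(\omega/2)|$ and $|\cos(\omega/2)|<1$. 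You correctly identify the delicate endpoint $x=1/2$, where the sinc comparison degenerates and all the strictness is carried by $|\cos(\omega/2)|<1$; this is exactly where the hypothesis $\omega\in(0,2\pi)$ enters, consistent with the fact that at $\omega=2\pi$ the equation does acquire the root $\alpha=1/2$. The degenerate case $\omega=\pi$ (where $\sin\omega=0$) is also covered, since then $\sin(x\omega)>0$ while $x|\sin\omega|=0$. The only caveat is contextual rather than mathematical: Nicaise's cited theorem is stated for the Lam\'e eigenvalue problem and its transcendental equations in somewhat greater generality, whereas your argument addresses precisely the scalar equation \eqref{trascendente} as written, which is all that the paper uses.
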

We will use the two previous results to get an a priori estimate for the solutions  to \eqref{grisvard1}. We recall that an {\it infinitesimal rigid motion} is an affine displacement of the form $a+Ax$, where $A$ is a skew symmetric $2\times2$ matrix and $a$ is a constant vector.
\begin{proposition}\label{g+n}
Let $\Om$ be as in Theorem~\ref{theorem grisvard}. There exist $p\in(4/3,2)$ and $C>0$ such that if $f\in L^p(\Om;\R^2)$, $g\in W^{1-1/p, p}(\pa \Om\setminus\{0\}; \R^2)$ and $w\in 
W^{1,2}
(\Om;\R^2)$ is a weak solution to  problem \eqref{grisvard1}, then
\begin{equation}\label{g+n1}
\|w\|_{W^{2,p}(\Om;\R^2)}\leq C\bigl(\|w\|_{L^{p}(\Om;\R^2)}+\|f\|_{L^{p}(\Om;\R^2)}+\|g\|_{W^{1-1/p, p}(\pa \Om\setminus\{0\}; \R^2)}\bigr)\,.
\end{equation}
\end{proposition}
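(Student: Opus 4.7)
The plan is to invoke the Grisvard decomposition of Theorem~\ref{theorem grisvard} with an exponent $p$ chosen so that both singular sums are empty, so that $w$ itself belongs to $W^{2,p}$, and then to derive the quantitative bound \eqref{g+n1} via a standard closed-graph argument that absorbs the finite-dimensional kernel of infinitesimal rigid motions into the $\|w\|_{L^p}$ term on the right-hand side.

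First I would choose $p$. By Theorem~\ref{nicaise}, equation \eqref{trascendente} has no roots $\alpha$ with $0<\mathrm{Re}\,\alpha\leq 1/2$. Since $\alpha \mapsto \sin^2\alpha\omega - \alpha^2\sin^2\omega$ is entire, its zero set is discrete and only finitely many zeros lie in the bounded strip $\{0<\mathrm{Re}\,\alpha\leq 1\}$. Hence there exists $\delta>0$ such that no root of \eqref{trascendente} has $\mathrm{Re}\,\alpha\in(0,\tfrac12+\delta]$. Since $p\mapsto 2(p-1)/p$ maps $(4/3,2)$ continuously onto $(1/2,1)$, I can pick $p\in(4/3,2)$ close enough to $4/3$ so that
\[
\tfrac{1}{2}\;<\;\tfrac{2(p-1)}{p}\;<\;\tfrac{1}{2}+\delta.
\]
With this choice, no root of \eqref{trascendente} lies in the closed strip $\{0<\mathrm{Re}\,\alpha\leq 2(p-1)/p\}$; in particular the proviso at the end of Theorem~\ref{theorem grisvard} is satisfied and both sums in the decomposition of $w$ are empty. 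Consequently $w = w_{\mathrm{reg}}\in W^{2,p}(\Omega;\R^2)$.

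Next I would establish the estimate. Grisvard's construction shows that, under the hypothesis just verified, the map that sends admissible data $(f,g)$ to the particular solution $w_0\in W^{2,p}$ which is orthogonal to the finite-dimensional kernel $\mathcal R$ of infinitesimal rigid motions is continuous, yielding
\[
\|w_0\|_{W^{2,p}(\Omega;\R^2)}\;\leq\; C\bigl(\|f\|_{L^p(\Omega;\R^2)}+\|g\|_{W^{1-1/p,p}(\partial\Omega\setminus\{0\};\R^2)}\bigr).
\]
Any weak solution $w$ differs from $w_0$ by some $r\in\mathcal R$. Since $\mathcal R$ is finite-dimensional, all norms on it are equivalent, so $\|r\|_{W^{2,p}}\leq C\|r\|_{L^p}\leq C(\|w\|_{L^p}+\|w_0\|_{L^p})$. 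Adding the two bounds gives
\[
\|w\|_{W^{2,p}}\;\leq\;\|w_0\|_{W^{2,p}}+\|r\|_{W^{2,p}}\;\leq\;C\bigl(\|f\|_{L^p}+\|g\|_{W^{1-1/p,p}}+\|w\|_{L^p}\bigr),
\]
which is \eqref{g+n1}.

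The main obstacle is the constraint $p>4/3$: this forces $2(p-1)/p>1/2$, pushing the Grisvard critical exponent into the region where the transcendental equation \eqref{trascendente} could a priori have roots. The argument succeeds only because Nicaise's Theorem~\ref{nicaise} rules out roots with $\mathrm{Re}\,\alpha\leq 1/2$ and the analyticity of $\sin^2\alpha\omega-\alpha^2\sin^2\omega$ then yields a uniform gap immediately to the right of $1/2$ into which $2(p-1)/p$ can be squeezed. Once this is in place, handling the rigid-motion kernel via the $\|w\|_{L^p}$ term is routine.
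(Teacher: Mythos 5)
Your proof is correct and follows essentially the same route as the paper: the exponent $p$ is chosen exactly as there, by combining Theorem~\ref{nicaise} with the discreteness of the roots of \eqref{trascendente} to squeeze $2(p-1)/p$ into a gap just above $1/2$, so that both singular sums in the decomposition of Theorem~\ref{theorem grisvard} vanish and $w\in W^{2,p}$. The quantitative bound is also obtained in the same way in substance---the paper makes your asserted ``continuity of the solution map modulo rigid motions'' rigorous by showing that $[u]\mapsto(\Div\,\C E(u),\C E(u)[\nu])$ is a continuous bijection on the quotient space $W^{2,p}(\Om;\R^2)/\{\text{infinitesimal rigid motions}\}$ and invoking the open mapping theorem, which is precisely the closed-graph argument you sketch.
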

\begin{proof} As observed above, the strip $0<{\rm Re}\,\alpha<1$ contains only finitely many solutions to equation \eqref{trascendente}. Hence, by Theorem~\ref{nicaise} there exists $\e>0$ such that all solutions are contained in the strip $\frac12+\e<{\rm Re}\,\alpha<1$. Therefore, if we choose $p>4/3$ such that $2-\frac2p<\frac12+\e$, from Theorem~\ref{theorem grisvard} we get that any weak solution to \eqref{grisvard1}, with $f\in L^p(\Om;\R^2)$ and $g\in W^{1-1/p, p}(\pa \Om\setminus\{0\}; \R^2)$ is in $W^{2,p}(\Om;\R^2)$.

To prove \eqref{g+n1}, set $V:=W^{2,p}(\Om;\R^2)/\thicksim$, where for every $u,v\in W^{2,p}(\Om; \R^2)$, we have set $u\thicksim v$ if and only if $u-v$ is an infinitesimal rigid motion. We define a norm in ${V}$ setting
$$
\|[u]\|_{V}:=\|E(u)\|_{L^p(\Om;\R^2)}+\|\nabla^2u\|_{L^p(\Om)}
$$
 for every equivalence class $[u]$, with $u\in W^{2,p}(\Om;\R^2)$. 
Note that this definition is well posed, since if $u\thicksim v$, then $E(u)=E(v)$ and $\nabla^2u=\nabla^2v$.
 Note also that in view of Korn's inequality, $V$ is a Banach space.  

Consider now the operator $L:{ V}\to L^p(\Om;\R^2)\times W^{1-1/p, p}(\pa \Om\setminus\{0\}; \R^2)$ defined for any $[u]\in{V}$ as
$$
L[u]:=(\Div\, \C E(u), \C E(u)[\nu])\,.
$$
By the first part of the proof we have that $L$ is a linear, continuous, and invertible operator between two Banach spaces. Therefore, the conclusion follows from  the open mapping theorem.
\end{proof}

\begin{proposition}\label{2decays}
Let $\Om$ be a regular domain with corner $\omega\in (\pi, 2\pi)$  and let $u\in H^1(\Om;\R^2)$ be   a weak solution to the Neumann problem 
$$
\begin{cases}
\Div\, \C E(w)= f & \text{\quad in }\Omega,\\
\C E(w)[\nu]= g & \quad\text{on }\Gamma_1\cup\Gamma_2,
\end{cases}
$$
with $f\in L^p(\Om; \R^2)$ and $g\in W^{1-1/p, p}((\Gamma_1\cup\Gamma_2)\setminus\{0\}; \R^2)$.
Then, there exist   $\bar r>0$, with  ${\overline B}_{\bar r}(0)\cap\Gamma_3=\emptyset$,   $C>0$, and $\alpha>1/2$, depending only on $\lambda,\mu$, $\omega$, $\|f\|_{L^p(\Om; \R^2)}$ and $\|g\|_{W^{1-1/p, p}((\Gamma_1\cup\Gamma_2)\setminus\{0\}; \R^2)}$, such that for all $r\in(0,\bar r)$,
\begin{equation}\label{2decays1}
\int_{B_r(0)\cap\Om}|\nabla w|^2\,dz\leq Cr^{2\alpha}\int_\Om\bigl(1+|w|^2+|\nabla w|^2\bigr)\,dz\,.
\end{equation}
\end{proposition}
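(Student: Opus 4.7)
The plan is to combine the Grisvard-type a priori $W^{2,p}$ bound of Proposition~\ref{g+n} with a two-dimensional Sobolev embedding into a space $W^{1,q}$ with $q>4$, and then to exploit the exponent gain via H\"older's inequality on $B_r(0)\cap\Om$.

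First, choose $\bar r>0$ so small that $\overline{B}_{\bar r}(0)\cap\Gamma_3=\emptyset$, which is possible because $0\notin\Gamma_3$ and $\Gamma_3$ is closed in $\pa\Om$. Fix the exponent $p\in(4/3,2)$ provided by Proposition~\ref{g+n}. By that proposition,
\[
\|w\|_{W^{2,p}(\Om;\R^2)}\le C\bigl(\|w\|_{L^{p}(\Om;\R^2)}+\|f\|_{L^{p}(\Om;\R^2)}+\|g\|_{W^{1-1/p, p}((\Gamma_1\cup\Gamma_2)\setminus\{0\}; \R^2)}\bigr).
\]
Since $p<2$ and $\Om$ is bounded, $\|w\|_{L^p}\le C\|w\|_{L^2}$, so the right-hand side is controlled by a multiple of $(1+\|w\|_{L^2}+\|\nabla w\|_{L^2})$, where the implicit constant depends on the fixed data $\|f\|_{L^p}$ and $\|g\|_{W^{1-1/p,p}}$.

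Next, apply the Sobolev embedding in dimension $2$: since $p>4/3$, we have $q:=\frac{2p}{2-p}>4$, and $W^{2,p}(\Om;\R^2)\hookrightarrow W^{1,q}(\Om;\R^2)$. Therefore
\[
\|\nabla w\|_{L^q(\Om;\mathbb{M}^{2\times2})}\le C\|w\|_{W^{2,p}(\Om;\R^2)}\le C\bigl(1+\|w\|_{L^2}+\|\nabla w\|_{L^2}\bigr),
\]
with $C$ depending only on the claimed quantities. Now, on $B_r(0)\cap\Om$ apply H\"older's inequality with exponents $q/2$ and $q/(q-2)$:
\[
\int_{B_r(0)\cap\Om}|\nabla w|^2\,dz\le \|\nabla w\|_{L^q(\Om)}^{2}\,|B_r(0)\cap\Om|^{\,(q-2)/q}\le C r^{2(q-2)/q}\,\|\nabla w\|_{L^q(\Om)}^{2}.
\]
Setting $\alpha:=(q-2)/q=1-2/q>1/2$ (which is exactly the gain coming from $p>4/3$), and inserting the previous bound on $\|\nabla w\|_{L^q}$, one obtains
\[
\int_{B_r(0)\cap\Om}|\nabla w|^2\,dz\le C r^{2\alpha}\bigl(1+\|w\|_{L^2(\Om)}^2+\|\nabla w\|_{L^2(\Om)}^2\bigr),
\]
which is the desired inequality \eqref{2decays1}.

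The only delicate point is to guarantee that the improved exponent $\alpha$ is strictly greater than $1/2$; this is precisely ensured by the choice $p>4/3$ in Proposition~\ref{g+n}, which in turn rests on Nicaise's Theorem~\ref{nicaise} asserting that the characteristic equation \eqref{trascendente} has no roots in the strip $0<\mathrm{Re}\,\alpha\le 1/2$. All remaining steps are routine Sobolev embedding and H\"older estimates.
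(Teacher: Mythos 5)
There is a genuine gap at the very first step. You apply the global a priori estimate of Proposition~\ref{g+n} directly to $w$ on all of $\Om$, but that proposition concerns weak solutions of the Neumann problem \eqref{grisvard1} with prescribed data $\C E(w)[\nu]=g$ on the \emph{entire} boundary $\pa\Om=\Gamma_1\cup\Gamma_2\cup\Gamma_3$. In Proposition~\ref{2decays} the Neumann condition is prescribed only on $\Gamma_1\cup\Gamma_2$; nothing is assumed about the conormal derivative of $w$ on $\Gamma_3$ (and in the applications in the paper there is indeed no control there beyond the $H^1$ bound). Accordingly, the constants in \eqref{2decays1} are allowed to depend only on $\|g\|_{W^{1-1/p,p}((\Gamma_1\cup\Gamma_2)\setminus\{0\})}$, which a global application of Proposition~\ref{g+n} cannot deliver. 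So the inequality $\|w\|_{W^{2,p}(\Om)}\le C(\|w\|_{L^p}+\|f\|_{L^p}+\|g\|)$ as you state it is not justified, and the whole chain collapses.

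The missing idea is localization, which is exactly how the paper proceeds: fix $\hat r$ with $B_{\hat r}(0)\cap\Gamma_3=\emptyset$ and a cutoff $\vphi\in C^\infty_c(B_{\hat r})$ with $\vphi\equiv 1$ on $B_{\bar r}$, and apply Proposition~\ref{g+n} to $w\vphi$. The function $w\vphi$ vanishes near $\Gamma_3$, hence solves a bona fide Neumann problem on all of $\pa\Om$, with right-hand sides consisting of $\vphi f$, $\vphi g$, and commutator terms involving $w$, $\nabla w$, $\nabla\vphi$, $\nabla^2\vphi$; these are controlled by $\|w\|_{W^{1,p}(\Om)}$ (using $p<2$ and the trace theorem), which is why the full quantity $\int_\Om(1+|w|^2+|\nabla w|^2)\,dz$ appears on the right of \eqref{2decays1}. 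From that point on your argument is correct and coincides with the paper's: the Sobolev embedding $W^{2,p}\hookrightarrow W^{1,2p/(2-p)}$, H\"older's inequality on $B_r(0)\cap\Om$ yielding the factor $r^{4(p-1)/p}=r^{2\alpha}$ with $\alpha=2(p-1)/p=1-2/q$, and the observation that $\alpha>1/2$ precisely because $p>4/3$, which rests on Theorem~\ref{nicaise}. With the cutoff inserted, your proof becomes the paper's proof.
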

\begin{proof} Set $B_{\hat r}:=B_{\hat r}(0)$ and fix $\hat r>0$ such that $B_{\hat r}\cap \Gamma_3=\emptyset$ and $\pa B_{\hat r}\cap \Gamma_1\cup\Gamma_2\neq \emptyset$, and $0<\bar r<\hat r$. 
Let $\vphi\in C^{\infty}_c(B_{\hat r})$ be such that $\vphi\equiv 1$ on $B_{\bar r}$. From the equation satisfied by $w\vphi$ and from 
\eqref{g+n1} we get 
\begin{align*}
\|w\vphi\|_{W^{2,p}(\Om;\R^2)}&\leq C\bigl(\|w\|_{W^{1,p}(\Om;\R^2)}+\|f\|_{L^{p}(\Om;\R^2)}\\
&\quad\qquad+\|g\|_{W^{1-1/p, p}((\Gamma_1\cup\Gamma_2)\setminus\{0\}; \R^2)}+ 
\|w\|_{W^{1-1/p, p}(\pa \Om\setminus\{0\}; \R^2)}\bigr)\\
&\leq C\bigl(\|w\|_{W^{1,p}(\Om;\R^2)}+\|f\|_{L^{p}(\Om;\R^2)}+\|g\|_{W^{1-1/p, p}((\Gamma_1\cup\Gamma_2)\setminus\{0\}; \R^2)}\bigr)
\end{align*}
for some $\frac43<p<2$ and some $C>0$ depending only on $\lambda$, $\mu$ and $\omega$.
 Thus, if $0<r<\bar r$, using the Sobolev imbedding theorem we have
\begin{align*}
\int_{B_r\cap\Om}|\nabla w|^2\,dz 
& \leq c\biggl(\int_{B_r\cap\Om}|\nabla (w\vphi)|^{\frac{2p}{2-p}}\,dz\biggr)^{\frac{2-p}{p}}r^{\frac{4(p-1)}{p}}  \leq cr^{\frac{4(p-1)}{p}}\|w\vphi\|^2_{W^{2,p}(\Om;\R^2)}\\
&\leq cr^{\frac{4(p-1)}{p}}\bigl(1+\|w\|_{W^{1,p}(\Om;\R^2)}\bigr)^2\leq c r^{2\alpha} \int_\Om\bigl(1+|w|^2+|\nabla w|^2\bigr)\,dz\,,
\end{align*}
where $\alpha:=2(p-1)/p$ is strictly greater than $1/2$ since $p>4/3$.
\end{proof}
For $g\in AP(0,\ell)$  we denote the set of {\it cusp points} by 
$$
 \Sigma_{g,c}:=\{(x,g(x)):\,  x\in [0,\ell)\,,  g^-(x)=g(x)\,,\text{ and } g^\prime_+(x)=-g^\prime_-(x)=+\infty\}\,,
$$
where $g^-$ is defined in \eqref{piuomeno}, while $g^\prime_+$ and $g^\prime_-$ denote the right and left derivatives, respectively. 

As usual, the set $\Sigma^\#_{g,c}$ is obtained by replacing $[0,\ell)$ by $\R$ in the previous formula and coincides with the $\ell$-periodic extension of $\Sigma_{g,c}$. 

\begin{theorem}[Regularity]\label{th:regolarita}
Let $(\bar h,\bar \sigma, H_{\bar h,\bar\s})\in  X(e_0;\B)$ be a  minimizer of \eqref{minG}, with $\bar \s=\sum_{i=1}^k\bu_i\delta^\#_{z_i}$.  Then:
\begin{itemize}
\item[(i)]  $\bar h$ has  at most finitely many cusp points and vertical cracks  $[0,\ell)$, i.e.,
$$
\operatorname*{card\,}\bigl(\{x\in [0,\ell):\, (x,y)\in\Sigma_{\bar h}\cup \Sigma_{\bar h,c}\,\text{ for some } y\geq 0\}\bigr)<+\infty\,;
$$
\item[(ii)] the curve $\Gamma_{\bar h}^\#$ is of class $C^{1}$ away from 
$\Sigma^\#_{\bar h}\cup \Sigma^\#_{\bar h,c}$ and 
$$
\lim_{x\to x_0^\pm}\bar h'(x)=\pm \infty\qquad\text{for every $x_0$ s.t. $(x_0, \bar h(x_0))\in \Sigma^\#_{\bar h}\cup \Sigma^\#_{\bar h,c}$;}
$$
\item[(iii)] $\Gamma_{\bar h}^\#\cap \{y>0\}$ is of class $C^{1,\alpha}$  away from 
$\Sigma^\#_{\bar h}\cup \Sigma^\#_{\bar h,c}$ for all $\alpha\in (0,1/2)$;
\item[(iv)]  setting  
$$
A:=\{(x,y)\in \R^2:\, \bar h(x)>0,\, \bar h\text{ continuous at }x\}\,,
$$ 
$\Gamma^\#_{\bar h}$ is analytic in $A\setminus \cup_{i=1}^k\cup_{m\in \Z}\overline B_{r_0}(z_i+m\ell\mathbf{e}_1)$. 
\end{itemize}
\end{theorem}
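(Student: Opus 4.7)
The plan is to follow the general strategy of \cite{FFLM, FuMo}, using the penalization results just proved as the replacement for the volume constraint, and treating the dislocation balls $B_{r_0}(z_i)$ as local obstructions away from which the classical arguments go through.

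First I would reduce to a penalized problem: if $\bar h^-$ is not constant, apply Theorem~\ref{sollevamento} to pass to the doubly penalized functional \eqref{minGpen}; if $\bar h^-$ is constant, Theorem~\ref{th:regflat} already yields $\Sigma_{\bar h}=\emptyset$, and Theorem~\ref{sollevamento2} reduces to the simpler penalization \eqref{minGsemipen}. In either case Lemma~\ref{pallaint} applies and gives an interior ball condition of some fixed radius $\rho>0$ for $\Om^\#_{\bar h}\cup(\R\times(-\infty,0])$. This immediately yields (i): each vertical crack and each cusp forces the complement to pinch, which is incompatible with a ball of radius $\rho$ touching from inside unless finitely many such points occur in a fixed period, since the area is finite. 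The interior ball condition also implies that, away from $\Sigma_{\bar h}\cup \Sigma_{\bar h,c}$, the curve $\Gamma_{\bar h}^\#$ is locally the graph of a Lipschitz function in suitably rotated coordinates, and that the (one-sided) tangent at cusps and cracks is vertical, giving the limit statement in (ii).

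The core regularity step is to upgrade Lipschitz to $C^1$ on $\Gamma_{\bar h}^\#\setminus(\Sigma^\#_{\bar h}\cup \Sigma^\#_{\bar h,c})$. At any point $z_0\in \Gamma_{\bar h}^\#$ away from the cusps and cracks, choose a small neighborhood $U$ in which $\Gamma_{\bar h}$ is a Lipschitz graph and $U\cap \cup_i \overline{B_{r_0}(z_i)}=\emptyset$; such a neighborhood exists because a ball touching a core ball has been ruled out by the previous steps or, when needed, by Lemma~\ref{lm:claim} applied to the corresponding penalized minimizer. In $U$ the compatibility condition $\curl H_{\bar h,\bar\sigma}=\bar\sigma*\rho_{r_0}$ becomes trivial, so $H_{\bar h,\bar\sigma}=\nabla u$ for some $u\in H^1(U;\R^2)$ solving a homogeneous Lamé system with zero Neumann condition on $\Gamma_{\bar h}\cap U$. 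A corner of inner angle $\omega\in(\pi,2\pi)$ in $\Gamma_{\bar h}$ at $z_0$ would then produce, through Proposition~\ref{2decays}, a decay $\int_{B_r(z_0)\cap\Om_{\bar h}}|\nabla u|^2\,dz\le Cr^{2\alpha}$ with $\alpha>1/2$. Comparing with a competitor obtained by replacing a tiny arc of $\Gamma_{\bar h}$ near $z_0$ by the chord—which saves a positive amount of surface energy of order $r$ while changing the bulk and penalization terms by $O(r^{1+2\alpha})$ (using the decay and the $|h-\bar h|^2$ and volume penalizations from \eqref{minGpen})—contradicts minimality for $r$ small. Hence no corner can occur, yielding (ii) and, via the quantitative version of the same decay, the $C^{1,\alpha}$ bound for every $\alpha\in(0,1/2)$ stated in (iii); here the restriction $y>0$ comes from the fact that the corner angle at contact points with the substrate falls outside the $(\pi,2\pi)$ range handled by Theorem~\ref{nicaise}.

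Finally, for (iv), on the open set $A\setminus \cup_{i,m}\overline{B_{r_0}(z_i+m\ell\mathbf{e}_1)}$ the convolved dislocation measure $\bar\sigma*\rho_{r_0}$ vanishes, so in a neighborhood of any regular boundary point we again have $H_{\bar h,\bar\sigma}=\nabla u$ with $u$ solving the homogeneous Lamé system with zero Neumann data on $\Gamma_{\bar h}$. The Euler--Lagrange equation \eqref{eq:EL} then becomes a quasilinear elliptic equation on the already-$C^{1,\alpha}$ curve $\Gamma_{\bar h}^\#$ with analytic coefficients and analytic right-hand side (since $u$ is analytic up to the $C^{1,\alpha}$ free boundary by classical bootstrap); analyticity of $\Gamma_{\bar h}^\#$ inside $A\setminus \cup_{i,m}\overline{B_{r_0}(z_i+m\ell\mathbf{e}_1)}$ follows from a Koch--Morrey type theorem exactly as in \cite{FFLM}. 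The main technical obstacle I anticipate is the bookkeeping needed to rule out that a competitor modification created near $z_0$ crosses into one of the balls $\partial B_{r_0}(z_i)$, i.e.\ showing that regular boundary points cannot accumulate on the core balls; this is precisely what Lemma~\ref{lm:claim} and the interior ball condition are designed to prevent, but the case analysis is delicate and must be carried out separately for points above, on, and near the cores.
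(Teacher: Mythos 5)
There is a genuine gap, and it sits at the center of your argument. In the $C^1$ step you assert that every regular point $z_0\in\Gamma_{\bar h}$ has a neighborhood $U$ with $U\cap\bigcup_i\overline{B_{r_0}(z_i)}=\emptyset$, so that $\curl H_{\bar h,\bar\sigma}=0$ in $U$ and $H_{\bar h,\bar\sigma}=\nabla u$ for a solution of the \emph{homogeneous} Lam\'e system. This is not established anywhere and is not expected to hold: the admissibility condition only requires the open balls $B_{r_0}(z_i)$ to lie in $\Om_h^\#$, so $\partial B_{r_0}(z_i)$ may touch (or coincide with an arc of) $\Gamma_{\bar h}$ — this is exactly the ``obstacle'' phenomenon the paper emphasizes, it is why statement (iv) excludes the closed core balls from the analyticity region, and it is why Theorem~\ref{prop:exnuova} needs extra hypotheses to pin the dislocations to the bottom. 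Lemma~\ref{lm:claim} does not rule out contact either: it only bounds the \emph{length} of a contact arc for minimizers of the penalized problem with $|\Om_g|>d$, and it is a tool internal to the proof of Theorem~\ref{sollevamento}, not a statement about the constrained minimizer $\bar h$. The paper's way around this is to keep the dislocations in the equation: it writes $H_{\bar h,\bar\sigma}=Dv+K$ with $K$ an explicit smooth field whose curl equals $\bar\sigma*\ro_{r_0}$, so that $v$ solves an \emph{inhomogeneous} Neumann problem with data $-\Div\,\C K_{sym}$ and $-\C K_{sym}[\nu]$; Proposition~\ref{2decays} is stated for precisely this inhomogeneous setting, and smoothness of $K$ lets the decay $Cr^{2\alpha}$ pass from $\nabla v$ to $H$.

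The same issue resurfaces, unresolved, in your $C^{1,\alpha}$ step. You correctly identify that the chord competitor may cut into a core ball, but then claim Lemma~\ref{lm:claim} and the interior ball condition ``prevent'' this; they do not. The paper's fix is constructive: when the segment $s$ joining $(x_0,\bar h(x_0))$ to $(x_0+r,\bar h(x_0+r))$ meets $B_{r_0}(z_j)$, the competitor is $h_r=\max\{f_j,s\}$ with $f_j$ the upper arc of the ball, and one must then show (estimate \eqref{ineq-tost}) that the extra surface energy of routing around the ball is still $O(r^{2\alpha})$ — this uses the mean value theorem to find $\bar x$ with $f_j'(\bar x)=s'(\bar x)$ together with the Lipschitz continuity of $f_j'$. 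Without this construction and estimate the comparison argument does not close. Your reductions to the penalized problems, the treatment of (i) and the vertical-tangent part of (ii) via the interior ball condition, and the outline of Step (iv) (where the region genuinely avoids the closed cores, so the homogeneous picture is legitimate) are all consistent with the paper; it is the two boundary-regularity steps at points where $\Gamma_{\bar h}$ can meet the cores that need the decomposition $H=Dv+K$ and the ball-avoiding competitor.
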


The proof of the regularity theorem is based upon the strategy introduced in \cite{FFLM} (see also \cite{FuMo}).  We only outline the main steps, by highlighting the changes needed in the present situation and referring the reader to the aforementioned papers for the details.

\begin{proof}[Proof of Theorem~\ref{th:regolarita}]
We start by observing that we may assume that $\bar h^-$ is not constant, since otherwise the conclusion  follows from Theorem~\ref{th:regflat}. Note also the if $d<2r_0\ell$, then necessarily $\B=\emptyset$, and thus the result follows from \cite[Theorem~2.5]{DF} (see also \cite[Theorem~2.7]{FuMo}). 
Therefore, from now on we shall assume that $d\geq 2r_0\ell$ and $\bar h^-$ is not constant. 

\noindent{\bf Step 1.} (Lipschitz partial regularity) From Theorem~\ref{sollevamento2} and Lemma~\ref{pallaint} we have that 
$\widetilde\Gamma_{\bar h}$ satisfies an interior ball condition with radius $\ro<\min\{1/(e_0^2W_0), r_0\}$.  By applying 
\cite[Lemma~3]{CL} we get that $\widetilde\Gamma_{\bar h}$ has the following properties: For any $z_0 \in\widetilde{\Gamma}_{\bar h}$ there exist an orthonormal basis $\mathbf{i}$,
$\mathbf{j}\in\mathbb{R}^{2}$ and a rectangle
\[
Q:=\{{z}_{0}+s\mathbf{i}+t\mathbf{j}:\,-a^{\prime}<s<a^{\prime
},\,-b^{\prime}<t<b^{\prime}\},
\]
with $a^{\prime}$, $b^{\prime}>0$, such that $\Omega_{\bar h}\cap Q$ has one of the
following two representations:

\noindent (j) There exists a Lipschitz function $f:(-a^{\prime},a^{\prime
})\rightarrow(-b^{\prime},b^{\prime})$ such that $f\left(  0\right)  =0$ and
\[
\Omega_{\bar h}\cap Q=\{{z}_{0}+s\mathbf{i}+t\mathbf{j}:\,-a^{\prime
}<s<a^{\prime},\,-b^{\prime}<t<f(s)\}\cap\left(  \left(  0,\ell\right)
\times\mathbb{R}\right)  .
\]
Moreover, the function $f$ admits at every point left and right derivatives, which  are left and right continuous, respectively.

\noindent (jj) There exist two Lipschitz functions $f_{1}$, $f_{2}:\left[
0,a^{\prime}\right)  \rightarrow(-b^{\prime},b^{\prime})$ such that
$f_{i}\left(  0\right)  =\left(  f_{i}\right)  _{+}^{\prime}\left(  0\right)
=0$ for $i=1,$ $2,$ $f_{1}\leq f_{2}$, and%
\[
\Omega_{\bar h}\cap Q=\{{z}_{0}+s\mathbf{i}+t\mathbf{j}:\,0<s<a^{\prime
},\,-b^{\prime}<t<f_{1}(s) \text{\, or \,} f_{2}(s)<t<b^{\prime}\}\,.
\]
Moreover, the functions $f_{1},$ $f_{2}$ admit at every point left and right derivatives, which  are left and right continuous, respectively.
Note that (j) and (jj) imply statement (i) of the theorem and the fact that
$$
\lim_{x\to x_0^\pm}\bar h_\pm'(x)=\pm \infty\qquad\text{ for every $x_0$ s.t. $(x_0, \bar h(x_0))\in \Sigma_{\bar h}\cup \Sigma_{\bar h,c}$.}
$$

\noindent{\bf Step 2.} ($C^1$-regularity) From property (j) of Step 1 we have that the curve $\Gamma_{\bar h}$ is locally Lipschitz in $[0, \ell)\times \R$ away from finitely many  singularities  of cusp or cut type. Moreover, outside the singular set,  $\Gamma_{\bar h}$  admits left and right tangent, which are left and right continuous respectively. Therefore, to prove statement (ii) it is enough to show that left and right tangents coincide at  every point $z_0\not\in \Sigma_{\bar h}\cup \Sigma_{\bar h,c}$.

 Assume by contradiction that this does not happen for some $z_0=(x_0, y_0)\not\in \Sigma_{\bar h}\cup \Sigma_{\bar h,c}$.  
 If $y_0=0$, then by interior ball condition we can say that there are no dislocation balls in a neighborhood $B_{r}(z_0)$ of $z_0$ and thus $H_{\bar h, \bar \s}$ in such a neighborhood is a gradient $Dv$, with $v$ satisfying
 $$
\begin{cases}
\Div\, \C E(v)=0 & \text{in $\Om_{\bar h}\cap B_{r}(z_0)$,}\\
\C E(v)[\nu]=0 & \text{on $\Gamma_{\bar h}\cap B_{r}(z_0)$,}\\
v(x,0)=e_0(x,0) & \text{on $\{y=0\}\cap B_{r}(z_0)$.}
\end{cases}
$$
We may therefore apply the argument used in \cite[Theorem~4.9 and Proposition~5.1]{DF} to obtain a contradiction.

 Assume now that $y_0>0$. In this case we decompose $H_{\bar h, \bar \s}= Dv+K$, where 
 $$
K:=
\left(
\begin{array}{cc}
k_1 & 0\\
k_2 & 0
\end{array}
\right)\,,
$$
with 
$$
k_l(x,y):=-\sum_{i=1}^k(\bu_i\cdot \mathbf{e}_l)\int_0^y\ro_{r_0}(x-x_i, t-y_i)\, dt \text{ for $l=1,2$,}
$$
and $v$ satisfies
$$
\begin{cases}
\Div\, \C E(v)=-\Div\, \C K_{sym} & \text{in $\Om_{\bar h}$,}\\
\C E(v)[\nu]=- \C K_{sym}[\nu] & \text{on $\Gamma_{\bar h}$.}
\end{cases}
$$
Using \eqref{2decays1} in place of \cite[Equation (3.52)]{FFLM} and arguing as in \cite[Theorem~3.13]{FFLM},  we can prove that there exist $C>0$, a radius $\bar r>0$, and $\alpha\in (\frac12,1)$ such that 
$$
\int_{B_r(z_0)\cap\Om^\#_{\bar h}}|\nabla v|^2\, dz\leq C r^{2\alpha}\qquad\text{for all $r\leq\bar r$.}
$$
 In turn, since $K$ is smooth this implies that for a possibly larger constant
$$
\int_{B_r(z_0)\cap\Om^\#_{\bar h}}|H_{\bar h, \bar\s}|^2\, dz\leq C r^{2\alpha}\qquad\text{for all $r\leq\bar r$.}
$$
Moreover, by Theorem~\ref{sollevamento} there exist $\Lambda$, $\beta>0$
such that $(\bar h, \bar \s, H_{\bar h,\bar\s})$ is a minimizer of 
\beq\label{minGpen2}
\min\biggl\{F(h,\sigma,H)+\beta\int_0^\ell|h-\bar h|^2\, dx+\Lambda\bigl||\Om_h|-d\bigr|:\,(h,\sigma,H)\in X(e_0;\mathbf{B})\biggr\}\,.
\eeq
 To fix the ideas let us assume that $z_0=(x_0, \bar h(x_0))$ does not belong to a vertical segment of $\Gamma_{\bar h}$; i.e., $\bar h$ is continuous at $x_0$. The other case can be dealt with similarly. 
 
Observe that by a standard extension argument we may
define $v$ in a fixed neighborhood of ${z}_{0}$ in such a way
that, denoting by $\tilde v$ the resulting function, for all $0<r\leq \bar r$ we have
$$
\int_{B_r({ {z}_{0}})  }|\nabla \tilde v|^{2}\,dz%
\leq c(L)\int_{B_r({ {z}_{0}})  \cap\Omega_{\bar h}}|\nabla
v|^{2}\,dz\,, 
$$
where the constant $c(L)$ 
depends only on the Lipschitz constant $L$ of the function $\bar h$. Finally set $H:=D\tilde v+ K$ and observe that
\beq\label{reg3}
\int_{B_r(z_0)}|H|^2\, dz\leq C r^{2\alpha}\qquad\text{for all $r\leq\bar r$.}
\eeq
For $r>0$\ (sufficiently small) we denote
\[
x_{r}^{\prime}:=\max\{x\in (0,\ell):\,x\leq x_{0}\text{ and there exists $y$ such that
}(x,y)\in\Gamma_{\bar h}\cap\partial B_{r}({z}_{0})  \}\,,
\]%
\[
x_{r}^{\prime\prime}:=\min\{x\in(0,\ell):\,x\geq x_{0},\text{ and there exists
$y$ such that }(x,y)\in\Gamma_{\bar h}\cap\partial B_{r}({z}_{0})  \}\,,
\]
and we let $(x_{r}^{\prime},\bar h(x_{r}^{\prime}))$ and $(x_{r}^{\prime\prime
},\bar h(x_{r}^{\prime\prime}))$ be the corresponding points on $\Gamma_{\bar h}\cap\partial
B_{r}(  {z}_{0})$. Construct $h_{r}$ as the greatest lower
semicontinuous function coinciding with $\bar h$ outside $[x_{r}^{\prime}%
,x_{r}^{\prime\prime}]$ and with the affine function
\[
x\mapsto \bar h(x_{r}^{\prime})+\frac{\bar h(x_{r}^{\prime\prime})-\bar h(x_{r}^{\prime}%
)}{x_{r}^{\prime\prime}-x_{r}^{\prime}}(x-x_{r}^{\prime})
\]
in $(x_{r}^{\prime},x_{r}^{\prime\prime})$. For $r>0$
sufficiently small $(h_r, \bar\s, H)$ is admissible for the
penalized minimization problem \eqref{minGpen2}  . Hence,
\[
F(\bar h, \bar\s, H_{\bar h, \bar \s})\leq F(h_r,\bar \sigma,H)+\beta\int_0^b|h_r-\bar h|^2\, dx+\Lambda\bigl||\Om_{h_r}|-d\bigr|\,.
\]
Since $h_r=\bar h$ outside $[x_{r}^{\prime}%
,x_{r}^{\prime\prime}]$ and $H=H_{\bar h,\bar\sigma}$ outside $B_{r}(z_0)$, using  \eqref{reg3}, we get
\begin{equation}%
\begin{array}
[c]{rl}%
\displaystyle  \int_{x_{r}^{\prime}}^{x_{r}^{\prime\prime}}\sqrt{1+(\bar h^{\prime
})^{2}}\,dx & \displaystyle  \leq\int_{x_{r}^{\prime}}^{x_{r}^{\prime\prime}%
}{\textstyle  \sqrt{1+(h_{r}^{\prime})^{2}}}\,dx+C\int_{B_{r}(z_0)}|H|^{2}\,dz+Cr^{2}\\
& \displaystyle  \leq\int_{x_{r}^{\prime}}^{x_{r}^{\prime\prime}}{\textstyle
\sqrt{1+(h_{r}^{\prime})^{2}}}\,dx+Cr^{2\alpha}%
\end{array}
\label{estimate h}%
\end{equation}
for $r$ small enough. On the other hand, since 
the right and the left derivatives $\bar h_{+}^{\prime}$ and $\bar h_{-}^{\prime}$ exist and are
continuous in a neighborhood of $x_{0}$, it can be checked  that (see \cite[Proof of Theorem~3.14]{FFLM}) 
$$
\displaystyle  \int_{x_{r}^{\prime}}^{x_{r}^{\prime\prime}}\sqrt{1+(\bar h^{\prime
})^{2}}\,dx- \int_{x_{r}^{\prime}}^{x_{r}^{\prime\prime}%
}{\textstyle  \sqrt{1+(h_{r}^{\prime})^{2}}}\,dx\geq C_0 r
$$
for $r$ sufficiently small, where $C_0>0$ depends only on the angle at the corner point $z_0$. Since $2\alpha>1$ this contradict \eqref{estimate h}.

\noindent{\bf Step 3.} ($C^{1,\alpha}$-regularity) 
 Fix an open subarc $\Gamma\subset\Gamma_{\bar h}\setminus (\Sigma_{\bar h}\cup \Sigma_{\bar h,c})$ not intersecting $\{y=0\}$. 
 As in Step 2, we consider only the case in which $\Gamma$  does not contain vertical parts, the other case being analogous. Let $I$ be the projection of $\Gamma$ onto the $x$-axis. By taking $\Gamma$ smaller, if needed, we may assume that $I\times(0,\infty)$ intersects at most one ball $B_{r_0}(z_j)$, $j=1,\dots, k$ and, by Step 2, that  $\bar h\in C^1(\bar I)$.  
Fix $J\subset\subset I$ and consider the decomposition of $H_{\bar h, \bar \s}$ introduced in Step 2. For any $\alpha \in (0,1)$ there exist 
$C$, $\bar r>0$ such that if $z_0=(x_0, \bar h(x_0))$, $x_0\in J$, then 
$$
\int_{B_r(z_0)\cap\Om^\#_{\bar h}}|\nabla v|^2\, dz\leq C r^{2\alpha}\qquad\text{for all $r\leq\bar r$}\,.
$$
Such a decay estimate can be established exactly as in \cite[Theorem~3.16]{FFLM}. Note that both $C$ and $\bar r$ are uniform with respect 
to $x_0\in J$.  Arguing as in the previous step, we  may extend $H_{\bar h, \bar\s}$ to $B_{\bar r}(z_0)$ in such a way that the resulting strain field $H$ satisfies
\beq\label{reg4}
\int_{B_r(z_0)}|H|^2\, dz\leq C r^{2\alpha}\qquad\text{for all $r\leq\bar r$,}
\eeq
for a possibly larger constant  $C$ still independent of $z_0$.    Fix $r<\bar r$ and consider the affine function $s$  connecting $z_0$ and $(x_0+r, \bar h(x_0+r))$.
If the graph of $s$ over the interval   $(x_0, x_0+r)$ does not intersect any of the balls $B_{r_0}(z_j)$, $j=1,\dots, k$, we can proceed as in \cite[Step 5 of the proof of Theorem~6.9]{FuMo}. Thus assume that the graph of $s$  over the interval   $(x_0, x_0+r)$ intersects a ball $B_{r_0}(z_j)$. Note that by construction of $I$ there can only be one such ball.
Define $h_r$ as 
 $$
 h_r(x):=
 \begin{cases}
 \bar h(x) & \text{$x\in [0, \ell)\setminus(x_0,x_0+r)$,}\\
 \max\{f_j(x), s(x)\} & \text{$x\in [x_0, x_0+r]$,}
 \end{cases}
 $$ 
 where $f_j(x):=y_j+\sqrt{r_0^2-(x-x_j)^2}$. Note that $(h_r, \bar \s, H)$ is admissible for problem \eqref{minGpen2}. Then using the minimality
 of $(\bar h, \bar \s, H_{\bar h, \bar s})$, the decay estimate \eqref{reg4}, and 
 arguing as in Step \textcolor{red}{2} we obtain 
 $$
 \int_{x_0}^{x_0+r}\sqrt{1+ (\bar h^{\prime})^2}\, dx\leq C r^{2\alpha}+ \int_{x_0}^{x_0+r}\sqrt{1+ (h_r^{\prime })^2}\, dx\,,
 $$
 for some constant $C$ independent of $x_0\in J$. This inequality
 can be equivalently written as
 \begin{eqnarray}\label{ineq-tost}
&& \int_{x_0}^{x_0+r}\sqrt{1+ (\bar h^{\prime})^2}\, dx- \sqrt{(\bar h(x_0+r)-\bar h(x_0))^2 +r^2}\nonumber \\
&&\qquad\qquad\leq 
 C r^{2\alpha}+ \int_{x_0}^{x_0+r}\Bigl(\sqrt{1+ (h_r^{\prime })^2}-\sqrt{1+(s^{\prime })^2}\Bigr)\, dx\nonumber\\
 &&\qquad\qquad=  C r^{2\alpha}+ \int_{(x_0, x_0+r)\cap\{f_j>s\}}
\!\Bigl(\sqrt{1+(f_j^{\prime })^2}-\sqrt{1+(s^{\prime })^2}\Bigr)\, dx\nonumber\\
&&\qquad\qquad =C r^{2\alpha}+ \int_{(x_0, x_0+r)\cap\{f_j>s\}}
\!\Bigl(\sqrt{1+(f_j^{\prime })^2}-\sqrt{1+(f_j^{\prime }(\bar x))^2}\Bigr)\, dx \leq C' r^{2\alpha}\,.
 \end{eqnarray}
Note that in the second equality  we used the fact that since $\bar h\ge f_j$ and the graph of $s$ joins two points of the graph of $\bar{h}$, it must intersect the graph of $f_j$ twice. Hence, by the mean value theorem we may find $\bar x\in 
 (x_0, x_0+r)\cap \{f_j>s\}$  such that $f_j'(\bar x)=s'(\bar{x})$. In the last inequality  we used the fact that $f_j'$ is Lipschitz.
 On the other hand, using the inequality
 $$
 \sqrt{1+b^2}-\sqrt{1+a^2}\geq \frac{a(b-a)}{\sqrt{1+a^2}}+\frac{(b-a)^2}{2(1+\max\{a^2,b^2\})^{3/2}}
 $$
 with $a:=\medintinrigo_{x_0}^{x_0+r}\bar h'\, dx$ and
 $b:=\bar h'(x)$, and integrating the result in $(x_0, x_0+r)$, we get
\begin{eqnarray*}
 &&\frac{1}{2(1+M^2)^{3/2}}
 \medint_{x_0}^{x_0+r}\Bigl(\bar h'(x)- \medint_{x_0}^{x_0+r}\bar h'\, ds\Bigr)^2\, dx\\
 && \qquad\qquad\qquad\leq 
 \frac1r \int_{x_0}^{x_0+r}\sqrt{1+ \bar h^{\prime 2}}\, dx- \frac1r\sqrt{(\bar h(x_0+r)-\bar h(x_0))^2 +r^2}\leq 
 C' r^{2\alpha-1}\,,
 \end{eqnarray*}
 where $M$ is the Lipschitz constant of $\bar h$ in $I$ and we used \eqref{ineq-tost}.
 In particular, 
 $$
 \medint_{x_0}^{x_0+r}\Bigl|\bar h'(x)- \medint_{x_0}^{x_0+r}\bar h'\, ds\Bigr|\, dx
 \leq C'' r^{\alpha-\frac12}\,.
 $$
 A similar inequality holds also in the interval $(x_0-r, x_0)$. Hence, by the arbitrariness of $x_0\in J$ and  
 \cite[Theorem 7.51]{AFP} we conclude that  $\bar h\in C^{1,\alpha-\frac12}(J)$ for all $\alpha\in (1/2,1)$, as claimed. 
This concludes the proof of statement (iii) of the theorem. 

\noindent{\bf Step 4.} To prove the analytic regularity, observe that in  $A\setminus \cup_{i=1}^k\cup_{m\in \Z}\overline B_{r_0}(z_i)$ we can perform variations of the profile $\bar h$ to prove that  \eqref{eq:EL} holds in the weak sense. Thus, in particular,
the curvature $\kappa$ is of class $C^{0,\alpha}$ in $A\setminus \cup_{i=1}^k\cup_{m\in \Z}\overline B_{r_0}(z_i)$ for all $\alpha\in (0,\frac12)$.   A standard bootstrap argument implies the $C^{\infty}$-regularity. Analyticity then follows from Theorem~4.9 and the remarks at the end of Section~4.2 in \cite{KLM}.
\end{proof}

\subsection{Dislocations accumulate at the bottom}
In this subsection we consider  nearly flat  profiles $h$. We will show that if $e_0$ is sufficiently large and  $(h, \sigma, H)$ is any  admissible configuration  in $X (e_0; \mathbf{B})$, then, by moving the dislocations centers of $\sigma$ in the direction $-\mathbf{e_2}$, the elastic energy decreases.
This is made precise by the following proposition.
\begin{proposition}\label{prop:nicebottom}
Given $d>2r_0\ell$ and $\alpha\in (0,1)$, there exist $\overline e>0$ and $\de>0$ such that if $e_0 (\bu_i\cdot\mathbf{e}_1)>0$ for all $\bu_i\in \mathbf{B}$, $i=1,\dots, k$ and $|e_0|>\overline e$,  then for every  $(h, \sigma, H_{h, \s})\in X(e_0; \mathbf{B})$, with $\|h-d/\ell\|_{C^{1,\alpha}_\#(0,\ell)}\leq\de$ and $\sigma=\sum_{i=1}^k\bu_i\delta^\#_{z_i}$, with $z_j\cdot \mathbf{e}_2>0$ for some $j\in\{1,\dots, k\}$, we have
$$
\int_{\Om_h}W((H_{h, \s_s})_{sym})\, dz<\int_{\Om_h}W((H_{h, \s})_{sym})\, dz
$$
for all $s>0$ sufficiently small, where $\sigma_s:=\sum_{i=1, i\neq j}^k\bu_i\delta^\#_{z_i}+\bu_j
\delta^\#_{z_j-s\mathbf{e_2}}$. In particular, if $(h, \sigma, H_{h,\sigma})$ is a minimizer of \eqref{minG}, then all dislocations lie at the bottom of $\Om_h$, that is all the centers $z_i$ are of the form $z_i=(x_i, r_0)$.
\end{proposition}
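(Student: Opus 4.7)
The strategy is to differentiate the elastic energy
$\mathcal{E}(s):=\int_{\Omega_h}W((H_{h,\sigma_s})_{sym})\,dz$
at $s=0^+$ and show that, for $|e_0|$ large and $\|h-d/\ell\|_{C^{1,\alpha}}$ small enough, the derivative is strictly negative; the strict inequality for small $s>0$ will then follow. I would use the canonical decomposition $H_{h,\sigma_s}=e_0 Du_h+K_{h,\sigma_s}$, noting that $u_h$ does not depend on $s$. Since $s\mapsto \sigma_s*\ro_{r_0}$ is smooth in $L^2$ with derivative at $0$ equal to $\bu_j\pa_y\ro_{r_0}^\#(\cdot-z_j)$, Lemma~\ref{lm:ellcan} applied to difference quotients of \eqref{Kcanonical} guarantees that $s\mapsto K_{h,\sigma_s}$ is right-differentiable at $0$ in $L^2(\Omega_h;\M)$; call its derivative $\dot K$, which itself solves \eqref{Kcanonical} with right-hand side $\bu_j\pa_y\ro_{r_0}^\#(\cdot-z_j)$. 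Expanding $W=\tfrac12\C(\cdot){:}(\cdot)$ then yields
\begin{equation}\label{plan:deriv}
\mathcal{E}'(0^+)=e_0\int_{\Omega_h}\C E(u_h){:}\dot K\,dz+\int_{\Omega_h}\C(K_{h,\sigma})_{sym}{:}\dot K\,dz.
\end{equation}
Since neither $K_{h,\sigma}$ nor $\dot K$ depends on $e_0$, Lemma~\ref{lm:ellcan} bounds the second summand in absolute value by a constant $C_1=C_1(\B,r_0,\mu,\lambda,d,\ell)$, uniformly in the near-flat class.

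Next I would compute the leading term of \eqref{plan:deriv} in the flat limit $h\equiv d/\ell$. There $u_h=v_0$ and a direct calculation using \eqref{cxi} gives $\C E(v_0)=c_0\,\mathbf{e}_1\otimes\mathbf{e}_1$ with $c_0:=4\mu(\mu+\lambda)/(2\mu+\lambda)>0$, so the leading summand becomes $c_0 e_0\int_{\Omega_{d/\ell}}\dot K_{11}\,dz$. The first component of $\curl\dot K=\bu_j\pa_y\ro_{r_0}^\#(\cdot-z_j)$ reads $\pa_y\dot K_{11}=\pa_x\dot K_{12}-(\bu_j{\cdot}\mathbf{e}_1)\,\pa_y\ro_{r_0}^\#(\cdot-z_j)$. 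Integrating from $0$ to $y$, using $\dot K_{11}(\cdot,0)=0$, then integrating over $\Omega_{d/\ell}$, with $\ell$-periodicity killing the $\pa_x\dot K_{12}$ contribution, reduces the remaining integral to a one-dimensional expression in the function $\phi(t):=\int_\R\ro_{r_0}(x,t)\,dx$, which is supported in $[-r_0,r_0]$ with $\int_\R\phi=1$. Under the (natural) hypothesis $y_j>r_0$ together with $y_j+r_0<d/\ell$ (ensured by $d>2r_0\ell$, the interior ball condition $B_{r_0}(z_j)\subset\Omega_h^\#$, and near-flatness), both $\phi(-y_j)$ and the mass of $\phi(\cdot-y_j)$ outside $(0,d/\ell)$ vanish, and one obtains
$$\int_{\Omega_{d/\ell}}\dot K_{11}\,dz=-\bu_j{\cdot}\mathbf{e}_1.$$
Thus the leading term equals $-c_0 e_0(\bu_j{\cdot}\mathbf{e}_1)$, strictly negative under the hypothesis $e_0(\bu_j{\cdot}\mathbf{e}_1)>0$.

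To transport this to non-flat profiles I would invoke Schauder stability for the systems \eqref{eleq} and the linearization of \eqref{Kcanonical}, pulling back to the fixed reference strip $\Omega_{d/\ell}$ via a $C^{1,\alpha}$-diffeomorphism tracking $h$, to conclude $\|u_h-v_0\|_{C^1}\to 0$ and $\|\dot K-\dot K_{\rm flat}\|_{L^2}\to 0$ as $\delta\to 0$, uniformly in $z_j$ lying in a compact subset of $\Omega_h$. Combining this with \eqref{plan:deriv} yields
$$\mathcal{E}'(0^+)\le -\tfrac12 c_0|e_0|\min_i|\bu_i{\cdot}\mathbf{e}_1|+C_1$$
for $\delta$ small enough, which is strictly negative once $|e_0|>\bar e:=4C_1/(c_0\min_i|\bu_i{\cdot}\mathbf{e}_1|)$. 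The ``in particular'' statement is then immediate: if a minimizer had some $z_j$ with $y_j>r_0$, the triple $(h,\sigma_s,H_{h,\sigma_s})$ would be admissible for small $s>0$, with the same profile (hence unchanged surface energy and volume) but strictly smaller elastic energy, contradicting minimality. The main obstacle I expect lies in this last step: the constants in the Schauder estimates for the Lam\'e systems must be controlled uniformly as $h$ varies in a $C^{1,\alpha}$-neighbourhood of the flat profile and as $z_j$ varies in a suitable compact subdomain bounded away from $\pa\Omega_h$, which requires a careful transplantation of both systems to the fixed reference strip and bounding the perturbed coefficients in $C^\alpha$.
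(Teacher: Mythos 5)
Your proposal is correct and follows essentially the same first-variation argument as the paper: differentiate the elastic energy in $s$, use the canonical decomposition $H_{h,\s_s}=e_0Du_h+K_{h,\s_s}$ to isolate the $O(e_0)$ term, bound the cross term $\int\C (K_{h,\s})_{sym}:\dot K_{sym}\,dz$ by an $e_0$-independent constant via Lemma~\ref{lm:ellcan}, replace $E(u_h)$ by $E(v_0)$ up to an error controlled by $\|h-d/\ell\|_{C^{1,\alpha}_\#}$, and identify the leading contribution $-\frac{4\mu(\mu+\lambda)}{2\mu+\lambda}\,e_0(\bu_j\cdot\mathbf{e}_1)$. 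The one point where you diverge is the evaluation of $\int\C E(v_0):\dot K_{sym}\,dz$: you integrate the curl constraint formally (treating $\pa_x\dot K_{12}$ and $\pa_y\dot K_{11}$ as separate integrable functions, which for a field that is merely in $\mathbf{H}(\curl)$ needs justification) and defer the passage from the flat strip to the nearly flat domain to a Schauder/domain-stability argument, whereas the paper writes the curl-carrying part of $\dot H$ explicitly through second derivatives of a vector Poisson solution $w$ — so the remainder is an honest gradient $Dv$ annihilated by the equation for $u_h$ — and your ``main obstacle'' collapses to the single boundary term $\int_{\Gamma_h}D_xw_1(\nu\cdot\mathbf{e}_1)\,d\H^1=O\bigl(\|Dw_1\|_\infty\|h-d/\ell\|_{C^{1,\alpha}_\#}\bigr)$.
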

\begin{proof}
Assume, without loss of generality, that $e_0>0$. It is enough to show that for $e_0$ sufficiently large and $\de$ small
\beq\label{derG}
\frac{d}{ds}F(h,  \sigma_s, H_{h, \s_s})_{\bigl|_{s=0}}>0\,,
\eeq
where $(h, \sigma, H_{h, \s})\in X(e_0; \mathbf{B})$ is as in the statement.
For simplicity set 
$$
H_s:=H_{h, \s_s} \qquad\text{and}\qquad H:=H_0=H_{h, \s}\,.
$$
and recall that, by \eqref{ELHhs}, $H_s$ is the unique periodic solution to the following system
$$
\begin{cases}
\displaystyle\curl H_s=\sum_{i=1, i\neq j}^k\bu_i\ro^\#_{r_0}(\cdot-z_i)+\bu_j\ro^\#_{r_0}(\cdot-z_j-s\mathbf{e_2}) & \text{in $\Om_h$,}\\
\Div\, \C(H_s)_{sym}=0& \text{in $\Om_h$,}\vspace{5pt}\\
\C(H_s)_{sym}[\nu]=0 &\text{on $\Gamma_h$,}\vspace{5pt}\\
H_s[\,\mathbf{e}_1\,]=e_0\mathbf{e}_1 & \text{on $\{y=0\}$.}
\end{cases}
$$
Then the derivative in \eqref{derG} reduces to
\beq\label{schifo-1}
\frac{d}{ds}\biggl(\frac12\int_{\Om_h}\C(H_s)_{sym}:(H_s)_{sym}\, dz\biggr)_{\bigl|_{s=0}}=\int_{\Om_h}\C H_{sym}: \dot H_{sym}\, dz\,,
\eeq
where $\dot{H}=\frac{d}{ds}H_{s_{|_{s=0}}}$ is determined as the unique periodic solution to 
\begin{equation}\label{200}
\begin{cases}
\curl \dot{H}=-\bu_jD_y\ro^\#_{r_0}(\cdot-z_j) & \text{in $\Om_h$,}\\
\Div\, \C\dot{H}_{sym}=0& \text{in $\Om_h$,}\\
\C\dot{H}_{sym}[\nu]=0 &\text{on $\Gamma_h$,}\\
\dot{H}[\,\mathbf{e}_1\,]=0 & \text{on $\{y=0\}$.}
\end{cases}
\end{equation}

 We now consider the canonical decomposition  $H=e_0 Du_h+K_{h, \s}$, where $u_h$ and $K_{h, \s}$ are defined as in \eqref{eleq} and \eqref{Kcanonical}, respectively. Moreover, we decompose also $\dot{H}$ as $\dot{H}=Dv+K$, where 
\begin{equation}\label{201}
K:=
\left(
\begin{array}{cc}
-D_{yy}w_1 & D_{xy}w_1\\
-D_{yy}w_2 & D_{xy}w_2
\end{array}
\right)
\end{equation}
with $w=(w_1, w_2)$ the unique solution in $H^1_\#(\Om_h; \R^2)$ to
$$
\begin{cases}
\Delta w= -\bu_j \ro_{r_0}(\cdot-z_j) & \text{in $\Om_h$,}\\
w=0 & \text{on $\Gamma_h$,}\\
w=0 & \text{on $\{y=0\}$.}\
\end{cases}
$$
 We note that since $D_{xx}w=0$  and $\ro_{r_0}((x,0)-z_j)=0$ on $\{y=0\}$ (the last condition comes from the fact that $B_{r_0}(z_j)\subset\Om_h$), from the equation satisfied by $w$ we deduce that
$D_{yy}w=0$ on $\{y=0\}$, which in turn implies that $K[\, \mathbf{e}_1\, ]=-Dv[\,\mathbf{e}_1\,]=0$. Thus, $v$ can be chosen to be identically zero on $\{y=0\}$.
Then, by \eqref{eleq} we have
\beq\label{schifo0}
\begin{array}{l}
\displaystyle\int_{\Om_h} \C H_{sym}: \dot{H}_{sym}\, dz =e_0 \int_{\Om_h}\C E(u_h):\dot{H}_{sym}\, dz +\int_{\Om_h}\C (K_{h, \s})_{sym}:\dot{H}_{sym}\, dz\vspace{10pt}\\
\displaystyle= e_0 \int_{\Om_h}\C E(u_h): E(v)\, dz +e_0 \int_{\Om_h}\C E(u_h):K_{sym}\, dz+\int_{\Om_h}\C (K_{h, \s})_{sym}:\dot{H}_{sym}\, dz\vspace{10pt}\\
\displaystyle= e_0 \int_{\Om_h}\C E(u_h):K_{sym}\, dz+\int_{\Om_h}\C (K_{h, \s})_{sym}:\dot{H}_{sym}\, dz\vspace{10pt}\\
\displaystyle=e_0 \int_{\Om_h}\C E(v_0):K_{sym}\, dz+ e_0 \int_{\Om_h}\bigl(\C E(u_h)-\C E(v_0)\bigr):K_{sym}\, dz\vspace{10pt}\\
\displaystyle \quad+\int_{\Om_h}\C (K_{h, \s})_{sym}:\dot{H}_{sym}\, dz.
\end{array}
\eeq
By \cite[Lemma~6.10]{FFLM2} for every $\e>0$  there exists $\de>0$ such that  $\|u_h-v_0\|_{C^{1,\alpha}_\#(\Om_h;\R^2)}\leq \e$,
where $v_0$ is defined in \eqref{v0}. 
Hence,
\beq\label{schifo1}
\int_{\Om_h}\bigl|\bigl(\C E(u_h)-\C E(v_0)\bigr):K_{sym}\bigr|\, dz\leq C\e
\eeq
for some positive constant $C$ independent of $e_0$. Observe now that, using \eqref{cxi}, \eqref{v0}, \eqref{200}, and \eqref{201}, 
we have
\begin{align*}
&\int_{\Om_h}\C E(v_0):K_{sym}\, dz=-\frac{4\mu(\mu+\lambda)}{2\mu+\lambda}\int_{\Om_h}D_{yy}w_1\, dz\\
&\qquad\qquad=-\frac{4\mu(\mu+\lambda)}{2\mu+\lambda}\biggl[\int_{\Om_h}\Delta w_1\, dz-\int_{\Gamma_h}D_xw_1(\nu\cdot\mathbf{e}_1)\,d\H^1(z)\biggr]\\
&\qquad\qquad\geq\frac{4\mu(\mu+\lambda)}{2\mu+\lambda}\biggl[(\bu_j\cdot\mathbf{e}_1)\int_{\Om_h}\ro_{r_0}(z-z_j)\, dz-\ell\|Dw_1\|_{L^\infty(\Omega_h;\R^2)}\|h-d/\ell\|_{C^{1,\alpha}_\#(0,\ell)}\biggr]\,,
\end{align*}
where the second equality is due to the fact that $D_xw_1$ is $\ell$-periodic in the $x$-direction. From the above inequality, recalling \eqref{schifo-1}, \eqref{schifo0}, \eqref{schifo1}, and the assumption on $h$ we get
\begin{multline*}
\frac{d}{ds}F(h,  \sigma_s, H_{h, \s_s})_{\bigl|_{s=0}}>
e_0\biggl[\frac{4\mu(\mu+\lambda)}{2\mu+\lambda}(\bu_j\cdot\mathbf{e}_1)\int_{\Om_h}\ro_{r_0}(z-z_j)\, dz-C(\e+\delta) \biggr]\\
+ \int_{\Om_h}\C (K_{h, \s})_{sym}:\dot{H}_{sym}\, dz\,,
\end{multline*}
for a possibly larger constant $C$ depending on the $L^\infty$ bounds on $Dw_1$, hence on the $C^{1,\alpha}$ norm of $h$.
Claim \eqref{derG} follows by taking $\e$ small enough and $e_0$ large enough. Indeed,  by Lemma~\ref{lm:ellcan}, \eqref{Kcanonical}, and \eqref{200}, 
$$
\biggl| \int_{\Om_h}\C (K_{h, \s})_{sym}:\dot{H}_{sym}\, dz\biggr|\leq C |\mathbf{b_j}|\|D_y\ro_0\|_{L^2(\R^2)}\|\sigma*\ro_{r_0}^\#\|_{L^2(\Om_h; \R^2)}\,,
$$
where $C$ is a constant depending only on the Lipschitz constant of $h$.
\end{proof}
\begin{remark}\label{rm:orient}
It can be shown that when $|e_0|$ is sufficiently large dislocations with Burgers vectors $\bu$  satisfying
$$
e_0(\bu\cdot \mathbf{e}_1)>0
$$
are energetically favorable compared to dislocations having the same centers but opposite Burgers vectors, see Corollary~\ref{cor:orient}. 
\end{remark}

In the next theorem we show that for suitable choices of the parameters global minimizers must have all the dislocations lying on the film/substrate interface.
\begin{theorem}\label{prop:exnuova}
Assume $\B\neq\emptyset$, fix $d>2r_0 \ell$ and let $|e_0|>\bar e$, where $\bar e$ is as in  Proposition~\ref{prop:nicebottom}.  Assume also $e_0 (\bu_j\cdot\mathbf{e}_1)>0$ for all $\bu_j\in \mathbf{B}$. Then there exists $\bar \gamma$ such that if $\gamma> \bar \gamma$   any global minimizer $(\bar{h}, \bar{\sigma}, \bar{H})$
of the problem \eqref{minG}  has all dislocations lying at the bottom of $\Om_h$, i.e., $\bar{\sigma}=\sum_{i=1}^k\bu_i\delta^\#_{z_i}$, where all the centers $z_i$ are of the form $z_i=(x_i, r_0)$.
\end{theorem}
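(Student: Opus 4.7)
The plan is to argue by contradiction, reducing the claim to Proposition~\ref{prop:nicebottom}. If no such $\bar\gamma$ existed, there would be a sequence $\gamma_n \to \infty$ and minimizers $(\bar h_n, \bar\sigma_n, \bar H_n)$ of \eqref{minG} (with surface tension $\gamma_n$) such that $\bar\sigma_n$ possesses a dislocation center $z_{i,n}$ with $z_{i,n}\cdot\mathbf{e}_2 > r_0$. The goal will then be to show that for large $n$ the profile $\bar h_n$ is so close to $d/\ell$ in $C^{1,\alpha}_\#(0,\ell)$ that Proposition~\ref{prop:nicebottom} applies and produces an energy decrease by sliding the offending dislocation downward, contradicting minimality.

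The first ingredient is a comparison with a flat competitor. Since $d > 2r_0\ell$, the constant profile $h_0\equiv d/\ell$ together with any admissible dislocation measure $\sigma_0$ placed at height $r_0$ is admissible, and its total energy equals $E_0 + \gamma_n\ell$, where $E_0 := \int_{\Om_{h_0}} W((H_{h_0,\sigma_0})_{sym})\,dz$ is independent of $n$. Minimality gives
\begin{equation*}
\int_{\Om_{\bar h_n}} W((\bar H_n)_{sym})\,dz + \gamma_n\bigl(\H^1(\Gamma_{\bar h_n}) - \ell\bigr) + 2\gamma_n \H^1(\Sigma_{\bar h_n}) \leq E_0,
\end{equation*}
so the elastic energies remain uniformly bounded while $\H^1(\Gamma_{\bar h_n}) \to \ell$ and $\H^1(\Sigma_{\bar h_n}) \to 0$. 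Applying the compactness of Theorem~\ref{th:existence}, up to a subsequence $\bar h_n \to h^*$ in $L^1(0,\ell)$ with $|\Om_{h^*}| = d$ and $\H^1(\Gamma_{h^*}) + 2\H^1(\Sigma_{h^*}) \leq \ell$. Since any $h \in AP(0,\ell)$ of average $d/\ell$ satisfies $\H^1(\Gamma_h) \geq \ell$ with equality iff $h$ is constant, we deduce $\Sigma_{h^*} = \emptyset$ and $h^*\equiv d/\ell$.

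The main obstacle is to upgrade $L^1$ convergence to strong $C^{1,\alpha}_\#(0,\ell)$ convergence, as is required to invoke Proposition~\ref{prop:nicebottom}. The key tools are: (i) the uniform interior ball condition supplied by Lemma~\ref{pallaint}, valid with radius $\rho < \min\{r_0, 1/(e_0^2 W_0)\}$ independent of $n$, which both rules out a limiting singularity and produces a one-sided semi-concavity; (ii) the quantitative bound $\gamma_n(\H^1(\Gamma_{\bar h_n})-\ell) + 2\gamma_n\H^1(\Sigma_{\bar h_n}) \leq E_0$, which forces cusps and vertical cracks to disappear for large $n$; and (iii) the regularity apparatus of Theorem~\ref{th:regolarita}, in particular the $C^{1,\alpha}$ decay obtained via Proposition~\ref{g+n} and Proposition~\ref{2decays}. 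Using the Euler--Lagrange equation \eqref{eq:EL}, the uniform $L^\infty$ bound on $W((\bar H_n)_{sym})$ off the dislocation balls forces the curvature of $\Gamma_{\bar h_n}$ there to scale like $1/\gamma_n$. A bootstrap argument in the spirit of Step 3 of the proof of Theorem~\ref{th:regolarita}, carried out uniformly in $n$, then yields equi-$C^{1,\alpha}$ bounds and, combined with the $L^\infty$ convergence of $\bar h_n$ to $d/\ell$, strong convergence $\bar h_n \to d/\ell$ in $C^{1,\alpha}_\#(0,\ell)$ for every $\alpha\in (0,1/2)$.

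Once this compactness is established, for $n$ sufficiently large $\|\bar h_n - d/\ell\|_{C^{1,\alpha}_\#(0,\ell)} < \delta$, where $\delta$ is the constant provided by Proposition~\ref{prop:nicebottom} (with the given $\alpha$ and $\bar e$). The hypothesis $e_0(\bu_j\cdot\mathbf{e}_1) > 0$ for every $\bu_j\in\B$ allows Proposition~\ref{prop:nicebottom} to be applied to the non-basal dislocation center $z_{i,n}$, producing an admissible competitor $(\bar h_n, (\bar\sigma_n)_s, H_{\bar h_n,(\bar\sigma_n)_s})$ with strictly smaller elastic energy and the same surface contribution. This contradicts the minimality of $(\bar h_n, \bar\sigma_n, \bar H_n)$ and completes the proof.
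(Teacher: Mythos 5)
Your overall strategy coincides with the paper's: argue by contradiction along a sequence $\gamma_n\to\infty$, show that the minimizing profiles converge to the flat one in $C^{1,\alpha}_\#$, and then invoke Proposition~\ref{prop:nicebottom}. The opening comparison with a flat competitor and the identification of the limit profile as $d/\ell$ are fine. The gap lies in the step you yourself flag as ``the main obstacle'', the upgrade to $C^{1,\alpha}$ convergence, and it is twofold. First, your proposed mechanism is circular: you want to use the Euler--Lagrange equation \eqref{eq:EL} together with ``the uniform $L^\infty$ bound on $W((\bar H_n)_{sym})$'' to force the curvature to scale like $1/\gamma_n$, but \eqref{eq:EL} is only available once $\bar h_n$ is already known to be $C^2$, no uniform pointwise bound on $W((\bar H_n)_{sym})$ along $\Gamma_{\bar h_n}$ has been established (such a bound would itself require uniform $C^{1,\alpha}$ regularity of the boundary), and the Lagrange multipliers are not controlled. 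The paper instead derives uniform Morrey decay $\int_{B_r(z_0)\cap\Om^\#_{h_n}}|H_{h_n,\sigma_n}|^2\,dz\le Cr^{2\beta}$ via the decomposition $H_{h_n,\sigma_n}=Dv_n+K_n$ and the Grisvard-type estimates of Propositions~\ref{g+n}--\ref{2decays}, and then feeds this into a direct comparison argument with chord-type competitors, suitably modified to go around the dislocation balls.

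Second, and more seriously, that comparison argument requires replacing the volume constraint by a penalization with constants \emph{uniform in $n$}, and Theorem~\ref{sollevamento} cannot supply this: by \eqref{eq:beta} its constant $\beta$ depends on $\|\bar h-d/\ell\|_{L^2(0,\ell)}$, which tends to $0$ along your sequence (so $\beta\to\infty$), and the theorem moreover requires $\bar h^-$ non-flat. The paper devotes its Step~2 to proving a uniform penalization for the rescaled functionals $G_n$, and the obstruction there is precisely the phenomenon this model introduces: the truncation used to restore the volume constraint may be blocked because a dislocation ball $B_{r_0}(z_{j,m})$ is wedged against the top of the profile. The paper rules this out by passing to the limit, obtaining a flat configuration with a dislocation ball tangent to $\Gamma_{d/\ell}$ from below, and invoking Proposition~\ref{prop:nicebottom} a \emph{first} time; thus the hypotheses $|e_0|>\bar e$ and $e_0(\bu_j\cdot\mathbf{e}_1)>0$ are needed already at this stage, not only in the final contradiction. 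This mechanism is entirely absent from your proposal, and without it the uniform $C^{1,\alpha}$ bounds --- and hence the applicability of Proposition~\ref{prop:nicebottom} at the end --- are not justified.
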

\begin{proof}
It is enough to show that given $\gamma_n\to+\infty$ 
and corresponding global minimizers $(h_n, \sigma_n, H_{h_n,\sigma_n})\in X(e_{0};\mathbf{B})$ of  \eqref{minG} with $\gamma_n$ in place of $\gamma$, then for $n$ sufficiently large the dislocation measures $\sigma_n$ have all the centers lying at the bottom. Note that $(h_n,\sigma_n,H_{h_n,\sigma_n})$ is a global minimizers of 
$$
\min\big\{G_n(h,\sigma, H):\, (h, \sigma, H)\in X(e_0;\mathbf{B}),\, |\Om_h|=d\big\}\,,
$$
where $G_n$ is the rescaled functional
$$
G_n(h,\sigma, H):=\frac{1}{\gamma_n}\int_{\Om_{h}}W(H_{sym})\, dz+\H^1(\Gamma_h)+2\H^1(\Sigma_h)\,.
$$

\noindent{\bf Step 1.} (Uniform convergence to the flat configuration) By the compactness result in \cite[Proposition~2.2 and Lemma~2.5]{FFLM} and the semicontinuity proved in \cite[Lemma~2.1]{BC}, there exist $h\in AP(0,\ell)$ and a  subsequence (not relabeled) such that $h_n\to h$ in 
$L^{1}(0,\ell)$ and
$$
\H^1(\Gamma_h)+2\H^1(\Sigma_h)\leq \liminf_n\bigl(\H^1(\Gamma_{h_n})+2\H^1(\Sigma_{h_n}\bigr))\,.
$$
 Thus, if we consider any $g\in AP(0,\ell)$  such that $|\Om_g|=d$ and  $(\sigma,H)$ such that $(g,\sigma,H)\in X(e_0;\mathbf{B})$
  \begin{align*}
  \H^1(\Gamma_h)+2\H^1(\Sigma_h)&\leq \liminf_n\bigl(\H^1(\Gamma_{h_n})+2\H^1(\Sigma_{h_n}\bigr)\\
 & \leq \liminf_nG_n(h_n,\sigma_n,H_{h_n,\sigma_n})\leq \liminf_nG_n(g,\sigma,H)= \H^1(\Gamma_g)+2\H^1(\Sigma_g)\,.
 \end{align*}
 Therefore $h$ minimizes
 $$
 g\mapsto  \H^1(\Gamma_g)+2\H^1(\Sigma_g)\
 $$
 among all functions in $AP(0,\ell)$ such that $|\Om_g|=d$. Hence $h$ is the flat profile $h\equiv d/\ell$. Note that from the above chain of inequalities, taking $g=d/\ell$, we have in particular that 
 $$
 \ell=\H^1(\Gamma_{d/\ell})=\lim_n \bigl(\H^1(\Gamma_{h_n})+2\H^1(\Sigma_{h_n})\bigr)\,.
 $$
 Up to a subsequence we may assume that $\{\Gamma_{h_n}\cup \Sigma_{h_n}\}$ converge in the Hausdorff metric to some compact connected set $K$. By the compactness result \cite[Proposition~2.2]{FFLM}, we have that, up to a  subsequence (not relabeled),
 $\R^2\setminus\Om_n^\#\to \R^2\setminus (\R\times (0,d/\ell))$ in the Hausdorff metric.  From this convergence  it follows 
 (see the proof of \cite[Lemma~2.5]{FFLM}) that $\Gamma_{d/\ell}\subset K$. 
 Hence, by Go\l\c{a}b's  theorem and observing that $\H^1(\overline{\Gamma_{h_n}\cup \Sigma_{h_n}})=\H^1(\Gamma_{h_n}\cup \Sigma_{h_n})$, we have
$$
\H^1(\Gamma_{d/\ell})\leq\H^1(K)\leq\lim_{n\to\infty}\H^1(\Gamma_{h_n}\cup \Sigma_{h_n})=\H^1(\Gamma_{d/\ell})\,.
$$
Therefore, $\H^1(K\setminus\Gamma_{d/\ell})=0$. Since $K$ is the Hausdorff limit of graphs, for all $x\in[0,\ell]$
the section $K\cap(\{x\}\times\R)$ is connected.  Hence, $K=\overline\Gamma_{d/\ell}$. From this equality and the definition of Hausdorff convergence, we get that $\sup_{[0,\ell]}|h_n-d/\ell|\to 0$ as $n\to\infty$.

\noindent{\bf Step 2.} (Penalization) We now show that there exists $\Lambda$ sufficiently large and independent of $n$ such that every  
minimizer of
\beq\label{minGpenn}
\min\big\{G_n(h,\sigma, H)+\Lambda||\Om_h|-d|:\, (h, \sigma, H)\in X(e_0;\mathbf{B})\big\}
\eeq
satisfies the volume constraint associated with $d$. 
 We argue by contradiction assuming that there exist  sequences $\{\Lambda_m\}$ with $\Lambda_m\to\infty$ and $\{n_m\}$, and 
  minimizers $(g_m, \tau_m, H_{g_m, \tau_m})\in X(e_0; \B)$ of  \eqref{minGpenn} with $n=n_m$ such that $|\Om_{g_m}|\neq d$.
 Arguing as in Step 1 of the proof of Theorem~\ref{sollevamento}, one can show that for $n$ large enough $|\Om_{g_n}|>d$.
 We can now proceed as in Step 2 of the same proof to show that either we can cut out a small region from $\Om_{g_m}$, thus strictly reducing the total energy and contradicting the minimality, or we can show that $g_m\to d/b$  uniformly (see \eqref{g max}) and for every $m$ there exist
 a dislocation ball $B_{r_0}(z_m)$ touching $\Gamma_{g_m}$ at a point of maximum height. In particular, up to  a  subsequence (not relabeled),  $\tau_m\wto \tau$ with $\tau=\sum_{i=1}^k\bu_i\delta^\#_{z_i}$ such that we have $z_j=(x_j, d/\ell-r_0)$  for some $j\in \{1, \dots, k\}$; i.e., the corresponding ball $B_{r_0}(z_j)$ is tangent to $\Gamma_{d/\ell}$.  Note also that $H_{g_m, \tau_m}\wto \bar H$ in
 $L^2_{loc}(\Om_{d/\ell}; \mathbb{M}^{2\times 2})$ with $\curl \bar H=\tau*\ro_{r_0}$ and that $\bar H[\,\mathbf{e}_1\, ]= e_0\mathbf{e}_1$. This can be shown arguing as in the proof of Theorem~\ref{th:existence}.
  Observe now that given $\eta\in (0, d/\ell)$, $\sigma \in  \md(\Om_{d/\ell-\eta}; \mathbf{B})$ and 
  $H\in \Hc{{d/\ell+\eta}}$ such that $\curl H=\s*\ro_0$ in $\Om_{d/\ell+\eta}$ and $H[\, \mathbf{e}_1\,]=e_0\mathbf{e}_1$, since $g_m\to d/b$  uniformly, we have that $g_m(x)\le d/\ell+\eta$ for all 
  $x\in(0,\ell)$ and all $m$ sufficiently large. Hence, by  the minimality of $(g_m, \tau_m, H_{g_m, \tau_m})$ and lower semicontinuity we have
  \begin{align*}
  \int_{\Om_{d/\ell}}W(\bar H_{sym})\, dz& \leq \liminf_m  \int_{\Om_{g_m}}W((H_{g_m, \tau_m})_{sym})\, dz\\
  & \leq 
   \liminf_m \int_{\Om_{g_m}}W(H_{sym})\, dz=\int_{\Om_{d/\ell}}W(H_{sym})\, dz\,.
  \end{align*}
 Since $d>2r_0\ell$, by the arbitrariness of $\eta$, $\sigma$ and $H$ we conclude that $\bar H= H_{d/\ell, \tau}$ and $(\tau, H_{d/b, \tau})$ 
  is  a solution of 
  \begin{multline*}
  \min\biggl\{\int_{\Om_{d/\ell}}W(H_{sym})\, dz:\,H\in\Hc{{d/\ell}},\,\\
    \sigma\in \md(\Om_{{d/\ell}}; \B) \text{ such that }(d/\ell, \sigma, H)\in X(e_0; \mathbf{B})\biggr\}\,,
  \end{multline*}
 which contradicts Proposition~\ref{prop:nicebottom},  since $|e_0|>\bar e$ and  there is at least one dislocation which is not lying on the bottom.
 
 \noindent{\bf Step 3.} ($C^{1}$-convergence) By Step 2 and Lemma~\ref{pallaint}, we deduce that $\Om_{h_n}^\#$ satisfies a uniform interior ball condition with any radius $\ro<\min\{1/\Lambda, r_0\}$ and thus independent of $n$. This property, together with the uniform convergence proved in Step 1, implies that for $n$ large $\Sigma_{h_n}\cup \Sigma_{h_n, c}=\emptyset$. This can be shown arguing as in Step 2 of the proof
 of \cite[Theorem~6.9]{FuMo}. In turn, by Theorem~\ref{th:regolarita}, we deduce that for $n$ sufficiently large $\Gamma_{g_n}^\#$ is
 of class $C^{1,\alpha}$ for all $\alpha\in (0, 1/2)$. We now show that in fact $h_n\to d/\ell$ in $C^1_\#([0,\ell])$. 
 
 To this aim, fix $\ro< \min\{1/\Lambda, r_0\}$.  By Step 1 we have $a_n:=\sup_{x\in [0,\ell)}|h_n(x)-d/\ell|\to 0$. Take now $z=(x, h_n(x))$ and the corresponding ball $B_{\ro}(z_0)\subset  \Om^\#_{h_n}\cup(\R\times(-\infty,0])$ touching $\Gamma_{h_n}$ at $z$.   
If $h_n(x)=d/\ell-a_n$ then $h'_n(x)=0$ since $h_n\geq d/\ell-a_n$. Otherwise, let us set
$\Gamma_n:=\pa B_{\ro}(z_0)\cap \{(x,y):\, y\geq d/\ell-a_n\}$.  Since $a_n\to 0$ we have $\H^1(\Gamma_n)\to 0$. Therefore, since
$z\in \Gamma_n$, the slope of the tangent to $\pa B_{\ro}(z_0)$ at $z$ is bounded by a small constant $\omega(\H^1(\Gamma_n))$, where $\omega$ is a continuity modulus such that $\omega(0+)=0$. This shows that $h'_n\to 0$ uniformly in $[0, \ell]$ as claimed.  

\noindent{\bf Step 4.} ($C^{1,\alpha}$-convergence and conclusion) Write $\sigma_n=\sum_{i=1}^k\bu_i\delta^\#_{z_{i,n}}$, $z_{i,n}=(x_{i,n}, y_{i,n})$. We now decompose $H_{h_n, \sigma_n}=Dv_n+K_n$, where
$$
K_n:=
\left(
\begin{array}{cc}
k_{1,n} & 0\\
k_{2,n} & 0
\end{array}
\right)\,,
$$
with 
$$
k_{l,n}(x,y):=-\sum_{i=1}^k(\bu_i\cdot \mathbf{e}_l)\int_0^y\ro_{r_0}(x-x_{i,n}, t-y_{i,n})\, dt\,, \text{ for $l=1,2$,}
$$
and $v_n$ satisfies
$$
\begin{cases}
\Div\, \C E(v_n)=-\Div\, \C (K_n)_{sym} & \text{in $\Om_{h_n}$,}\\
\C E(v_n)[\nu]=- \C (K_n)_{sym}[\nu] & \text{on $\Gamma_{h_n}$.}
\end{cases}
$$
Since $h'_n\to 0$ uniformly, 
we can argue as in \cite[Theorem~6.10]{FuMo}  to prove that for every $\beta\in (\frac 12,1)$ there exist $C>0$ and  a radius $\bar r>0$, both independent of $n$,   such that for all $z_0\in \Gamma_{h_n}$ and for all $r\leq\bar r$,
$$
\int_{B_r(z_0)\cap\Om^\#_{h_n}}|\nabla v_n|^2\, dz\leq C r^{2\beta}
$$
for $n$ large enough.
In turn, since $K_n$ is smooth this implies that for a possibly larger constant  $C>0$ (still independent of $n$) 
$$
\int_{B_r(z_0)\cap\Om^\#_{h_n}}|H_{ h_n, \s_n}|^2\, dz\leq C r^{2\alpha}
$$
 for all $z_0\in \Gamma_{h_n}$, for all $r\leq\bar r$, and for $n$ large enough. From this estimate, arguing exactly as in Step 3 of Theorem~\ref{th:regolarita}, we deduce that there exists a constant $C>0$ such that for all $n$ sufficiently large
 $$
 \medint_{x_0}^{x_0+r}\Bigl|\bar h_n'(x)- \medint_{x_0}^{x_0+r} h_n'\, ds\Bigr|\, dx
 \leq C r^{\beta-\frac12}
 $$
for all $x\in [0,\ell)$ and $r<\bar r$. By \cite[Theorem~7.51]{AFP}, this implies that $\|h_n\|_{C^{1,\beta-\frac12}_\#([0,\ell])}$ is uniformly bounded for $n$ sufficiently large. By the arbitrariness of $\beta\in (\frac12, 1)$, we have shown that $h_n\to d/\ell$ in 
$C^{1,\alpha}_\#([0,\ell])$ for all $\alpha \in (0,\frac12)$. Recalling the choice of $\bar e$, the conclusion of the theorem follows from
Proposition~\ref{prop:nicebottom}.
 \end{proof}

\section{The nucleation energy}\label{sec:nucl}
In this section we will address the nucleation of dislocations. Fix a finite set $\mathcal{B}^o$ of fundamentals Burgers vectors, which are linearly independent with respect to integer linear combinations; i.e.,
if $\bu^o_1$, $\dots$, $\bu^o_N$ are distinct elements of $\mathcal{B}^o$ such that $\sum_{i=1}^N n_i \bu^o_i=0$, with $n_i\in \Z$, then 
$n_1=\dots=n_N=0$.  Define 
$$
\mathcal{B}:=\Bigl\{\sum_{i=1}^Nm_i\bu^o_i:\, m_i\in \Z,\, \bu^o_i\in\mathcal{B}^o,\,, N\in\N\Bigr\}\,.
$$
For every $\bu\in \mathcal{B}$ we set
$$
\|\bu\|^2_{\mathcal{B}^o}:=\sum_{i=1}^N|m_i||\bu^o_i|^2\,,
$$
where the coefficients $m_i$ are such that $\bu=\sum_{i=1}^Nm_i \bu_i^o$.

Given $h\in AP(0,\ell)$, we now define the admissible dislocation measures in $\Om^\#_h$, by setting
\begin{multline*}
\md(\Om_h):=\\
\biggl\{\sigma\in \mathcal{M}(\Om_h^\#; \R^2):\, \sigma=\sum_{i=1}^{k}\bu_i\delta^\#_{z_i},\, \bu_i\in \mathcal{B}, \, z_i\in \Om_h, \text{ with } B_{r_0}(z_i)\subset\Om_h^\#,\, k\in \N\biggr\}\,.
\end{multline*}
If $\sigma=\sum_{i=1}^k\bu_i\delta^\#_{z_i}\in \md(\Om_h)$, where the $z_i$'s are all distinct,  then the corresponding nucleation energy will be defined as
\begin{equation}\label{ns}
N(\sigma):=c_o\sum_{i=1}^k \|\bu_i\|^2_{\mathcal{B}^o}\,,
\end{equation}
for some (material) constant $c_o>0$.
\subsection{The minimization problem}
For any fixed mismatch strain $e_0\neq 0$ we introduce the space of admissible configurations
\begin{multline*}
X_{e_0}:=\Big\{(h, \sigma, H):\, h\in AP(0,\ell),\, \sigma\in \md(\Om_h),\, H\in \mathbf{H}_\#(\curl; \Om_h; \mathbb{M}^{2\times2})\\
\text{ such that } \curl H=\sigma*\ro_{r_0}\text{ and } H[\,\mathbf{e}_1\, ]=e_0\mathbf{e}_1\Big\}\,,
\end{multline*}
In this section we shall discuss the minimization problem 
\beq\label{minN}
\min\bigl\{F(h, \sigma, H)+ N(\s):\, (h, \sigma, H)\in X_{e_0},\, |\Om_h|=d \bigr\}\,,
\eeq
where $F$ is defined as in \eqref{sharp model energy} and $d>0$ is the given total mass. 
We start by observing that the minimization problem has a solution.
\begin{theorem}\label{th:existenceN}
The minimization problem \eqref{minN} admits a solution.
\end{theorem}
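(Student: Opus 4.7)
The plan is to reduce the problem to the setting of Theorem~\ref{th:existence} by using the nucleation term $N$ to obtain \emph{a priori} bounds on both the number of dislocations and the set of Burgers vectors appearing in a minimizing sequence, and then to apply the compactness and lower semicontinuity arguments already established.

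First, I would pick a minimizing sequence $\{(h_n,\sigma_n,H_n)\}\subset X_{e_0}$ with $|\Om_{h_n}|=d$, write $\sigma_n=\sum_{i=1}^{k_n}\bu_{i,n}\delta^\#_{z_{i,n}}$ with distinct centers and $\bu_{i,n}\in\mathcal{B}$, and note that $N(\sigma_n)\leq C$ for some constant $C$ (since the flat configuration with $\sigma=0$ is admissible and has finite energy). The key combinatorial observation is that, since $\mathcal{B}^o=\{\bu^o_1,\dots,\bu^o_N\}$ is $\mathbb{Z}$-linearly independent and each $\bu\in\mathcal{B}\setminus\{0\}$ decomposes uniquely as $\bu=\sum m_j\bu^o_j$, one has
$$
\|\bu\|^2_{\mathcal{B}^o}\geq \min_{1\leq j\leq N}|\bu^o_j|^2 =:m_o>0 \quad\text{for every }\bu\in\mathcal{B}\setminus\{0\}.
$$
Therefore $c_o m_o k_n\leq N(\sigma_n)\leq C$, so the number of dislocations is uniformly bounded, and moreover each $\|\bu_{i,n}\|^2_{\mathcal{B}^o}\leq C/c_o$, which in turn forces the integer coefficients $m_j$ of $\bu_{i,n}$ to lie in a finite set. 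Consequently only finitely many distinct vectors in $\mathcal{B}$ can occur along the sequence.

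Next, up to a (not relabeled) subsequence, I may assume that $k_n=k$ is constant and that, after a relabeling, $\bu_{i,n}=\bu_i$ does not depend on $n$ for each $i=1,\dots,k$. In particular the sequence $(h_n,\sigma_n,H_n)$ now lies in $X(e_0;\mathbf{B})$ with $\mathbf{B}:=\{\bu_1,\dots,\bu_k\}$, and
$$
F(h_n,\sigma_n,H_n)+N(\sigma_n)=F(h_n,\sigma_n,H_n)+c_o\sum_{i=1}^k\|\bu_i\|^2_{\mathcal{B}^o},
$$
where the second term is a fixed constant along the sequence. Hence $\{(h_n,\sigma_n,H_n)\}$ is a minimizing sequence for $F$ among configurations in $X(e_0;\mathbf{B})$ with $|\Om_{h_n}|=d$ (possibly up to an additive constant). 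Applying verbatim the compactness and lower semicontinuity argument of the proof of Theorem~\ref{th:existence}, I obtain a limit triple $(h,\sigma,H)\in X(e_0;\mathbf{B})\subset X_{e_0}$ with $|\Om_h|=d$, $\sigma=\sum_{i=1}^k\bu_i\delta^\#_{z_i}$, and
$$
F(h,\sigma,H)\leq \liminf_n F(h_n,\sigma_n,H_n).
$$
Since $N(\sigma)=c_o\sum_{i=1}^k\|\bu_i\|^2_{\mathcal{B}^o}=N(\sigma_n)$ for the chosen subsequence, adding $N$ preserves the inequality and shows that $(h,\sigma,H)$ is a minimizer.

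The only subtle point I expect to address carefully is the discreteness of $\|\cdot\|_{\mathcal{B}^o}$ on $\mathcal{B}$ that makes the bound $N(\sigma_n)\leq C$ yield both a bound on $k_n$ and the finiteness of the Burgers-vector set along the sequence; after that the remaining work is simply an appeal to Theorem~\ref{th:existence}, noting that in the limit one may have coincidences among the centers $z_i$ (which is allowed by the definition of $\md(\Om_h)$, since we only required that $B_{r_0}(z_i)\subset\Om^\#_h$), and that the stray case of a center approaching $x=\ell$ is handled exactly as in the proof of Theorem~\ref{th:existence} by lateral periodicity.
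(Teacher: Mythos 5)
Your proof is correct and takes essentially the same approach as the paper: the nucleation term bounds the number of dislocations and the Burgers vectors along a minimizing sequence (via $\min\{\|\bu\|_{\mathcal{B}^o}:\,\bu\in\mathcal{B}\setminus\{0\}\}>0$), after which the compactness and lower semicontinuity argument of Theorem~\ref{th:existence} applies verbatim. One small caveat: if centers merge in the limit, your equality $N(\sigma)=N(\sigma_n)$ should be weakened to $N(\sigma)\le N(\sigma_n)$ (by subadditivity of $\|\cdot\|^2_{\mathcal{B}^o}$ under addition of Burgers vectors), which is the inequality actually needed and is what the paper records as $N(\sigma)\le\liminf_n N(\sigma_n)$.
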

\begin{proof}
Let $\{(h_n, \sigma_n, H_n)\}\subset X_{e_0}$ be a minimizing sequence. 
Note that since $\sup_n N(\sigma_n)<\infty$ and $\min\{\|\bu\|_{\mathcal{B}^o}:\, \bu\in \mathcal{B}\setminus\{0\}\}>0$, we have that the number $k_n$ of centers of the dislocation measures $\sigma_n=\sum_{i=1}^{k_n}\bu_{i,n}\delta^\#_{z_{i,n}}$ is uniformly bounded and 
$\sup_{i,n}\|\bu_{i,n}\|_{\mathcal{B}^o}<+\infty$.  Moreover, arguing as in the proof of Theorem~\ref{th:existence} we have, up to a subsequence, that 
\begin{itemize}
\item[i)] $h_n\to h$ in $L^1(0,\ell)$;
\item[ii)] $\R^2\setminus\Om^\#_{h_n}\to \R^2\setminus\Om^\#_h$ in the sense of the Hausdorff metric,
\end{itemize}
for some $h\in AP(0,\ell)$. 
Therefore, up to extracting a further  subsequence (not relabeled), if needed,  we can assume that there exists 
$k\in \N$ such that 
$\sigma_n=\sum_{i=1}^{k}\bu_{i,n}\delta^\#_{z_{i,n}}$, where $\bu_{i,n}\to \bu_i\in \mathcal{B}$ and $z_{i,n}\to z_i\in \Om_h$, with $B_{r_0}(z_i)\subset\Om_h^\#$. Setting $\sigma=\sum_{i=1}^k\bu_i\delta^\#_{z_{i}}$ and observing that 
$$
N(\sigma)\leq \liminf_n N(\s_n)\,,
$$
we may now conclude arguing exactly as in the proof of Theorem~\ref{th:existence}.
\end{proof}
\begin{remark}[Regularity]  Let $(\bar h,\bar \sigma, H_{\bar h,\s})\in  X_{e_0}$ be a minimizer of problem \eqref{minN}. 
Writing $\bar\s=\sum_{i=1}^k\bu_i\delta^\#_{z_i}$, with $z_i\neq z_j$ if $i\neq j$, set $\B:=\{\bu_1, \dots, \bu_k\}$.  Observe that 
 $(\bar h,\bar \sigma, H_{\bar h,\bar\s})\in X(e_0; \B)$ is also a minimizer of \eqref{minG}. Therefore the regularity 
 Theorem~\ref{th:regolarita} applies.
\end{remark}

\subsection{Existence of configurations with non trivial dislocations}
We start by fixing a  profile $h$ and considering a  minimizer $( \sigma, H_{h, \s})$ of the corresponding energy, i.e.,   $(h, \sigma, H_{h,\s})\in X_{e_0}$ and 
\begin{multline}\label{minel}
\int_{\Om_h}W((H_{h, \s})_{sym})\, dz+ N(\s)\\
=\min\biggl\{\int_{\Om_h}W(H_{sym})\, dz+ N(\tau):\, (\tau, H)\text{ s.t. } (h,\tau, H)\in X_{e_0}\biggr\}\,.
\end{multline}
 We want to show that if $e_0$ is large enough and $h$ is nearly flat, then any minimal configuration $(\s, H_{h,\s})$ has a nontrivial dislocation measure $\s$ and its total variation blows up as $|e_0|\to \infty$.
\begin{proposition}\label{prop:ex1}
Assume that $\mathcal{B}^o$ contains a vector $\bu$ such that $ \bu\cdot \mathbf{e}_1\neq 0$. For every $d>2r_0b$, $M\geq0$, and  $\alpha\in (0,1)$ there exist $\overline e>0$ and $\de>0$ such that if 
$|e_0|>\overline e$, $h\in AP(0,\ell)$ and $\|h-d/\ell\|_{C^{1,\alpha}_\#(0,\ell)}\leq\de$, then for every  minimizer $( \s,H_{h,\s})$ of \eqref{minel}, the dislocation measure $\s$ is nontrivial and the total variation $|\sigma|(\Om_h)>M$. 
\end{proposition}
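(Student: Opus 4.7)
The plan is a comparison argument. I construct a test dislocation measure $\sigma_m := m\bu\delta^\#_{z_0}$ with $m\in\mathbb{N}$ and $z_0 = (\ell/2, r_0)$ whose total energy drops below that of $\sigma\equiv 0$ by at least $c\,e_0^2$ for a suitable $m$ of order $|e_0|$, while showing that any admissible $\sigma$ with $|\sigma|(\Omega_h)\le M$ can lower the energy by at most $C|e_0|M$. For $|e_0|$ large, these bounds are incompatible with minimality, so any minimizer must satisfy $|\bar\sigma|(\Omega_h)>M$ (in particular $\bar\sigma\ne 0$). The starting point is a quadratic expansion: by linearity of \eqref{ELHhs} in $(e_0,\sigma)$, the canonical splitting $H_{h,\sigma} = e_0 Du_h + K_{h,\sigma}$ gives
\[
\int_{\Omega_h} W((H_{h,\sigma})_{sym})\,dz = e_0^2\,A(h) + 2e_0\,L(h,\sigma) + Q(h,\sigma),
\]
with $L(h,\sigma) := \int_{\Omega_h}\mathbb{C}E(u_h):(K_{h,\sigma})_{sym}\,dz$ linear in $\sigma$ and $Q(h,\sigma):=\int_{\Omega_h}W((K_{h,\sigma})_{sym})\,dz\ge 0$ quadratic; the perimeter terms in $F$ cancel in differences because $h$ is fixed.

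At the flat profile $h\equiv d/\ell$ one has $u_h = v_0$ and $\mathbb{C}E(v_0) = \tfrac{4\mu(\mu+\lambda)}{2\mu+\lambda}\mathbf{e}_1\otimes\mathbf{e}_1$ by a direct computation from \eqref{cxi}--\eqref{v0}. For $\sigma = \bu\delta^\#_{z_0}$, using $\ell$-periodicity in $x$, the condition $K_{h,\sigma}[\mathbf{e}_1]=0$ on $\{y=0\}$, and the scalar identity $(\curl K_{h,\sigma})_1 = (\bu\cdot\mathbf{e}_1)\rho_{r_0}^\#(\cdot-z_0)$, integrating first in $x$ and then in $y$ yields
\[
\int_{\Omega_{d/\ell}} (K_{d/\ell,\bu\delta^\#_{z_0}})_{11}\,dz = -(\bu\cdot\mathbf{e}_1)\Bigl(\frac{d}{\ell}-r_0\Bigr),
\]
so that $L(d/\ell,\bu\delta^\#_{z_0}) = -c_*(\bu\cdot\mathbf{e}_1)$ with $c_* := \tfrac{4\mu(\mu+\lambda)}{2\mu+\lambda}(d/\ell-r_0)>0$ (using $d>2r_0\ell$).

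A perturbation argument, transporting the systems \eqref{eleq} and \eqref{Kcanonical} to the reference domain $\Omega_{d/\ell}$ via a $C^{1,\alpha}$-diffeomorphism depending on $h$ and using continuous dependence of the solutions (cf.\ \cite[Lemma 6.10]{FFLM2}, as in the proof of Proposition~\ref{prop:nicebottom}, together with Lemma~\ref{lm:ellcan}), upgrades this identity to uniform estimates $e_0 L(h,\bu\delta^\#_{z_0}) \le -\tfrac{c_*}{2}|e_0||\bu\cdot\mathbf{e}_1|$ and $Q(h,\bu\delta^\#_{z_0}) \le Q_*$, valid for all $h$ with $\|h-d/\ell\|_{C^{1,\alpha}_\#}\le\delta$, after replacing $\bu$ by $-\bu\in\mathcal{B}$ if needed so that $e_0(\bu\cdot\mathbf{e}_1)>0$ (this keeps $\|\bu\|^2_{\mathcal{B}^o}$ unchanged). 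Linearity of the map $\sigma\mapsto K_{h,\sigma}$, the quadratic dependence of $W$, and $N(m\bu\delta^\#_{z_0}) = c_o m\|\bu\|^2_{\mathcal{B}^o}$ then give, for $\sigma_m = m\bu\delta^\#_{z_0}$,
\[
F(h,\sigma_m,H_{h,\sigma_m}) + N(\sigma_m) - F(h,0,H_{h,0}) \le -c_*|e_0||\bu\cdot\mathbf{e}_1|\,m + Q_* m^2 + c_o\|\bu\|^2_{\mathcal{B}^o}\,m.
\]
Choosing $m = \lfloor c_*|e_0||\bu\cdot\mathbf{e}_1|/(2Q_*)\rfloor$ makes the right-hand side $\le -c_{**} e_0^2$ for $|e_0|$ sufficiently large, where $c_{**}>0$ is independent of $h$.

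To conclude, if a minimizer $\bar\sigma = \sum_i\bu_i\delta^\#_{z_i}$ had $|\bar\sigma|(\Omega_h)=\sum_i|\bu_i|\le M$, then the uniform single-dislocation bound $|L(h,\bu_i\delta^\#_{z_i})|\le C_1|\bu_i|$ (from the flat computation and the same perturbation step) combined with linearity of $L(h,\cdot)$ gives $|L(h,\bar\sigma)|\le C_1 M$; since $Q,N\ge 0$,
\[
F(h,\bar\sigma,H_{h,\bar\sigma}) + N(\bar\sigma) - F(h,0,H_{h,0}) \ge -2C_1|e_0|M.
\]
Minimality forces this to be at most $-c_{**}e_0^2$, yielding $|e_0|\le 2C_1 M/c_{**}$; choosing $\bar e$ strictly larger than this bound and than the thresholds used in the test-configuration step contradicts $|e_0|>\bar e$. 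The main technical obstacle is the uniform-in-$h$ perturbation step: one must show that the pulled-back solutions of \eqref{eleq} and \eqref{Kcanonical} depend continuously on $h$ in a topology strong enough to preserve both the sign and magnitude of $L(h,\bu\delta^\#_{z_0})$ and the upper bound on $Q(h,\bu\delta^\#_{z_0})$, uniformly over admissible positions $z_0$ with $B_{r_0}(z_0)\subset\Omega_h^\#$ and over $h$ in the $C^{1,\alpha}$-ball.
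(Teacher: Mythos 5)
Your proposal is correct in its essentials and rests on the same key computation as the paper's proof, but it reaches the contradiction by a genuinely different and more roundabout route. The shared core is that, for $\bu$ oriented so that $e_0(\bu\cdot\mathbf{e}_1)>0$, the cross term between $e_0E(v_0)$ and a strain carrying one extra dislocation equals $\frac{4\mu(\mu+\lambda)}{2\mu+\lambda}\,e_0\int K_{11}\,dz<0$, linear in $e_0$, while every other term is bounded independently of $e_0$; your flat-profile identity $\int (K_{d/\ell,\bu\de^\#_{z_0}})_{11}\,dz=-(\bu\cdot\mathbf{e}_1)(d/\ell-r_0)$ is the same quantity in disguise. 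Where you diverge: the paper neither compares against $\s=0$ nor builds an $m\sim|e_0|$ multi-dislocation competitor. It takes the assumed minimizer $\s$ with $|\s|(\Om_h)\le M$, adds \emph{one} dislocation $\bu\de^\#_{z_0}$, and uses as competitor strain $H_{h,\s}+K$, where $K$ is the \emph{explicit} field with first column $k_l(x,y)=-(\bu\cdot\mathbf{e}_l)\int_0^y\ro_{r_0}(x-x_0,t-y_0)\,dt$; the energy change is $\int W(K_{sym})+\int\C(K_{h,\s})_{sym}{:}K_{sym}+e_0\int\C E(u_h){:}K_{sym}+N(\s+\bu\de^\#_{z_0})-N(\s)$, where the second term is $\le C(M)$ by Lemma~\ref{lm:ellcan}, the third is $\le e_0\bigl(c\int k_1+C\e\bigr)\le -c|e_0|$ after replacing $u_h$ by $v_0$ via \cite[Lemma~6.10]{FFLM2}, and the sum is negative for $|e_0|$ large, contradicting minimality in \eqref{minel} at once. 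This buys two simplifications over your version. First, it eliminates the step you rightly flag as the main obstacle: since $K$ is explicit and independent of $h$ (only the integration domain changes), no uniform-in-$h$ continuity of the canonical solution of \eqref{Kcanonical}, and no sign-preserving perturbation of $L(h,\bu\de^\#_{z_0})$, are needed. Second, one added dislocation already wins, because the gain $-c|e_0|$ is set against an $e_0$-independent cost $C(M)$; the quadratic gain $-c_{**}e_0^2$ from $m\sim|e_0|$ dislocations and the two-sided comparison through $\s=0$ are superfluous. Your argument can be completed, but the clean way to do so is to replace the canonical $K_{h,\s_m}$ by the admissible field $e_0Du_h+K_{h,\s}+mK$ in the test-configuration upper bound (legitimate since $H_{h,\s_m}$ minimizes the elastic energy), which collapses your deferred perturbation step to exactly what the paper already proves.
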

\begin{proof} We only treat the case $e_0>0$.
 Assume that $|\sigma|(\Om_h)\leq M$. We want to show that if $e_0$ is large enough, this leads to a contradiction.
 Fix $z_0=(x_0, y_0)\in \Om_h$ and consider the dislocation $\overline\s:=\s+ \bu\de_{z_0}^\#\in \md(\Om_h)$ for some $\bu\in \mathcal{B}$ such that  $\bu\cdot\mathbf{e}_1>0$. Such a vector exists by our assumption on $\mathcal{B}^o$.

We consider the canonical decomposition of  $H_{h,\s}$, i.e,   $H_{h,\s}=e_0Du_{h}+K_{h,\s}$, where $K_{h,\s}$ is the unique $\ell$-periodic solution to the system
$$
\begin{cases}
\curl K_{h,\s}=\s*\ro_{r_0} & \text{in $\Om_h$,}\\
\Div\, \C(K_{h,\s})_{sym}=0& \text{in $\Om_h$,}\\
\C(K_{h,\s})_{sym}[\nu]=0 &\text{on $\Gamma_h$,}\\
K_{h,\s}[\, \mathbf{e}_1\,]=0 & \text{on $\{y=0\}$,}
\end{cases}
$$
and $u_h$ is the elastic equilibrium in $\Om_h$ satisfying $u_h(x,0)=(x,0)$. Observe that by \cite[Lemma~6.10]{FFLM2} for every $\e>0$  there exists $\de>0$ such that 
\beq\label{ellreg}
\|h-d/\ell\|_{C^{1,\alpha}_\#(0,\ell)}\leq\de\Longrightarrow \|u_h-v_0\|_{C^{1,\alpha}_\#(\Om_h)}\leq \e\,,
\eeq
where $v_0$ is defined in \eqref{v0}.
 Write $\bu=(b_1, b_2)$ and consider the strain field $e_0Du_{h}+K_{h,\s}+K$,
 where 
 $$
K:=
\left(
\begin{array}{cc}
k_1 & 0\\
k_2 & 0
\end{array}
\right)\,,
\qquad\text{with } k_i(x,y):=-b_i\int_0^y\ro_{r_0}(x-x_0, t-y_0)\, dt\,, \text{ for $i=1,2$.}
$$
Note that by construction $\curl K=\bu\,\delta_{z_0}^\#*\ro_{r_0}$ and $K[\, \mathbf{e}_1\, ]=0$ on $\{y=0\}$.

 A simple calculation shows that
 \begin{align*}
 & \int_{\Om_h}W((H_{h,\s})_{sym}+K_{sym})\, dz-
  \int_{\Om_h}W((H_{h,\s})_{sym})\, dz\\
  & =\int_{\Om_h}W(K_{sym})\, dz+\int_{\Om_h}\C(H_{h,\s})_{sym}:K_{sym}\, dz\\
  &=\int_{\Om_h}W(K_{sym})\, dz+\int_{\Om_h}\C(K_{h,\s})_{sym}:K_{sym}\, dz+e_0\int_{\Om_h}\C E(u_{h}):K_{sym}\, dz\\
  &=\int_{\Om_h}W(K_{sym})\, dz+\int_{\Om_h}\C(K_{h,\s})_{sym}:K_{sym}\, dz+e_0\int_{\Om_h}\C E(v_0):K_{sym}\, dz\\
  &\quad +
  e_0\int_{\Om_h}\bigl(\C E(u_{h})-\C E(v_0)\bigr):K_{sym}\, dz.
  \end{align*}
   Observe that  $\|\sigma*\ro_{r_0}\|_{L^2(\Om_h; \R^2)}\leq C$, where $C=C(M)$ is a constant depending only on $M$. Therefore,  Lemma~\ref{lm:ellcan} implies that 
  $$
  \|K_{h,\s}\|_{L^2(\Om_h; \mathbb{M}^{2\times2})}\leq C\|\sigma*\ro_{r_0}\|_{L^2(\Om_h; \R^2)}\leq C(M)\,.
  $$
  Moreover, we clearly have 
  $$
 N(\overline \s)-N(\s)\leq C\,, 
  $$
  for a possibly different constant depending on  $\mathbf{b}$. Thus, since $\bu\cdot\mathbf{e}_1>0$ we have
  $$
  \int_{\Om_h}\C E(v_0):K_{sym}\, dz=\frac{4\mu(\mu+\lambda)}{2\mu+\lambda}\int_{\Om_h}k_1\, dz<0\,.
  $$
  Hence, also by \eqref{ellreg}, we conclude that there exist two positive constants $c_1$ and $c_2$ (depending only on $d$, $M$, $\mathbf{b}$,  and the Lam\'e coefficients) such that 
  \begin{align*}
  \int_{\Om_h}W((H_{h,\s})_{sym}+K_{sym})\, dz)+ N(\bar \s)-& 
  \int_{\Om_h}W((H_{h,\s})_{sym})\, dz- N(\s)\\
 & \leq 
  c_1+e_0
  \frac{4\mu(\mu+\lambda)}{2\mu+\lambda}
 \int_{\Om_h}k_1\, dz+c_2e_0\|u_h\!-\!v_0\|_{C^{1,\alpha}_\#(\Om_h)}
  \\
  &<c_1+e_0\biggl(\frac{4\mu(\mu+\lambda)}{2\mu+\lambda}\int_{\Om_h}k_1\, dz+c_2\e\biggr)<0
  \end{align*}
  provided that  $\e$ is sufficiently small and $e_0$ is sufficiently large. This contradicts   the minimality of $(\s, H_{h,\s})$.
\end{proof}

\begin{corollary}\label{cor:orient}
 For every $d>0$, $M>0$, and  $\alpha\in (0,1)$ there exist $\overline e>0$ and $\de>0$ such that if  $|e_0|>\overline e$, $h\in AP(0,\ell)$ and $\|h-d/\ell\|_{C^{1,\alpha}_\#(0,\ell)}\leq\de$ and $\sigma=\sum_{i=1}^k\bu_i\delta^\#_{z_i}\in
 \mathcal{M}_{dis}(\Om_h)$ with $|\sigma|(\Om_h)\leq M$, $e_0(\bu_j\cdot \mathbf{e}_1)<0$ for  $j\in J\subset \{1,\dots, k\}$, $J\neq \emptyset$, then 
 $$
 \int_{\Om_h}W((H_{h,\s})_{sym})\, dz> \int_{\Om_h}W((H_{h, \tilde\sigma})_{sym})\, dz\,,
 $$ 
where 
$$
\tilde \sigma=\sum_{i\not \in J} \bu_i\delta^\#_{z_i}-\sum_{i \in J} \bu_i\delta^\#_{z_i}\,.
$$
\end{corollary}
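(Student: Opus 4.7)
\medskip
\noindent\textbf{Plan.} Write $D:=\int_{\Om_h}W((H_{h,\s})_{sym})\,dz-\int_{\Om_h}W((H_{h,\tilde\s})_{sym})\,dz$ and show $D>0$ once $|e_0|$ is large and $\delta$ is small. The starting point is the polarization identity $W(A)-W(B)=\tfrac{1}{2}\C(A-B):(A+B)$ (valid for symmetric $A,B$, by the self-adjointness of $\C$ with respect to $:$), combined with the canonical decomposition $H_{h,\tau}=e_0Du_h+K_{h,\tau}$ and the linearity of the system \eqref{Kcanonical} in $\tau$. Setting $\hat\s:=\sum_{j\in J}\bu_j\delta^\#_{z_j}$, we have $\s-\tilde\s=2\hat\s$, and hence $K_{h,\s}-K_{h,\tilde\s}=2K_{h,\hat\s}$ (the admissibility condition on the centers is inherited from $\s$). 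This gives
$$
D = 2e_0\int_{\Om_h}\C(K_{h,\hat\s})_{sym}:E(u_h)\,dz + 2\int_{\Om_h}\C(K_{h,\hat\s})_{sym}:(K_{h,\tilde\s}+K_{h,\hat\s})_{sym}\,dz.
$$
By Lemma~\ref{lm:ellcan} (applicable since $\|h'\|_\infty\le\delta$ and $h\ge d/\ell-\delta$), the norms $\|K_{h,\hat\s}\|_{L^2}$ and $\|K_{h,\tilde\s}\|_{L^2}$ are bounded by a constant $C(M)$, so the second integral is uniformly bounded in $e_0$.

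The heart of the argument is to extract the leading $O(|e_0|)$ behavior of the first integral. I would first replace $E(u_h)$ by $E(v_0)$ via \eqref{ellreg}: for any $\e>0$ there is $\delta$ small so that $\|u_h-v_0\|_{C^{1,\alpha}_\#}\le\e$, at the cost of $O(|e_0|\e)$. A direct inspection of \eqref{cxi} gives that $\C E(v_0)$ has only one nonzero entry, namely $(\C E(v_0))_{11}=\frac{4\mu(\mu+\lambda)}{2\mu+\lambda}$, reducing the leading term to $2e_0\cdot\frac{4\mu(\mu+\lambda)}{2\mu+\lambda}\int_{\Om_h}(K_{h,\hat\s})_{11}\,dz+O(|e_0|\e)$. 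Mimicking the splitting in the proof of Proposition~\ref{prop:ex1}, I would write $K_{h,\hat\s}=K^*+Dv$, where $K^*$ has vanishing second column and
$$
K^*_{l1}(x,y):=-\sum_{j\in J}(\bu_j\cdot\mathbf{e}_l)\int_0^y\ro^\#_{r_0}(x-x_j,t-y_j)\,dt,\qquad l=1,2,
$$
and $v$ solves a Lam\'e system with Neumann datum $-\C(K^*)_{sym}[\nu]$ on $\Gamma_h$ and Dirichlet datum $v=0$ on $\{y=0\}$ (available since $K^*[\mathbf{e}_1]=0$ there). A Fubini calculation, using $B_{r_0}(z_j)\subset \Om_h^\#$ and $\|h-d/\ell\|_\infty\le\delta$, yields
$$
\int_{\Om_h}K^*_{11}\,dz=-\sum_{j\in J}(\bu_j\cdot\mathbf{e}_1)(d/\ell-y_j)+O(\delta),
$$
while the contribution $\int_{\Om_h}\partial_x v_1\,dz$ is handled via integration by parts: the bottom boundary term vanishes by the Dirichlet condition, the lateral terms cancel by periodicity, and the remaining integral over $\Gamma_h$ carries the factor $|\nu_1|\le\|h'\|_\infty\le C\delta$ which, together with $\|v\|_{H^1(\Om_h)}\le C(M)$, makes this term $O(\delta)$ as well.

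Collecting the estimates, using $e_0(\bu_j\cdot\mathbf{e}_1)<0$ together with $d/\ell-y_j\ge r_0-\delta\ge r_0/2$ (a consequence of $B_{r_0}(z_j)\subset \Om_h^\#$ and the $L^\infty$-proximity of $h$ to $d/\ell$ for $\delta<r_0/2$), the leading contribution satisfies
$$
-2e_0\cdot\frac{4\mu(\mu+\lambda)}{2\mu+\lambda}\sum_{j\in J}(\bu_j\cdot\mathbf{e}_1)(d/\ell-y_j)\ \ge\ c_0|e_0|
$$
for some $c_0>0$ depending only on $M$, the Lam\'e constants, $r_0$ and the positive lower bound on $|\bu_j\cdot\mathbf{e}_1|$ among the finitely many admissible Burgers vectors with norm $\le M$ and nonzero first component. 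The resulting estimate $D\ge c_0|e_0|-C(M)(|e_0|(\delta+\e)+1)$ becomes strictly positive by first choosing $\delta,\e$ small so that $C(\delta+\e)<c_0/2$, and then taking $|e_0|>\bar e$ large enough to absorb the constant term. The main technical subtlety I expect to wrestle with is the genuine $O(\delta)$ (not merely bounded) control of the Dirichlet correction $\int\partial_x v_1\,dz$: since this term is multiplied by $e_0$ in the final estimate, anything larger than $o(1)$ would not suffice, and this is exactly where the near-flatness hypothesis on $h$ must be exploited through the smallness of $\nu_1$ on $\Gamma_h$.
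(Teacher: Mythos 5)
Your argument is correct, but it takes a genuinely different route from the paper's. The paper does not compute the energy difference exactly: it flips one Burgers vector at a time, sets $\bar\sigma:=\sigma-2\bu_j\delta^\#_{z_j}$, and exploits the \emph{variational characterization} of $H_{h,\bar\sigma}$ by comparing it with the explicit competitor $H_{h,\sigma}+K$, where $K$ is the matrix with vanishing second column built from $\int_0^y\ro_{r_0}$; the one-sided inequality $\int W((H_{h,\bar\sigma})_{sym})\le \int W((H_{h,\sigma}+K)_{sym})$ then reduces everything to the expansion already carried out in Proposition~\ref{prop:ex1}, where the leading term $e_0\int\C E(v_0):K_{sym}=e_0\tfrac{4\mu(\mu+\lambda)}{2\mu+\lambda}\int k_1\,dz$ is a clean Fubini computation on the \emph{explicit} field $K$. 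You instead compute the exact difference $D$ via polarization and the linearity of $\tau\mapsto K_{h,\tau}$ in \eqref{Kcanonical}, which obliges you to evaluate $\int_{\Om_h}(K_{h,\hat\s})_{11}\,dz$ for the true equilibrium field and hence to control the correction $Dv$ in the splitting $K_{h,\hat\s}=K^*+Dv$. Your treatment of that term is sound and is precisely the mechanism the paper itself uses in the proof of Proposition~\ref{prop:nicebottom} (divergence theorem, $v=0$ on $\{y=0\}$, cancellation of the lateral terms by periodicity, and $|\nu_1|\le\|h'\|_\infty\le\de$ on $\Gamma_h$ together with uniform trace and Korn bounds giving a genuine $O(\de)$), so the $O(|e_0|\de)$ error is absorbed as you claim. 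What the paper's competitor trick buys is the complete avoidance of this boundary-integral estimate; what your route buys is the exact leading-order asymptotics $-2e_0\tfrac{4\mu(\mu+\lambda)}{2\mu+\lambda}\sum_{j\in J}(\bu_j\cdot\mathbf{e}_1)(d/\ell-y_j)$ of the energy gain, and a single computation handling all $j\in J$ simultaneously. One further point in your favor: you make explicit the uniformity issue (a positive lower bound for $|\bu_j\cdot\mathbf{e}_1|$ over the finitely many lattice vectors of $\mathcal{B}$ with $|\bu_j|\le M$ and nonzero first component), which is needed for $\bar e$ to be chosen independently of $\sigma$ and which the paper leaves implicit.
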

\begin{proof}
It is enough to show  that the energy strictly decreases whenever we replace $\bu_j$, with $j\in J$, by $-\bu_j$. Indeed, set $\bar \sigma:=\sigma- 2\bu_j \delta^\#_{z_j}$. Arguing exactly as in Proposition~\ref{prop:ex1}, we have that for $|e_0|$ sufficiently large 
\begin{align*}
\int_{\Om_h} W((H_{h, \bar \sigma})_{sym})\, dz & -\int_{\Om_h}W( (H_{h, \sigma})_{sym})\, dz \\
& \leq\int_{\Om_h} W((H_{h,  \sigma}+K)_{sym})\, dz-\int_{\Om_h}W( (H_{h, \sigma})_{sym})\, dz<0\,,
\end{align*}
where 
$$
K:=
\left(
\begin{array}{cc}
k_1 & 0\\
k_2 & 0
\end{array}
\right)\,,
\qquad\text{with } k_i(x,y):=-2(\bu_j\cdot \mathbf{e}_i)\int_0^y\ro_{r_0}(x-x_0, t-y_0)\, dt\,, \text{ for $i=1,2$.}
$$
\end{proof}

As an application of Proposition~\ref{prop:ex1} and of the theory developed in \cite{FuMo}, we show that for suitable values of $e_0$ and $\gamma$
the global minimizers display a nontrivial dislocation part. 
\begin{theorem}[Minimizers with dislocations]\label{th:ul}
Assume that $\mathcal{B}^o$ contains a vector $\bu$ such that $ \bu\cdot \mathbf{e}_1\neq 0$, fix $d>2r_0 \ell$ and let $|e_0|>\bar e$, where $\bar e$ is as in  Proposition~\ref{prop:ex1}.  Then there exists $\bar \gamma$ such that if $\gamma> \bar \gamma$, then    any global minimizer $(\bar{h}, \bar{\sigma}, \bar{H})$
of the problem \eqref{minN}  has nontrivial dislocations, i.e., $\bar{\sigma}\neq 0$.
\end{theorem}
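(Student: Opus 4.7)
The plan is to argue by contradiction. Suppose the conclusion fails: then there exist $\gamma_n\to+\infty$ and corresponding global minimizers $(h_n,\sigma_n,H_n)\in X_{e_0}$ of \eqref{minN} (with $\gamma_n$ in place of $\gamma$) for which $\sigma_n=0$ for all $n$. Since $(h_n,0,H_n)$ is a global minimizer of \eqref{minN}, in particular $(0,H_n)$ is a minimizer of the fixed-profile problem \eqref{minel} at $h=h_n$. Hence it suffices to show $h_n\to d/\ell$ in $C^{1,\alpha}_\#([0,\ell])$ for some $\alpha\in(0,1/2)$: Proposition~\ref{prop:ex1} applied with $M=0$ then forces any minimizer of \eqref{minel} at such a profile to have a nontrivial dislocation measure, which is the desired contradiction.

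The $C^{1,\alpha}$-convergence would be established by adapting the four-step argument in the proof of Theorem~\ref{prop:exnuova}. Comparing with the flat competitor $(d/\ell,0,e_0Dv_0)$ shows that the rescaled quantities $(F+N)(h_n,0,H_n)/\gamma_n$ remain uniformly bounded, hence the elastic energies stay $O(1)$ while $\H^1(\Gamma_{h_n})+2\H^1(\Sigma_{h_n})\to\ell$. The compactness and lower-semicontinuity results of \cite[Proposition~2.2, Lemma~2.5]{FFLM} and \cite[Lemma~2.1]{BC}, combined with Go\l\c{a}b's theorem as in Step~1 of Theorem~\ref{prop:exnuova}, then yield $h_n\to d/\ell$ uniformly. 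The analogue of Theorem~\ref{sollevamento2} for \eqref{minN} makes Lemma~\ref{pallaint} available to give a uniform interior-ball condition for $\Om_{h_n}^\#$; this, together with uniform convergence, rules out vertical cuts in $\Sigma_{h_n}$ and delivers $C^1$-convergence as in Step~3 of Theorem~\ref{prop:exnuova}. Finally, an analogue of the two-sided penalization of Theorem~\ref{sollevamento}, combined with the elliptic decay of Proposition~\ref{2decays} and the Campanato-type bound in Step~4 of Theorem~\ref{prop:exnuova}, upgrades the convergence to $C^{1,\alpha}$ for every $\alpha\in(0,1/2)$.

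The main obstacle lies in verifying that the penalization statements (Theorems~\ref{sollevamento} and \ref{sollevamento2}) transfer to the setting of \eqref{minN}. Although the contradiction hypothesis $\sigma_n=0$ simplifies the analysis of the unperturbed minimizers, the penalized competitors constructed in the truncation/extension arguments may still carry nontrivial dislocations, so one must re-run both sub-cases of Step~2 of Theorem~\ref{sollevamento} (distinguishing whether some ball $B_{r_0}(z_i)$ touches $\Gamma_{h_n}$ at a point of maximum height). This is, however, a direct transcription: the key ingredients---the $L^\infty$-bound \eqref{bound h}, the interior-ball geometry, and Lemma~\ref{lm:claim}---are insensitive to the passage from the fixed-Burgers-vector problem \eqref{minG} to the free-Burgers-vector problem \eqref{minN}, the total nucleation energy controlling both the number and magnitudes of the Burgers vectors that may appear in the competitors.
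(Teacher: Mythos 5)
Your overall architecture matches the paper's: argue by contradiction along a sequence $\gamma_n\to+\infty$ of dislocation-free global minimizers, show the profiles $h_n$ approach the flat one in a strong enough sense, and then apply Proposition~\ref{prop:ex1} at the fixed profile $h_n$. Your endgame is sound: since $(h_n,0,H_n)$ globally minimizes \eqref{minN}, the pair $(0,H_n)$ does minimize the fixed-profile problem \eqref{minel} at $h=h_n$, so $C^{1,\alpha}$-smallness of $h_n-d/\ell$ contradicts Proposition~\ref{prop:ex1}. Where you diverge is the middle step, and there the paper is both lighter and cleaner: it does not prove $C^{1,\alpha}$-convergence at all. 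It observes that $\sigma_n=0$ makes $(h_n,u_{h_n})$ a global minimizer of the \emph{classical dislocation-free} functional $G_n$, rewrites $G_n$ as the standard functional with rescaled Lam\'e coefficients $\mu/\gamma_n,\lambda/\gamma_n\to0$, and invokes the isolated local minimality of the flat configuration from \cite[Theorem~2.9]{FuMo}: once $\sup_{[0,\ell]}|h_n-d/\ell|\le\delta$ (which your Step-1 compactness/Go\l\c{a}b argument does deliver), the scaling identity
$G_n=\tfrac{\gamma_{n_0}}{\gamma_n}G_{n_0}+\bigl(1-\tfrac{\gamma_{n_0}}{\gamma_n}\bigr)\bigl(\H^1(\Gamma_h)+2\H^1(\Sigma_h)\bigr)$
forces $h_n\equiv d/\ell$ \emph{exactly} for $n$ large. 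Exact flatness trivially verifies the hypothesis of Proposition~\ref{prop:ex1}, and no penalization, interior-ball, or decay estimates are needed.

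The part of your plan I would flag as a genuine gap is the assertion that transferring the penalization machinery to \eqref{minN} is ``a direct transcription.'' Two concrete obstructions. First, the two-sided penalization constants of Theorem~\ref{sollevamento} are chosen via \eqref{eq:beta} so that $F(\bar h,\bar\s,H_{\bar h,\bar\s})<\frac{\beta}{4}\|\bar h-d/\ell\|_{L^2}^2$; they degenerate precisely as $\bar h\to d/\ell$, so applying that theorem to each $(h_n,0,H_n)$ gives no uniformity in $n$. The uniform-in-$n$ penalization actually used in Step~2 of Theorem~\ref{prop:exnuova} is obtained by a separate compactness argument whose final contradiction rests on Proposition~\ref{prop:nicebottom}, which requires $e_0(\bu_j\cdot\mathbf{e}_1)>0$ for \emph{all} Burgers vectors of the competitor. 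In \eqref{minN} the competitors' Burgers vectors range freely over $\mathcal{B}$, so this hypothesis is unavailable, and the sub-case in which a dislocation ball of a penalized competitor touches the profile at maximal height is not handled by transcription (one would first have to rule out wrongly oriented dislocations, e.g.\ via Corollary~\ref{cor:orient}, and control the number of nucleated centers through the $N$-term). The simplest repair is the observation you circle around but never make: because $\sigma_n=0$, you may restrict every comparison to $X(e_0;\emptyset)$ and run the entire convergence analysis in the dislocation-free framework of \cite{FuMo}, where uniform penalization and $C^{1,\alpha}$-convergence to the flat profile --- indeed exact flatness, as the paper exploits --- are already available.
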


\begin{proof}
Assume without loss of generality that $e_0>\bar e$ and assume by contradiction that there exists a sequence $\gamma_n\to+\infty$ 
and a corresponding sequence $(h_n, \sigma_n, H_n)\in X_{e_{0}}$ of global minimizers for \eqref{minN}, with $\gamma$ replaced by $\gamma_n$, such that  $\sigma_n=0$. In particular $H_n=e_{0,n}D u_{h_n}$, where $u_{h_n}$ is the elastic equilibrium  in $\Om_{h_n}$ (see \eqref{eleq}).  It follows that $(h_n, u_{h_n})$ is a global minimizer
of 
$$
\min\big\{G_n(h,u):\, (h, 0, D u)\in X_{1},\, |\Om_h|=d\big\}\,,
$$
where
$$
G_n(h,u):=\frac{1}{\gamma_n}\int_{\Om_{h}}W(E(u))\, dz+\H^1(\Gamma_h)+2\H^1(\Sigma_h)\,.
$$
Arguing exactly as in Step 1 of the proof of Theorem~\ref{prop:exnuova} we can show that
 $\sup_{[0,\ell)}|h_n-d/\ell|\to 0$. We claim that 
 \beq\label{claim}
 h_n=d/\ell \qquad \text{for $n$ large enough.}
 \eeq
To this aim, we argue by contradiction assuming $\sup_{x\in[0,\ell]}|h_n(x)-d/\ell|>0$ for a (not relabeled) subsequence,  
  Note  that we may rewrite the functional $G_n$
 as
 $$
G_n(h,u):=\int_{\Om_{h}}W_n(E(u))\, dz+\H^1(\Gamma_h)+2\H^1(\Sigma_h)\,,
$$
 where $W_n$ is defined as in \eqref{canonico}, with $\mu$ and $\lambda$ replaced by $\mu_n:=\mu \frac{1}{\gamma_n}$ and 
 $\lambda_n:=\lambda \frac{1}{\gamma_n}$, respectively.
Since $\mu_n\to 0$ and $\lambda_n\to 0$,  we may apply the local minimality result in \cite[Theorem~2.9]{FuMo}, to conclude that there exist
$n_0$ and $\de>0$ such that
\beq\label{locmin1}
G_{n_0}(d/\ell,u_{d/\ell})<G_{n_0}(k,u_k)
\eeq
 whenever $k\in AP(0,\ell)$, $|\Om_k|=d$, and $0<\sup_{x\in[0,\ell]}|k(x)-d/\ell|\leq \de$.
 
 Take $n>n_0$ so large that 
 $$
 0<\sup_{x\in[0,\ell]}|h_n(x)-d/\ell|\leq \de\quad\text{and}\quad 
 \frac{\gamma_{n_0}}{\gamma_n}<1\,.
 $$ 
 From the inequalities above and \eqref{locmin1}, we get
 \begin{align*}
 G_n(d/\ell, u_{d/\ell})& = \frac{\gamma_{n_0}}{\gamma_n}G_{n_0}(d/\ell, u_{d/\ell})
 +\biggl(1- \frac{\gamma_{n_0}}{\gamma_n}\biggr)\H^1(\Gamma_{d/\ell})\\
& < \frac{\gamma_{n_0}}{\gamma_n}G_{n_0}(h_n, u_{h_n})
 +\biggl(1- \frac{\gamma_{n_0}}{\gamma_n}\biggr)\bigl(\H^1(\Gamma_{h_n})+2\H^1(\Sigma_{h_n})\bigr)\\
 &=G_n(h_n, u_{h_n}),
 \end{align*}
 thus contradicting the minimality of $(h_n, u_{h_n})$. This proves claim \eqref{claim}.
In turn, by Proposition~\ref{prop:ex1} we deduce that  for $n$ sufficiently large $\sigma_n\neq 0$, in contrast with our initial contradiction assumption. 
\end{proof}

\section*{Acknowledgements}

The authors wish to acknowledge the Center for Nonlinear Analysis (NSF PIRE
Grant No. OISE-0967140) where part of this work was carried out. The research
of I. Fonseca was partially funded by the National Science Foundation under
Grant No. DMS-1411646 and that of G. Leoni under Grant No. DMS-1412095. The research of N. Fusco and M. Morini was partially funded by  the FiDiPro project 2100002028 of the Finnish Academy of Science. The
authors would like to thank Kaushik Dayal for useful conversations.

\end{document}